\newcommand{\commentout}[1]{}
\newtheorem{thm}{Theorem}[section]
\newtheorem{prop}[thm]{Proposition}
\newtheorem{rmk}[thm]{Remark}
\newcommand{\nwc}{\newcommand}
\nwc{\ben}{\begin{equation*}}
\nwc{\bea}{\begin{eqnarray}}
\nwc{\beq}{\begin{eqnarray}}
\nwc{\bean}{\begin{eqnarray*}}
\nwc{\beqn}{\begin{eqnarray*}}
\nwc{\beqast}{\begin{eqnarray*}}
\nwc{\eal}{\end{align}}
\nwc{\een}{\end{equation*}}
\nwc{\eea}{\end{eqnarray}}
\nwc{\eeq}{\end{eqnarray}}
\nwc{\eean}{\end{eqnarray*}}
\nwc{\eeqn}{\end{eqnarray*}}
\nwc{\eeqast}{\end{eqnarray*}}
\newcommand{\lt}{\left}
\newcommand{\rt}{\right}
\nwc{\bR}{\mb R}
\nwc{\bH}{{\mb H}}
\nwc{\bxp}{{{\mathbf x}}}
\nwc{\bap}{{{\mathbf y}}}
\nwc{\bPhi}{\mathbf{\Phi}}
\nwc{\bPsi}{\mathbf{\Psi}}
\nwc{\bh}{\mathbf h}
\nwc{\bI}{\mathbf I}
\nwc{\bP}{\mathbf P}
\nwc{\bs}{\mathbf s}
\nwc{\bd}{\mathbf{d}}
\nwc{\bX}{\mathbf X}
\nwc{\om}{\omega}
\nwc{\nwt}{\newtheorem}
\nwc{\xp}{{x^{\perp}}}
\nwc{\yp}{{y^{\perp}}}
\nwc{\ba}{{\mb a}}
\nwc{\bal}{\begin{align}}
\nwc{\vep}{\varepsilon}
\nwc{\ep}{\epsilon}
\nwc{\ept}{\epsilon}
\nwc{\vrho}{\varrho}
\nwc{\orho}{\bar\varrho}
\nwc{\ou}{\bar u}
\nwc{\vpsi}{\varpsi}
\nwc{\lamb}{\lambda}
\nwc{\Var}{{\rm Var}}
\nwc{\nn}{\nonumber}
\nwc{\mf}{\mathbf}
\nwc{\mb}{\mathbf}
\nwc{\ml}{\mathcal}
\nwc{\IA}{\mathbb{A}} 
\nwc{\bi}{\mathbf i}
\nwc{\bo}{\mathbf o}
\nwc{\IB}{\mathbb{B}}
\nwc{\IC}{\mathbb{C}} 
\nwc{\ID}{\mathbb{D}} 
\nwc{\IM}{\mathbb{M}} 
\nwc{\IP}{\mathbb{P}} 
\nwc{\II}{\mathbb{I}} 
\nwc{\IE}{\mathbb{E}} 
\nwc{\IF}{\mathbb{F}} 
\nwc{\IG}{\mathbb{G}} 
\nwc{\IN}{\mathbb{N}} 
\nwc{\IQ}{\mathbb{Q}} 
\nwc{\IR}{\mathbb{R}} 
\nwc{\IT}{\mathbb{T}} 
\nwc{\IZ}{\mathbb{Z}} 
\nwc{\cE}{{\ml E}}
\nwc{\cP}{{\ml P}}
\nwc{\cQ}{{\ml Q}}
\nwc{\cL}{{\ml L}}
\nwc{\cX}{{\ml X}}
\nwc{\cW}{{\ml W}}
\nwc{\cZ}{{\ml Z}}
\nwc{\cR}{{\ml R}}
\nwc{\cV}{{\ml V}}
\nwc{\cT}{{\ml T}}
\nwc{\crV}{{\ml L}_{(\delta,\rho)}}
\nwc{\cC}{{\ml C}}
\nwc{\cO}{{\ml O}}
\nwc{\cA}{{\ml A}}
\nwc{\cK}{{\ml K}}
\nwc{\cB}{{\ml B}}
\nwc{\cD}{{\ml D}}
\nwc{\cF}{{\ml F}}
\nwc{\cS}{{\ml S}}
\nwc{\cM}{{\ml M}}
\nwc{\cG}{{\ml G}}
\nwc{\cH}{{\ml H}}
\nwc{\bk}{{\mb k}}
\nwc{\bn}{{\mb n}}
\nwc{\cbz}{\overline{\cB}_z}
\nwc{\fR}{\Re}
\nwc{\bY}{\mathbf Y}
\nwc{\pft}{\cF^{-1}_2}
\nwc{\bU}{{\mb U}}
\nwc{\bG}{{\mb G}}
\nwc{\bg}{\mathbf{g}}
\nwc{\mbf}{\mathbf{f}}
\nwc{\mbe}{\mathbf{e}}
\nwc{\be}{\mathbf{e}}
\nwc{\Om}{\Omega}
\nwc{\ind}{\operatorname{I}}
\nwc{\mbx}{\mathbf{f}}
\nwc{\bb}{\mathbf{g}}
\nwc{\xmax}{f_{\rm max}}
\nwc{\xmin}{f_{\rm min}}
\nwc{\suppx}{\hbox{\rm supp} (\mbf)}
\nwc{\by}{\mathbf{h}}
\nwc{\bZ}{\mathbf{Z}}
\nwc{\bF}{\mathbf{F}}
\nwc{\bE}{\mathbf{E}}
\nwc{\bV}{\mathbf{V}}
\nwc{\cI}{\IZ^2_N}
\nwc{\chis}{{\chi^{\rm s}}}
\nwc{\chii}{{\chi^{\rm i}}}
\nwc{\pdfi}{{f^{\rm i}}}
\nwc{\pdfs}{{f^{\rm s}}}
\nwc{\pdfii}{{f_1^{\rm i}}}
\nwc{\pdfsi}{{f_1^{\rm s}}}
\nwc{\thetatil}{{\tilde\theta}}
\nwc{\red}{\color{red}}
\nwc{\blue}{\color{blue}}
\nwc{\prox}{\hbox{prox}}
\nwc{\sloc}{J_{\rm f}}
\nwc{\bu}{\xi}
\nwc{\bv}{\eta}
\nwc{\cU}{\mathcal{U}}
\nwc{\cN}{\mathbf{N}}
\nwc{\bN}{\mathbf{N}}
\nwc{\mbm}{\mathbf{m}}
\nwc{\bw}{\mathbf{w}}
\nwc{\im}{i}
\nwc{\bom}{\mathbf{w}}
\nwc{\bt}{\mathbf{t}}
\nwc{\z}{y}
\nwc{\cY}{\mathcal{Y}}
\nwc{\bM}{\mathbf{M}}
\nwc{\half}{{1\over 2}}
\nwc{\xnul}{x_{\rm null}}
\begin{document}

\title{
Phase Retrieval with One or Two Diffraction Patterns by Alternating Projection with the Null Initialization}

\date{Oct 24, 2015}
\author{Pengwen Chen
\thanks{
Department of Applied Mathematics, National Chung Hsing University, Taichung  402, Taiwan. Research is supported in part by the grant 103-2115-M-005-006-MY2 from Ministry of Science and Technology, Taiwan,  and US NIH grant U01-HL-114494}
\and Albert Fannjiang
\thanks{Corresponding author. Department of Mathematics, University of California, Davis, CA 95616, USA. Research is supported in part by  US National Science Foundation  grant DMS-1413373 and Simons Foundation grant 275037.}
\and Gi-Ren Liu
\thanks{Department of Mathematics, University of California, Davis, CA 95616, USA}
}

\maketitle

\begin{abstract}  
Alternating projection (AP) of various forms, including the Parallel AP (PAP), Real-constrained AP (RAP)
and the Serial AP (SAP), are proposed to solve phase retrieval with
at most two coded diffraction patterns. The proofs of geometric convergence are given
with sharp bounds on the rates of convergence in terms of a spectral gap condition.
 
To compensate for the local nature of convergence, the null initialization is proposed
for initial guess and proved to produce asymptotically accurate initialization for the case of Gaussian random measurement.  Numerical experiments  show that
the null initialization produces more accurate initial guess than the spectral initialization
and that AP converges faster to the true object than other iterative schemes for non-convex optimization such as the Wirtinger Flow.  In numerical experiments AP with the null initialization  converges globally to the true object. \end{abstract}
\begin{keywords}Phase retrieval, coded diffraction patterns, alternating projection, null initialization, geometric convergence, spectral gap
\end{keywords}
\begin{AMS}49K35, 05C70,  90C08\end{AMS}

\pagestyle{myheadings}


\section{Introduction}

With  wide-ranging applications in science and technology, phase retrieval has recently attracted a flurry of activities in
the mathematics community (see a recent review \cite{Sh} and references therein).
 Chief among these applications is
the coherent X-ray diffractive imaging of a single particle  using a coherent, high-intensity source such as synchrotrons and free-electron lasers. 

In the so-called {\em diffract-before-destruct} approach, the structural information of the sample particle  is captured by an ultra-short and ultra-bright X-ray pulse and recorded 
by a CCD camera   \cite{Chapman14, Chapman11, Hajdu}. 
To this end,  reducing the radiation exposure and damage is crucial. Due to the high frequency of the illumination field, the recorded data are the intensity of the diffracted field
whose phase needs to be recovered by mathematical and algorithmic techniques.
This gives rise to the problem of phase retrieval with non-crystalline structures. 

The earliest algorithm of phase retrieval for a non-periodic object (such as a single molecule)  is the Gerchberg-Saxton algorithm~\cite{GS72} and its variant, Error Reduction \cite{Fie82}. The basic idea is Alternating Projection (AP), going back all the way to the works of 
von Neuman, Kaczmarz and Cimmino  in the 1930s \cite{Cimmino, Kac, Neuman}. 
And these further trace the history  back to  Schwarz \cite{Schwarz}  who in 1870 used AP to solve the Dirichlet
problem on a region given as a union of regions each having a simple to solve Dirichlet problem.

For any vector $y$ let $|y|$ be the vector such that $|y|(j)=|y(j)|,\forall j$. In a nutshell, phase retrieval is to solve the equation of the form  $b=|A^* x_0|$ where
$x_0\in \cX\subseteq \IC^n$ represents the unknown object, $A^*\in \IC^{N\times n}$ the diffraction/propagation process  and $b^2\in \IR^N$ the diffraction pattern(s).  The subset $\cX$ represents all prior constraints on the object. Also, the number of data $N$ is typically greater than the number $n$ of components in $x_0$. 

Phase retrieval  can be formulated   as the following feasibility problem 
 \beq
 \label{feas}
\hbox{Find}\quad  \hat y\in  A^*\cX \cap \cY,\quad \cY:=  \{y\in \IC^N: |y|=b\}.
 \eeq
From $\hat y$  the object is estimated via pseudo-inverse 
 \beq
 \label{feas2}
 \hat x=(A^*)^\dagger \hat y. 
 \eeq
 Let $P_1$ be the projection onto $A^*\cX$ and $P_2$ the projection onto $\cY$ defined as 
 \[
 P_2 z=b \odot {z\over |z|},\quad z\in \IC^N
 \]
 where $\odot$ denotes the Hadamard product and $z/|z| $ the componentwise division.
  Where $z$ vanishes, $z/|z|$ is chosen to be 1 by convention.  
 Then  AP is simply the iteration of the composite map  
 \beq\label{fap}
 P_1P_2 y
 \eeq
  starting with an initial guess
 $y^{(1)}=A^* x^{(1)}, x^{(1)}\in \cX$. 
  
 The main structural  difference between AP in the classical setting \cite{Cimmino, Kac, Neuman} 
 and the current setting is the {\em non-convexity} of the set $\cY$, rendering the latter much more difficult to analyze. Moreover, AP for phase retrieval is well known to have stagnation problems in practice,
 resulting in poor reconstruction  \cite{Fie82,Fie13, Mar07}.  
 
In our view, numerical stagnation has   more to do with the measurement scheme than non-convexity: the existence of multiple solutions when only one (uncoded) diffraction pattern is measured even if additional  positivity
 constraint is imposed on the object. However, if the diffraction pattern is measured
 with a random mask (a coded diffraction pattern), then the uniqueness of solution under the real-valuedness constraint is restored with probability one \cite{unique}. In addition, if two independently coded diffraction patterns are measured, then the uniqueness of solution, up to a global phase factor, 
 holds almost surely without any additional prior constraint \cite{unique} (see Proposition \ref{prop:unique}).  
 
 The main goal of the present work is to show by analysis and numerics that under the uniqueness framework for phase retrieval with coded diffraction patterns of \cite{unique}, AP has a significantly sized basin of attraction at $x_0$ and that
 this basin of attraction can be reached by an effective initialization scheme, called the null initialization. In practice, numerical stagnation disappears under the uniqueness measurement schemes of \cite{unique}. 
 
 Specifically, 
our goal is two-fold: i) prove the local convergence of various versions of AP under the uniqueness framework of \cite{unique}  (Theorems \ref{thm1},  \ref{thm2} and \ref{thm3}) and  ii) propose a novel method of initialization, {the null initialization},  that compensates for the local nature of convergence and results in 
global convergence in practice. In addition, we prove that for Gaussian random measurements  the null initialization {\em alone}
produces an initialization of arbitrary accuracy as the sample size increases  (Theorem \ref{Gaussian}). 
In practice AP with the null initialization converges  globally to
the true object.  

 
 \subsection{Set-up} \label{sec:not}
Let us recall the measurement schemes of \cite{unique}.

 Let $x_0(\bn)$ be a discrete  object function with $\bn = (n_1,n_2,\cdots,n_d) \in \IZ^d$. 
Consider  the {object space} consisting  of all functions  supported in 
\[
\cM = \{ 0\le m_1\le M_1, 0\le m_2\le M_2,\cdots, 0\leq m_d\leq M_d\}. 
\]
We assume $d\geq 2$. 

Only the {\em intensities} of the Fourier transform, called the diffraction pattern,
are measured  
 \beq
   \sum_{\bn =-\bM}^{\bM}\sum_{\mbm\in \cM} x_0(\mbm+\bn)\overline{x_0(\mbm)}
   e^{-\im 2\pi \bn\cdot \bom},\quad \bom=(w_1,\cdots,w_d)\in [0,1]^d,\quad \bM = (M_1,\cdots,M_d)\nn
   \eeq
   which is the Fourier transform of the autocorrelation
   \beqn
	  R(\bn)=\sum_{\mbm\in \cM} x_0(\mbm+\bn)\overline{x_0(\mbm)}.
	  \eeqn
Here and below the over-line  means
complex conjugacy. 

Note that
$R$ is defined on the enlarged  grid
 \begin{eqnarray*}
 \widetilde \cM = \{ (m_1,\cdots, m_d)\in \IZ^d: -M_1 \le m_1 \le M_1,\cdots, -M_d\le m_d\leq M_d \} 
 \end{eqnarray*}
whose cardinality is roughly $2^d$ times that of $\cM$.
Hence by sampling  the diffraction pattern
 on the grid 
\beqn
\cL = \Big\{(w_1,\cdots,w_d)\ | \ w_j = 0,\frac{1}{2 M_j + 1},\frac{2}{2M_j + 1},\cdots,\frac{2M_j}{2M_j + 1}\Big\}
\eeqn
we can recover the autocorrelation function by the inverse Fourier transform. This is the {\em standard oversampling} with which  the diffraction pattern and the autocorrelation function become equivalent via the Fourier transform \cite{Miao00,MSC}.


A coded diffraction pattern is measured with a mask
whose effect is multiplicative and results in  
a {\em masked object}  of the form $
\tilde x_0(\bn) =x_0(\bn) \mu(\bn)$ 
where $\{\mu(\bn)\}$ is an array of random variables representing the mask.   
In other words, a coded diffraction pattern is just the plain diffraction pattern of
a masked object. 

We will focus on the effect of {\em random phases} $\phi(\bn)$ in the mask function 
$
\mu(\bn)=|\mu|(\bn)e^{\im \phi(\bn)}
$
where  $\phi(\bn)$ are independent, continuous real-valued random variables and $|\mu|(\bn)\neq 0,\forall \bn\in \cM$ (i.e. the mask is transparent).

For simplicity we assume $|\mu|(\bn)=1,\forall\bn$ which gives rise to
a {\em phase} mask and an {\em isometric}  propagation matrix 
\beq
\label{one}
\hbox{\rm (1-mask )}\quad A^*= c\Phi\,\, \diag\{\mu\},
\eeq
i.e. $AA^*=I$ (with a proper choice of the normalizing constant  $c$), where $\Phi$ is the {\em oversampled}  $d$-dimensional discrete Fourier transform (DFT). Specifically  $\Phi \in \IC^{|\tilde \cM|\times |\cM|}$ is the sub-column matrix of
the standard DFT on  the extended grid $\tilde \cM$ where $|\cM|$ is
the cardinality of $\cM$.  

If the non-vanishing mask $\mu$ does not have a uniform transparency, i.e. $|\mu|(\bn)\neq 1, \forall \bn,$ then we can define  a new object vector $|\mu|\odot x_0$ and a new
isometric propagation matrix
\[
A^*= c\Phi\,\, \diag\lt\{{\mu\over |\mu|}\rt\}
\]
with which to recover the new object first.

When two phase masks $\mu_1, \mu_2$ are deployed, 
the propagation matrix $A^*$ is the stacked coded DFTs, i.e.   
\beq \label{two}\hbox{(2-mask case)}\quad 
A^*=c \lt[\begin{matrix}
\Phi\,\, \diag\{\mu_1\}\\
\Phi\,\, \diag\{\mu_2\}
\end{matrix}\rt]. 
\eeq
 With proper normalization, $A^*$ is
isometric.

We convert the $d$-dimensional ($d\geq 2$)  grid into an ordered set of index. Let $ n=|\cM|$ and  $N$ the total number of measured data. In other words, $A\in \IC^{N\times n}$. 

Let $\cX$ be a nonempty closed convex  set in $\IC^n$ and let 
\begin{equation}
[x]_\cX=\hbox{\rm arg}\min_{x'\in \cX} \|x'-x\|
\end{equation}
denote the projection onto $\cX$. \\
Phase retrieval is to find a solution $x$ to 
the equation 
\beq
\label{0}
 b=|A^* x|,\quad x\in \cX.
\eeq
We focus on the following two cases. \\ 

{{\bf 1) One-pattern case:} $A^*$ is given by \eqref{one}, $\cX=\IR^n$ or $\IR^n_{\tiny +}$. \\

{{\bf 2) Two-pattern case:} $A^*$ is given by \eqref{two},  $\cX=\IC^n$ (i.e. $[x]_\cX=x$).  \\

For the two-pattern case, AP for the formulation \eqref{feas} shall be called  the {Parallel} AP (PAP)  as the rows of $A^*$ and the diffraction data
 are treated equally and simultaneously, in contrast to the {Serial} AP (SAP)  which splits
 the diffraction data into two blocks according to the masks and treated alternately.

The main property of the true object is the rank-$k$ property: $x_0$ is rank-$k$ if the convex hull of  $\supp\{x_0\}$ in $\IC^n$ is $k$-dimensional. 

Now we recall the uniqueness theorem of phase retrieval with coded diffraction patterns.

\begin{proposition} \label{prop:unique} \cite{unique} (Uniqueness of Fourier phase retrieval) 
 Let $x_0$ be a rank$-k,~k\geq 2$, object and  $x$ a solution  of the  phase retrieval problem \eqref{0} for either the one-pattern or two-pattern case. 
 Then   $x=e^{i\theta} x_0$ for some constant $\theta\in \IR$ with  probability one. 
\end{proposition}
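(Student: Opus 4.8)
The plan is to reduce the equation $b=|A^*x|$, via the standard oversampling already set up above, to a question about factorization of multivariate polynomials, and then to exploit the continuity and independence of the random mask phases to rule out every nontrivial factorization. \emph{Step 1 (from intensities to the autocorrelation of the masked object).} Because the diffraction pattern is sampled on the grid $\cL$, the data $b=|A^*x|$ determine, and are determined by, the autocorrelation of the masked object $x\odot\mu$ in the one-mask case, and of both $x\odot\mu_1$ and $x\odot\mu_2$ in the two-mask case. Hence $b=|A^*x|=|A^*x_0|$ is equivalent to the statement that $x\odot\mu_j$ and $x_0\odot\mu_j$ have the same autocorrelation for every deployed mask $j$. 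Writing $G_j(\mathbf z)=\sum_{\bn\in\cM}(x\odot\mu_j)(\bn)\,\mathbf z^{\bn}$, and $G_{0,j}$ likewise for $x_0$, and letting $G^{*}$ denote the conjugate-reciprocal polynomial, this becomes the identity $G_j(\mathbf z)\,G_j^{*}(\mathbf z)=G_{0,j}(\mathbf z)\,G_{0,j}^{*}(\mathbf z)$ for each $j$.

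\emph{Step 2 (almost-sure irreducibility).} The algebraic heart of the argument is that for $d\ge 2$ a $d$-variate polynomial whose nonzero coefficients are multiplied by independent, continuously distributed random phases is, with probability one, irreducible over $\IC$ up to an unavoidable monomial factor. This is exactly where the rank-$k$ hypothesis with $k\ge 2$ enters: since the convex hull of $\operatorname{supp}\{x_0\}$ is at least two-dimensional and $\mu_j$ never vanishes, the Newton polytope of $G_{0,j}$ is genuinely at least two-dimensional, so the one-dimensional pathology (where a polynomial always splits into linear factors, producing a combinatorial family of ``twin'' solutions) is absent. Granting this, unique factorization forces, for each $j$, either $G_j=\alpha_j\,\mathbf z^{\mathbf m_j}\,G_{0,j}$ (a translated copy) or $G_j=\alpha_j\,\mathbf z^{\mathbf m_j}\,G_{0,j}^{*}$ (a translated conjugate-reflected twin), for some $\alpha_j\in\IC$ and some integer shift $\mathbf m_j$.

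\emph{Step 3 (removing the shift and twin ambiguities).} In the one-mask case $\cX=\IR^n$ (and $\cX=\IR^n_+$ a fortiori, being a subset): the translated-copy candidate $x(\bn)=\alpha\,x_0(\bn-\mathbf m)\,e^{\im(\phi(\bn-\mathbf m)-\phi(\bn))}$ and the twin candidate $x(\bn)=\alpha'\,\overline{x_0(\mathbf p-\bn)}\,e^{-\im(\phi(\bn)+\phi(\mathbf p-\bn))}$ must be real-valued at every $\bn$ in the support; since the $\phi(\bn)$ are independent and continuous, each of these is a finite system of transcendental equations holding with probability zero unless $\mathbf m=0$ and the twin alternative is discarded (here the rank hypothesis supplies a support point not fixed by the reflection). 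What remains is $x=\alpha x_0$ with $\alpha$ real and $|\alpha|=1$, i.e. $x=\pm x_0=e^{\im\theta}x_0$. In the two-mask case $\cX=\IC^n$ there is no reality constraint, and one instead uses that the two masks are independent: if any of the relations of Step 2 is a twin (for either mask), dividing the $j=1$ and $j=2$ relations yields an identity of the form $\phi_1(\cdot)\pm\phi_2(\cdot)+\phi_1(\cdot')\pm\phi_2(\cdot')\in 2\pi\IZ$ on the whole support, again a probability-zero event. Hence almost surely both relations are translated copies; comparing them forces $\mathbf m_1=\mathbf m_2=0$ (the support of $x_0$ is confined to the fixed box $\cM$ and $\mu_1/\mu_2$ is a.s.\ nonconstant) and $\alpha_1=\alpha_2=:\alpha$ with $|\alpha|=1$, so $x=e^{\im\theta}x_0$.

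\emph{Main obstacle.} The load-bearing step is the almost-sure irreducibility of Step 2, together with the bookkeeping that guarantees there are only finitely many candidate ``coincidence'' events once monomial factors are discarded --- translation by $\mathbf m$, conjugate reflection, and their couplings across the two masks --- and that each is cut out by finitely many transcendental equations in the continuous variables $\phi(\cdot)$ (respectively $\phi_1(\cdot),\phi_2(\cdot)$), hence has probability zero. Making the monomial (translation) factor provably trivial, and organizing the two-mask coupling, is the delicate part; the individual measure-zero arguments are routine once the right events have been isolated.
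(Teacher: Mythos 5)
You should note first that the paper itself contains no proof of this proposition: it is imported verbatim from \cite{unique}, and the text explicitly states that ``the proof of Proposition \ref{prop:unique} is given in \cite{unique}.'' Your outline follows essentially the same route as that reference: use the standard oversampling to pass from the intensity data to the autocorrelation of the masked object, rewrite this as the $z$-transform identity $G_jG_j^{*}=G_{0,j}G_{0,j}^{*}$, invoke the almost-sure irreducibility (up to monomial factors) of the randomly phased polynomial for $d\ge 2$ and rank $\ge 2$, and then eliminate the residual translation, conjugate-reflection and global-phase ambiguities via the reality constraint in the one-mask case or the independence of the two masks in the two-mask case. The one caveat is the one you flag yourself: Step 2, the almost-sure irreducibility under phase randomization of the coefficients, is the entire mathematical content of the theorem (it is where the restriction to a fixed deterministic object, rather than a generic one as in Hayes' classical result, is paid for), and you assume it rather than prove it; the measure-zero bookkeeping in Steps 1 and 3 is indeed routine by comparison. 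So what you have is a correct and faithful map of the argument of \cite{unique}, not a self-contained proof.
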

\begin{rmk}
The main improvement over  the classical uniqueness theorem  \cite{Hayes} is that while the classical result works with generic (thus random) objects Proposition \ref{prop:unique}  deals with a given  deterministic object. By definition, deterministic objects belong to the measure zero set excluded in the classical setting of \cite{Hayes}. It is crucial to endow  the probability measure on the ensemble of random masks, which we can manipulate, instead of the space of unknown objects, which we can not control.\end{rmk}

The proof of Proposition \ref{prop:unique} is given in \cite{unique} where
more general uniqueness theorems can be found, including the $1\half$-mask case. 
Phase retrieval solution is unique only up to a constant of modulus one
 no matter how many coded diffraction patterns are measured.
 Thus a reasonable  error metric for an estimate $\hat x$ of the true solution $x_0$  is given by
  \begin{equation}
\min_{\theta\in \IR}\|e^{i\theta} \hat x - x_0 \|. 
 \end{equation}

Our framework and methods can be extended to more general, non-isometric measurement matrix $A^*$ as follows. Let $A^*=QR$ be the QR-decomposition of $A^*$ where
$Q$ is isometric and $R$ is upper-triangular. We have \begin{equation}
Q^*= A^*(AA^* )^{-1/2}
\end{equation}
if $A$ (and hence $R$) is full-rank. 
Now  we can define a new object vector $Rx$ and a new isometric measurement
matrix $Q$ with which to recover $Rx$ first. 

\commentout{
Throughout the paper, we assume 
the  canonical  embedding 
\[
\IC^n\subseteq \IC^N,\quad n\leq N.
\]
For example,  if $x\in \IC^n$, then the embedded vector in $\IC^N$, still denoted by $x$,
has zero components $x(j)=0$ for $j\geq n+1$. This is  referred to as {\em zero padding} and
$\tilde n/n$ is the {\em padding ratio}.
Conversely, if $x\in \IC^{\tilde n}$ or $ \IC^N$, then $[x]_n \in \IC^n$ denotes 
 the projected vector onto $\IC^n$. 
 }


\subsection{Other  literature}\label{sec:lit}
Much of recent mathematical literature on phase retrieval focuses on generic frames and random measurements, see e.g. \cite{Balan2, Balan1, BY, BM2, BM1, truncatedWF, phaselift1, Conca, DH12, Eldar, Gross, LV,  NJS, Oh, Sh, Mallat, Xin, Sh}. Among the mathematical works on Fourier phase retrieval e.g. \cite{BCL02, phaselift0, CLS2, CLS1, DR-phasing, Dobson, unique,rpi, pum, Hayes, Hesse, Klib1, Klib2, Mar07, Mig11,  Noll, Papa, Vetterli, ADM}, only a few focus on
analysis and development of efficient algorithms. 

\commentout{
For the optical spectrum,  experiments with coded diffraction patterns  are not new
and can be  implemented 
 by computer generated holograms \cite{BWW}, random phase plates \cite{AH1} and
liquid crystal phase-only panels \cite{FAK}.  Recently, a phase mask with
randomly distributed pinholes  has been
implemented  for soft X-ray \cite{ptycho-rpi}. 
}

There is also vast literature on AP. We only mention the most relevant literature and refer the reader to the reviews \cite{BB,Deutsch} for a more complete list of references. Von Neumann's convergence theorem \cite{Neuman} for 
AP with two closed subspaces is extended to the setting of closed convex sets in
\cite{CG, Bregman} and,  starting with \cite{GS72}, the application of AP to the non-convex setting of phase retrieval has been extensively
studied \cite{Fie82, Fie13, BCL02,BCL04,Mar07}.

 In \cite{Luke09} in particular, 
local convergence theorems were developed for AP for  non-convex problems. However, the technical challenge in applying the theory in \cite{Luke09} to phase retrieval lies  precisely in verifying the main assumption of linear regular intersection therein. 

In contrast,  in the present work, what guarantees the geometric convergence and gives an often sharp bound on the convergence rate  is
the spectral gap condition which can be readily verified  under the uniqueness framework of
\cite{unique} (see Propositions \ref{cor5.2} and  \ref{prop4.8} below). 

As pointed out above, there
are more than one way of formulating phase retrieval, especially with  two (or more) diffraction patterns,
as a feasibility problem.  While PAP is analogous to Cimmino's approach to  AP \cite{Cimmino}, SAP is closer in spirit to Kaczmarz's \cite{Kac}. Surprisingly, 
SAP performs significantly better than PAP in our simulations (Section \ref{sec:num}).
 In Sections \ref{sec:loc} and \ref{sec:SAP} we prove that 
both schemes are locally convergent to the true solution with bounds on rates of convergence.
measurement local convergence for PAP  was proved  in \cite{NJS}.  

Despite the theoretical appeal of  a convex minimization approach to phase retrieval  \cite{phaselift0, phaselift1, CLS1, Papa}, 
the tremendous increase in dimension results in  impractically slow computation. 
Recently, new non-convex approaches become popular again  because of their computational  efficiency 
among other benefits \cite{CLS2,Mig11, NJS}. 

One purpose of the present work is to compare these newer approaches with AP, arguably  the simplest of all non-convex approaches. An important difference of
the measurement schemes in 
these papers from ours  is
that their coded diffraction patterns are {\em not} oversampled. 
In this connection, we emphasize that
reducing the number of coded diffraction patterns
is crucial for the diffract-before-destruct approach
and it is better to oversample than to increase the number of coded diffraction patterns. Another difference is that these newer iterative schemes such as the Wirtinger Flow (WF) \cite{CLS2} are not of the projective type. In Section \ref{sec:num}, we provide a detailed numerical comparison between AP of various forms and WF.

Recently  we  proved  local convergence of
the Douglas-Rachford (DR) algorithm for coded-aperture phase retrieval \cite{DR-phasing}.
The present work extends the method of \cite{DR-phasing} to AP. In addition to
convergence analysis of AP, we also characterize the limit points and the fixed points
of AP in the present work. 

More important,  to compensate for the local nature of convergence we develop a novel procedure, the null initialization,  for finding a sufficiently close initial guess. We prove that the null initialization with the Gaussian random measurement matrix asymptotically approaches the true object (Section \ref{sec:null}). 
The analogous result for coded diffraction patterns remains open. The null initialization is  significantly different from the spectral initialization  proposed in \cite{NJS, CLS2, truncatedWF}. 
In Section \ref{sec:Gaussian} we give a theoretical comparison and in Section \ref{sec:num} a numerical comparison between these initialization methods. We will see that the initialization with the null initialization is more accurate than with the spectral
initialization and SAP with the null initialization converges faster than the Fourier-domain Douglas-Rachford algorithm proposed in \cite{DR-phasing}.

{ During the review process, the two references \cite{Noll,Hesse} were brought to our attention by the referees. 

 Theorem 3.10 of \cite{Hesse} 
asserts 
global convergence to {\em some} critical point of a proximal-regularized alternating minimization formulation of \eqref{feas} 
provided that the iterates are 
{\em bounded} (among other assumptions). However,  neither (global or local) convergence to the {\em true} solution nor the geometric  sense of convergence is established in \cite{Hesse}.
In contrast, we prove that the AP iterates are always bounded, their accumulation points must be fixed points (Proposition \ref{Cauchy}) and the true solution is a stable fixed point. Moreover, any fixed
point  that shares the
same 2-norm with the true object is the true object itself (Proposition \ref{prop2.21}).

On the other hand, Corollary 12 of \cite{Noll} asserts the existence of a local basin of attraction of the feasible set \eqref{feas} which includes  AP in the one-pattern case and PAP in the two-pattern case (but not SAP). From this and  the uniqueness theorem (Proposition \ref{prop:unique}) convergence to the true solution, up to a global phase factor, follows (i.e. a singleton with an arbitrary global phase factor). However, Corollary 12 of \cite{Noll} asserts a {\em sublinear} power-law convergence with an unspecified power.  In contrast, we prove a linear convergence and give a spectral gap bound on the convergence rate for AP, including SAP which is emphatically {not} covered by  \cite{Noll} and arguably the best performer among the tested algorithms. }

The paper proceeds as follows. In Section \ref{sec:null}, we discuss the null initialization and prove global convergence to the true object of the null initialization for the complex Gaussian random measurement. In Section \ref{sec:pap}, we formulate  AP
of various forms and in Section \ref{sec:fixed} we discuss the limit points and the fixed points of AP.
We prove local convergence to the true solution for the Parallel AP in  Section \ref{sec:loc} and  for the real-constraint AP in Section \ref{sec:one-pattern}. In Section \ref{sec:SAP} we prove local convergence for
the Serial AP. 
 In Section \ref{sec:num},  we present 
numerical experiments and compare our approach with
the Wirtinger Flow and its truncated version  \cite{CLS2,truncatedWF}. 

\section{The null initialization}\label{sec:null}

For a nonconvex minimization problem such as phase retrieval,
the accuracy of  the initialization as the estimate of the object has a great impact on  the  performance  of any iterative schemes.

 The following observation motivates  our approach to
 effective initialization. 
Let $I$ be a subset of $\{1,\cdots,N\}$  and $I_c$ its complement such that $b(i)\leq b(j)$ for all $i\in I, j\in I_c$.  In other words, $\{b(i): i\in I\}$ are the ``weaker" signals and
$\{b(j): j\in I_c\}$ the ``stronger" signals. 
 Let $|I|$ be the cardinality of the set $I$.
 Then  $\{ a_i\}_{i\in I}$ is  a set of sensing vectors  nearly orthogonal  to $x_0$ if $|I|/N$ is sufficiently small (see Remark \ref{rmk5.2}).
This suggests the following  constrained  least squares solution   \[
x_{\rm null}:=\hbox{\rm arg}\min\lt\{\sum_{i\in I} \|a_i^* x\|^2: x\in \cX, {\|x\|=\|x_0\|}\rt\}
 \]
 may be  a reasonable initialization. Note that $x_{\rm null}$ is not uniquely defined 
 as $\alpha x_{\rm null}$, with $ |\alpha|=1,$ is also a null vector. Hence we should
 consider the global phase adjustment for a given null vector $x_{\rm null}$
 \beq
\nn
 \min_{\alpha \in \IC,\; |\alpha|=1 }\| \alpha x_{\rm null}-x_0\|^2=2\|x_0\|^2-2\max_{|\alpha|=1} \Re(x_0^*\alpha\xnul).
 \eeq
In what follows, we assume $\xnul$ to be optimally adjusted so  that 
\beq
 \label{52'}\|x_{\rm null}-x_0\|^2=2\|x_0\|^2-2 |x_0^*\xnul| 
\eeq

{ We pause to emphasize that the constraint $\|x_{\rm null}\|=\|x_0\|$ is introduced in order to simplify the error bound below (Theorem \ref{Gaussian}) and is completely 
irrelevant to initialization since the AP map $\cF$ (see \eqref{papf} below for definition) is scaling-invariant in the sense that   $
\cF(cx)=\cF(x)$,
for any $c>0$. 
Also, in many imaging problems, the norm of the true object, like the constant phase factor, is either  recoverable by other prior information or irrelevant to the quality of reconstruction.}

 Denote  the sub-column  matrices  consisting of $\{a_i\}_{i\in I} $ and $\{a_j\}_{j\in I_c}$  by $A_I$ and $A_{I_c}$, respectively, and, by reordering  the row index, write  $A=[A_I, A_{I_c}]\in \IC^{n\times N}.$ 
 
 Define the dual vector 
  \beq
x_{\rm dual}:= \hbox{\rm arg}\max\lt\{ \|A_{I_c}^*  x\|^2: x\in \cX, {\|x\|=\|x_0\|}\rt\} \eeq
 whose phase factor is optimally adjusted as $x_{\rm null}$.  
 \subsection{Isometric $A^*$}
 
For isometric  $A^*$, 
 \beq
 \label{2.3}
x_{\rm null}:=\hbox{\rm arg}\min\lt\{\sum_{i\in I} \|a_i^* x\|^2: x\in \cX, {\|x\|=\|b\|}\rt\}.
 \eeq
We have  
 \[
 \|A_I^*  x\|^2+\|A_{I_c}^*  x\|^2=\|x\|^2
\]
and hence 
\beq
\label{51'} x_{\rm null}=x_{\rm dual},
\eeq
i.e. the null vector is self-dual in the case of isometric $A^*$. 
Eq.  \eqref{51'}  can be used to  construct
the null vector from  $A_{I_c}A^*_{I_c}$ by the power method. 

Let $\mathbf{1}_{c}$ be the characteristic function of the complementary index $I_c$
with $|I_c|=\gamma N$.   The default choice for $\gamma$ is the  median value $\gamma=0.5$.
\begin{algorithm}
\SetKwFunction{Round}{Round}

\textbf{Random initialization:} $x_{1}=x_{\rm rand}$
\\
\textbf{Loop:}\\
\For{$k=1:k_{\textup{max}}-1$}
{
$x'_{k}\leftarrow A (\mathbf{1}_{c}\odot A^*x_k)$;\\
$x_{k+1}\leftarrow [x_k^{'}]_\cX/\|[x_k^{'}]_\cX\|$
}
{\bf Output:} {$\hat x_{\textup{dual}}=x_{k_{\textup{max}}}$.}
\caption{\textbf{The  null initialization}}
\label{null-algorithm}
\end{algorithm}

{ For isometric $A^*$, it is natural to define
\beq
\label{2.4}
x_{\rm null}=\alpha\|b\| \cdot \hat x_{\rm dual},\quad \alpha= {\hat x_{\rm dual}^*x_0\over
| \hat x_{\rm dual}^*x_0|}\eeq
where $\hat x_{\rm dual}$ is the output of Algorithm 1. 
As shown in Section \ref{sec:num} (Fig. \ref{fig:noise0}), 
the null vector is remarkably stable with respect to
noise in $b$. }

%


\subsection{Non-isometric $A^*$} When $A^*$ is non-isometric such as the standard Gaussian random matrix (see below),  the power method is still applicable with  the following modification.

For a full rank $A$, let $A^*=QR$ be  the QR-decomposition of $A^*$ where
$Q$ is isometric and $R$ is a full-rank, upper-triangular square matrix. Let $z=Rx$, $z_0=Rx_0$ and $z_{\rm null}=Rx_{\rm null}$. Clearly, $z_{\rm null}$ is the null vector for the isometric phase retrieval
problem $b=|Q z|$ in the sense of \eqref{2.3}. 

Let $I$ and $I_c$ be the index sets as above.
Let 
\beq
\hat z=\hbox{\rm arg}\max_{\|z\|=1} \|Q_{I_c}z\|.
\eeq
{ Then
\[
x_{\rm null}=\alpha\beta R^{-1}\hat  z
\]
where $\alpha$ is the optimal phase factor and 
\[ \beta={ \|x_0\|\over \|R^{-1} \hat z\|}
\]
may be an  unknown parameter in the non-isometric case.
As pointed out above, when $x_{\rm null}$ with an arbitrary parameter $\beta$ is used as initialization of phase retrieval, the first iteration of AP would recover the true value of $\beta$ as AP is totally independent of any real constant factor. }

\subsection{The spectral initialization}\label{sec:spectral}
Here we compare the null initialization with the spectral initialization used in \cite{CLS2} and the truncated spectral initialization used in \cite{truncatedWF}.

\begin{algorithm}[h]
\SetKwFunction{Round}{Round}
\textbf{Random initialization:} $x_1=x_{\rm rand}$\\
\textbf{Loop:}\\
\For{$k=1:k_{\textup{max}}-1$}
{
$x_k'\leftarrow A(|b|^2\odot A^*x_k);$\\
$x_{k+1}\leftarrow [x_k^{'}]_\cX/\|[x_k^{'}]_\cX\|$;
}
{\bf Output:} $x_{\rm spec}=x_{k_{\rm max}}$.
\caption{\textbf{The spectral initialization}}
\label{spectral-algorithm}
\end{algorithm}

The key difference between Algorithms 1 and 2 is the different weights used in
step 4 where  the null initialization uses $\mathbf{1}_{c}$ and  the spectral vector
method uses $|b|^2$ (Algorithm 2). 
The truncated spectral initialization uses  a still  different weighting 
\begin{equation}\label{truncated_version}
x_{\textup{t-spec}}=\textup{arg}\underset{\|x\|=1}{\textup{max}}
\| A
\left(\mathbf{1}_\tau\odot |b|^{2}\odot A^*x \right)\|
\end{equation}
where 
$\mathbf{1}_\tau$ is the characteristic function of 
the set
\[
\{i: |A^* x(i)|\leq \tau {\|b\|}\}
\]
with an adjustable parameter $\tau$. Both $\gamma$ of Algorithm 1 and $\tau$ of \eqref{truncated_version} can be 
optimized  by tracking and minimizing  the residual $\|b-|A^* x_k|\|$.

As shown in the numerical experiments in Section \ref{sec:num} (Fig. \ref{fig:initials} and~\ref{fig:initials_2masks}), the choice of weight  significantly
affects  the quality of initialization, with the null initialization as  the best performer
(cf.  Remark \ref{rmk5.2}). 

Moreover, because the null initialization depends only on the choice of the index set $I$ and not explicitly on $b$, the method is noise-tolerant and performs well with noisy data (Fig. \ref{fig:noise0}). 
\subsection{ Gaussian random measurement}\label{sec:Gaussian}

Although we are unable to provide a rigorous justification of
the null initialization in the Fourier case, we shall do so for 
 the complex Gaussian case $A=\Re (A)+i \Im(A)$, where  the entries of $\Re(A),\Im(A)$ are  i.i.d. standard normal random variables. 
The following error bound is in terms of
the closely related error  metric 
 \beq
 \label{52''}  \|x_0x_0^* -x_{\rm null}x_{\rm null}^*\|^2 
=2\|x_0\|^4- 2|x_0^*\xnul|^2
\eeq 
which has the advantage of being independent of the global phase factor. 

\begin{thm} \label{Gaussian}
Let $A\in \IC^{n\times N}$ be  an  i.i.d. complex standard Gaussian matrix. Suppose 
\beq
\label{53'}
\sigma:={|I|\over N}<1,\quad \nu={n\over |I|}<1.
\eeq
Then for any $\ep\in (0,1),\delta>0$ and $t\in (0, \nu^{-1/2}-1)$ 
 the following error bound 
\beq\label{error}
\|x_0x_0^* -x_{\rm null}x_{\rm null}^*\|^2
&\le& \lt( \left(\frac{2+t}{1-\epsilon} \right) \sigma+\ep \lt(-2\ln (1-\sigma) +\delta\rt)\rt){ 2\|x_0\|^4\over \left(1-(1+t)\sqrt{\nu}\right)^{2}}
 \eeq
holds with probability at least \beq
\label{prob}
&& 1-2\exp\left(-{N}{\delta^2 e^{-\delta}|1-\sigma|^2/2} \right)-
\exp(-{2}  \lfloor |I| \epsilon \rfloor^2/N )-Q
 \eeq
where $Q$ has the asymptotic upper bound 
 \beq\label{a.6}
 2 \exp\lt\{-c\min \lt[{e^2t^2 \over 16} \lt(\ln \sigma^{-1}\rt)^2 {|I|^2/N},~{et\over 4}|I|\ln\sigma^{-1}\rt]\rt\}, \quad\sigma \ll 1,\eeq
with  an absolute constant $c$. 
\end{thm}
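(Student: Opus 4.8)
The plan is to bound the phase‑independent quantity $1-|v^*u|^2$, where $v:=x_0/\|x_0\|$ and $u:=\xnul/\|\xnul\|$. Since $\|\xnul\|=\|x_0\|$ by construction, the left side of \eqref{52''} (equivalently of \eqref{error}) equals $2\|x_0\|^4(1-|v^*u|^2)$, so it suffices to show that $1-|v^*u|^2$ is at most the factor multiplying $2\|x_0\|^4$ on the right of \eqref{error}. Write $Y:=A_IA_I^*=\sum_{i\in I}a_ia_i^*$; by definition $u$ is a unit bottom eigenvector of $Y$, so $u^*Yu=\lambda_{\min}(Y)$. I would decompose $u=\alpha v+\beta w$ with $w\perp v$, $\|w\|=1$ and $\beta=\|(I-vv^*)u\|\ge0$, so that $1-|v^*u|^2=\beta^2$.

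The core is a two‑sided pinch on $\lambda_{\min}(Y)$. Testing with $v$ gives $\lambda_{\min}(Y)\le v^*Yv=\sum_{i\in I}|a_i^*v|^2=:S$. On the other hand, since $u\in\mathrm{span}\{v,w\}$, $\lambda_{\min}(Y)=u^*Yu$ is also the smallest eigenvalue of the Hermitian $2\times2$ compression of $Y$ to the orthonormal pair $\{v,w\}$, namely the matrix with diagonal entries $S=v^*Yv$ and $m:=w^*Yw$ and off‑diagonal $c:=v^*Yw$, where $|c|^2\le Sm$ by positive semidefiniteness. When $m>S$, a short computation with this $2\times2$ matrix (via the eigenvector equation $c\beta=(\lambda_{\min}(Y)-S)\alpha$) gives $\beta^2\le Sm/(m-S)^2$; since $x\mapsto Sx/(x-S)^2$ is decreasing on $(S,\infty)$ and $m\ge\lambda_\perp:=\min_{z\perp v,\,\|z\|=1}z^*Yz$, I obtain
\beq\label{pinch}
1-|v^*u|^2 = \beta^2 &\le& \frac{S\,\lambda_\perp}{(\lambda_\perp-S)^2},
\eeq
valid whenever $\lambda_\perp>S$; in the range where \eqref{error} is not already trivial its right side is essentially $S/\lambda_\perp$. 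Here $\lambda_\perp$ is the smallest eigenvalue of the $(n-1)$‑dimensional Gram matrix of the vectors $(I-vv^*)a_i$, $i\in I$.

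Two probabilistic inputs then close the gap. First, a lower bound on $\lambda_\perp$: the decisive point is that $I$ is determined by $\{|a_i^*v|\}_{i=1}^{N}$ alone, whereas for a complex Gaussian column $a_i$ the projection $(I-vv^*)a_i$ is independent of $a_i^*v$; hence, conditionally on $I$, the vectors $\{(I-vv^*)a_i\}_{i\in I}$ are i.i.d.\ complex Gaussian on the $(n-1)$‑dimensional space $v^\perp$, independent of which set $I$ was selected. The standard non‑asymptotic lower bound on the smallest singular value of a Gaussian matrix then gives $\lambda_\perp\ge 2|I|\,(1-(1+t)\sqrt\nu)^2$ with high probability for $t\in(0,\nu^{-1/2}-1)$, which is the origin of the denominator in \eqref{error} (its Gaussian tail being absorbed into $Q$). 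Second, an upper bound on $S$: since $b(i)=\|x_0\|\,|a_i^*v|$, the set $I$ collects the indices of the $|I|$ smallest of the i.i.d.\ $\chi^2_2$ variables $\xi_i:=|a_i^*v|^2$, so $\max_{i\in I}\xi_i$ is the $|I|$‑th order statistic. A Hoeffding bound on the number of $\xi_i$ lying below the deterministic level $-2\ln(1-\sigma)+\delta$ shows this order statistic does not exceed that level except on an event of probability $2\exp(-N\delta^2e^{-\delta}|1-\sigma|^2/2)$; off that event $I\subseteq\{i:\xi_i\le-2\ln(1-\sigma)+\delta\}$, so $S$ is dominated by the i.i.d.\ truncated sum $\sum_i\xi_i\mathbf{1}_{\{\xi_i\le-2\ln(1-\sigma)+\delta\}}$. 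A second Hoeffding estimate bounds the size of this enlarged index set (parameter $\epsilon$, probability $\exp(-2\lfloor|I|\epsilon\rfloor^2/N)$), and a Bernstein inequality for the resulting sum of bounded sub‑exponential terms — whose mean is computed from $\IE\,[\xi_1\mathbf{1}_{\{\xi_1\le-2\ln(1-\sigma)\}}]=2[\sigma+(1-\sigma)\ln(1-\sigma)]\le2\sigma^2$, with $\ln\sigma^{-1}$ the governing sub‑exponential scale — yields an upper bound on $S$ of exactly the shape of the bracket in \eqref{error} times $|I|$, except with probability $Q$ of the asymptotic order \eqref{a.6}. Substituting both estimates into \eqref{pinch} on the intersection of the good events (on which $S\ll\lambda_\perp$) yields \eqref{error}, and a union bound over the three exceptional events yields \eqref{prob}.

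The main obstacle will be this last step, the control of $S=\sum_{i\in I}|a_i^*v|^2$: because the index set $I$ is itself random and data‑dependent, $S$ is not a sum of independent terms, and recovering the precise three‑parameter estimate with tail $Q$ requires substituting i.i.d.\ truncated sums for $S$, controlling the random truncation threshold (the $|I|$‑th order statistic of a $\chi^2_2$ sample), and carrying the constants through the Bernstein step so that the bracket in \eqref{error} is reproduced. By comparison the $2\times2$ pinch \eqref{pinch} is elementary, and the lower bound on $\lambda_\perp$ — conceptually the heart of the matter — reduces to a textbook Gaussian‑matrix estimate once one has observed that $(I-vv^*)a_i$ is independent of $a_i^*v$, hence of $I$.
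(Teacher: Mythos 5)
Your proposal is correct in substance and follows the same three\,--\,part architecture as the paper's proof: a deterministic reduction of the error to the ratio of the selected weak\,--\,row energy $S=\|b_I\|^2$ to the smallest eigenvalue $\lambda_\perp$ of $A_IA_I^*$ restricted to $x_0^\perp$; a Davidson--Szarek/Wishart lower bound on $\lambda_\perp\ge 2\bigl(\sqrt{|I|}-(1+t)\sqrt{n}\bigr)^2$ via the real/imaginary split; and an order\,--\,statistics + Hoeffding + Bernstein upper bound on $S$ with the same threshold $\tau_*=-2\ln(1-\sigma)$, the same truncated mean $2[\sigma+(1-\sigma)\ln(1-\sigma)]\le 2\sigma^2$, and the same tail $Q$ (these are Propositions \ref{prop5.1}--\ref{bhatb} of the appendix). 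The one genuine divergence is the deterministic step. Proposition \ref{prop5.1} chooses the explicit test vector $x_\bot=-(1-\beta^2)^{1/2}x_{\rm null}+\beta z$ in the plane spanned by $x_0$ and $x_{\rm null}$ and exploits $A_I^*z\perp A_I^*x_{\rm null}$ (because $x_{\rm null}$ is a bottom singular vector) to obtain the \emph{unconditional} bound $\tfrac14\|x_0x_0^*-x_{\rm null}x_{\rm null}^*\|^2\le\|b_I\|^2/\|A_I^*x_\bot\|^2$, which yields the constant in \eqref{error} exactly. Your $2\times2$ compression pinch gives $1-|v^*u|^2\le S\lambda_\perp/(\lambda_\perp-S)^2$, which requires $\lambda_\perp>S$ and matches the paper's $2S/\lambda_\perp$ only up to the factor $\lambda_\perp^2/\bigl(2(\lambda_\perp-S)^2\bigr)$; this reproduces \eqref{error} whenever $S\le(1-2^{-1/2})\lambda_\perp$ -- in particular in the asymptotic regime $\sigma\ll1$ you target -- but leaves a narrow window $(1-2^{-1/2})\lambda_\perp<S<\lambda_\perp/2$ in which the stated constant is not quite recovered, so obtaining \eqref{error} verbatim for all admissible $(\ep,\delta,t)$ is cleaner with the paper's test\,--\,vector argument. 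A point in your favor: you make explicit the conditional\,--\,independence fact (the selection of $I$ depends only on $\{|a_i^*x_0|\}$, while $(I-vv^*)a_i$ is independent of $a_i^*x_0$) that legitimizes treating the projected columns of $A_I$ as i.i.d.\ Gaussian; the appendix uses this implicitly without comment.
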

\begin{rmk}
To unpack the implications of Theorem \ref{Gaussian}, consider the following asymptotic:
With $\ep$ and $ t $  fixed,  let 
\[
n\gg 1,\quad  \sigma={|I|\over N}\ll 1,\quad {|I|^2\over N}\gg 1, \quad\nu={n\over |I|}<1.
\]
We have
\beq
\label{5.8.1}
\|x_0x_0^* -x_{\rm null}x_{\rm null}^*\|^2\le c_0 \sigma\|x_0\|^4
\eeq
with probability at least
\[
1-c_1 e^{-c_2 n}-  c_3 \exp\lt\{-c_4\lt(\ln\sigma^{-1}\rt)^2 {|I|^2/ N} \rt\}
\]
  for moderate constants  $c_0, c_1, c_2,c_3,c_4$.
  
  To compare with the asymptotic regimes of \cite{CLS2} and \cite{truncatedWF}
  let us set $\nu<1$ to be a constant and  $N=Cn$ with a sufficiently large constant $C$. Then \eqref{5.8.1} becomes
\beq
\label{5.8.2}
\|x_0x_0^* -x_{\rm null}x_{\rm null}^*\|^2\le {c_0\over C\nu} \|x_0\|^4, 
\eeq
which is arbitrarily small with a sufficiently large constant $C$, with probability  close to 1 exponentially in $n$. 

{ In comparison, the performance guarantee for the spectral initialization (\cite{CLS2}, Theorem 3.3)  assumes 
$N=O(n\log n)$ for the same level of accuracy guarantee  with a success probability
less than $1-8/n^2$. On the other hand, the performance guarantee for the truncated
spectral vector  is comparable to Theorem \ref{Gaussian} in the sense that error bound like \eqref{5.8.2} holds true for the truncated spectral vector with $N=Cn$ and probability
exponentially close to 1 (\cite{truncatedWF}, Proposition 3). 

We mention by passing  that the initialization by Resampled Wirtinger Flow (\cite{CLS2}, Theorem 5.1) requires in practice a large number of coded diffraction patterns and
does not apply to the present set-up, 
so we do not consider  it further.}\\

 \label{rmk5.2}
\end{rmk}

The proof of Theorem  \ref{Gaussian} is given in Appendix A.

\section{AP}\label{sec:pap}

First we introduce some notation and convention that are frequently used in the subsequent analysis.

The vector space $\IC^n=\IR^n\oplus_\IR i\IR^n$ is 
isomorphic to
$\IR^{2n}$ via the map 
\begin{equation}\label{51} G(v):=\left[
\begin{array}{c}
\Re(v)     \\
 \Im(v)  
\end{array}
\right],\quad \forall v \in \IC^{n}\end{equation} 
and endowed with the real inner product
\[
\langle u, v\rangle :=\Re(u^*v)=G(u)^\top G(v),\quad u,v\in \IC^n.
\]
We say that $u$ and $v$ are (real-)orthogonal to each other (denoted by $u\perp v$)  iff $\langle u,v\rangle=0$. The same isomorphism exists between $\IC^{N}$ and $\IR^{2N}$. 

Let $ y\odot y'$ and $y/y'$ be the component-wise multiplication and division between  two vectors $y,y'$, respectively. For any $y\in \IC^N$ define the phase vector $\om\in \IC^N$ with  $\om(j)=\z(j)/|\z(j)|$ where $|\z(j)|\neq 0$.
When $|\z(j)|=0$ the phase can be assigned any value in $[0,2\pi]$. 
For simplicity, we set the default value $\z(j)/|\z(j)|=1$ whenever the denominator vanishes. 

It is important to note that for the measurement schemes \eqref{one} and \eqref{two}, the mask function by assumption is an array of  independent, continuous random variables
and so is $y_0=A^* x_0$.  Therefore  $b=|y_0|$ almost surely vanishes  nowhere. 
However, we will develop the AP method without assuming this fact and without specifically appealing to the structure of the measurement schemes \eqref{one} and \eqref{two} unless stated otherwise.

Let $A^*$ be any $N\times n$ matrix, $b=|A^* x_0|$ and 
\beq\label{3'}
F(x)&=&\half \||A^* x|-b\|^2  = \frac{1}{2}\|A^* x\|^2-\sum_{j\in J} b(j) |a_j^* x|+\frac{1}{2}\|b\|^2
\eeq   
where
 \[J:=\{j: b(j)>0\}.\]
  As noted above, for our measurement schemes \eqref{one} and \eqref{two}, 
 $J=\{1,2,\cdots, N\}$ almost surely. 
  
{   In view of \eqref{3'}, the only possible hinderance to differentiability for $F$ is the sum-over-$J$ term. Indeed, we have the following result. 

\begin{prop}\label{C2} The function $F(x)$ is infinitely differentiable in the open set 
\beq
\label{3.3'}
\{x\in \IC^n: |a_j^*x|> 0, \quad \forall j\in J\}. 
\eeq
In particular, 
for an isometric $A^*$,  $F(x)$ is  infinitely  differentiable  in the neighborhood of $x_0$ defined by \beq\label{4.4}
 \|x_0-x\|<\min_{j\in J} b(j).
 \eeq
\end{prop}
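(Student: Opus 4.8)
The plan is to isolate the only non-smooth term in \eqref{3'}, namely $\sum_{j\in J} b(j)\,|a_j^* x|$, and show that each summand $x\mapsto |a_j^* x|$ is smooth wherever $a_j^* x\neq 0$. First I would observe that $x\mapsto a_j^* x$ is a (real-)linear map $\IC^n\to\IC$, hence smooth, and that $z\mapsto |z|$ is real-analytic (indeed $C^\infty$) on $\IC\setminus\{0\}$, since under the identification $\IC\cong\IR^2$ we have $|z|=\sqrt{(\Re z)^2+(\Im z)^2}$, which is smooth away from the origin. Composing, $x\mapsto |a_j^* x|$ is $C^\infty$ on the open set $\{x: a_j^* x\neq 0\}$. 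The terms $\frac12\|A^* x\|^2$ and $\frac12\|b\|^2$ are a quadratic form plus a constant, hence trivially $C^\infty$ everywhere. Therefore $F$ is a finite sum of functions each $C^\infty$ on the open set \eqref{3.3'}, so $F$ itself is $C^\infty$ there.

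For the second assertion, the task reduces to showing that the ball $\{x:\|x_0-x\|<\min_{j\in J} b(j)\}$ is contained in the set \eqref{3.3'} when $A^*$ is isometric. Fix $j\in J$. Since $A^*$ is isometric, its rows have unit norm, $\|a_j\|=1$, so $|a_j^*(x_0-x)|\le \|a_j\|\,\|x_0-x\|=\|x_0-x\|$ by Cauchy–Schwarz. Then by the reverse triangle inequality,
\[
|a_j^* x|\ge |a_j^* x_0|-|a_j^*(x_0-x)|\ge b(j)-\|x_0-x\|>0
\]
whenever $\|x_0-x\|<\min_{j\in J} b(j)\le b(j)$, using $|a_j^* x_0|=b(j)$ by definition of $b$. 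Hence every $x$ in that ball satisfies $|a_j^* x|>0$ for all $j\in J$, so the ball lies in \eqref{3.3'}, and the first part applies.

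There is no real obstacle here; the only point that needs a word of care is that differentiability is meant in the real sense (as a function on $\IR^{2n}$ via the isomorphism $G$ of \eqref{51}), not complex-analytically — $|z|$ is of course not holomorphic — and that $\|a_j\|=1$ is exactly the content of $A^*$ being isometric (the rows of $A^*$ are the $a_j$, and $AA^*=I$ forces unit row norms, or one can argue directly from $\|A^* v\|^2 = \sum_j |a_j^* v|^2 = \|v\|^2$ applied to suitable $v$). Everything else is a routine composition-of-smooth-maps argument together with the reverse triangle inequality.
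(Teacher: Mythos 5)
Your proof is correct and follows essentially the same route as the paper: the paper likewise treats the smoothness of each term of \eqref{3'} away from the zeros of $a_j^*x$ as routine and devotes its (short) proof to the reverse-triangle-inequality estimate $|a_j^*x|\ge b(j)-|a_j^*(x_0-x)|\ge b(j)-\|x-x_0\|$. One small correction: isometry $AA^*=I_n$ with $N>n$ does \emph{not} force $\|a_j\|=1$ (indeed $\sum_j\|a_j\|^2=\mathrm{tr}(AA^*)=n<N$); it only gives $\|a_j\|\le 1$, which is exactly what your own parenthetical argument via $\sum_j|a_j^*v|^2=\|v\|^2$ yields and is all that the estimate $|a_j^*(x_0-x)|\le\|x_0-x\|$ requires.
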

\begin{proof}
Observe that 
\begin{eqnarray*}
 |a_j^* x|&=& |a_j^* x_0-a_j^*(x_0-x)|\ge b(j)-|a_j^*(x_0-x)|\ge b(j)-\|x-x_0\|,
\end{eqnarray*}
and hence $|a_j^* x|> 0$ if 
$\|x_0-x\|<b(j)$. The proof is complete. \end{proof}
 }
 
Consider the smooth  function 
\beq
\label{3} f(x,u)&=&\half \| A^* x-u\odot b\|^2=\frac{1}{2}\|A^* x\|^2-\sum_{j\in J} \Re(x^*a_jb(j)u(j))+\frac{1}{2}\|b\|^2
\eeq
where $x\in \IC^n$ and
\beq
u\in U:=\{(u(i))\in \IC^N: |u(i)|=1,\,\, \forall i\}.\label{4}
\eeq
We can write 
\beq
\label{5}
F(x)&=&\min_{u\in U} f(x,u)
\eeq
which has many minimizers 
if $x^*a_j b(j)=0$ for some $j$. We select by convention the minimizer
\beq\label{3.6}
 u={A^* x\over |A^* x|}.
 \eeq
 
Define the complex gradient 
 \beq\label{3.6'}
\nabla_x f(x,u)&:= &
{\partial f(x,u)\over \partial \Re(x)}+i {\partial f(x,u) \over \partial \Im(x)}= AA^*x- A(u\odot b)
\eeq
and consider the alternating minimization
  procedure
\begin{eqnarray}\label{ERstep1}
u^{(k)}&=&\arg\min_{u\in U} f(x^{(k)},u),\\
x^{(k+1)}&=&\arg\min_{x\in \cX} f(x,u^{(k)})\label{ERstep2}
\end{eqnarray}
each of which is a least squares problem.

By \eqref{3.6} and \eqref{3.6'}, the minimizer \eqref{ERstep2} is given by
\beq\label{3.9}
x^{(k+1)}=(A^*)^\dagger(u^{(k)}\odot b),\quad u^{(k)}={A^* x^{(k)}\over |A^* x^{(k)}|}\eeq
where 
\[
(A^*)^\dagger= (AA^*)^{-1} A
\]
is the pseudo-inverse. 

\commentout{Equivalently, let $[\cdot ]_\cX$ denote the projection on the set $\mathcal{X}$ with respect to the Euclidean norm:
\begin{equation} z=[x]_\cX \textrm{ if $z\in\mathcal{X}$ and } \|z-x\|\le \|y-x\|,  \; \forall y\in \mathcal{X}.
\end{equation}
}

Eq. \eqref{3.9} can be written as the fixed point iteration
 \begin{equation}\label{papf}
x^{(k+1)}=\cF(x^{(k)}),\quad \cF(x)=\lt[(A^*)^{\dagger}\lt(b\odot \frac{A^* x}{|A^* x|}\rt)\rt]_\cX. 
\end{equation}
 In the one-pattern case, \eqref{papf} is exactly Fienup's Error Reduction algorithm \cite{Fie82}.  
 
The AP map \eqref{papf} can be formulated as the projected gradient method \cite{Goldstein, Levitin}. In the small neighborhood of $x_0$ where $F(x)$ is smooth (Proposition \ref{C2}), we have
\beq
\nabla F(x) &= &\nabla_x f(x, u)=AA^* x-A(b\odot u),\quad u={A^*x\over |A^* x|}\label{subdiff}
\eeq
and hence
\beq
\label{gp}
\cF(x)=\lt[ x- (AA^*)^{-1} \nabla F(x)\rt]_\cX. 
\eeq  
{ Where $F(x)$ is not differentiable, \eqref{subdiff} is an element of
the subdifferential of $F$. Therefore, the AP map \eqref{papf} can be viewed as
the generalization of the projected gradient method to the non-smooth setting. }

 The object domain formulation \eqref{papf} is equivalent to the Fourier domain formulation \eqref{fap} by the change of variables  $y=A^*x$ and letting
  \[
P_1 y= A^* [(A^*)^\dagger  y]_\cX,\quad P_2 y= b\odot {y\over |y|}.
 \]

 We shall study  the following three versions of AP. The first is the Parallel AP (PAP)
  \beq
\cF(x)=\label{3.11}(A^*)^\dagger\lt(b \odot {A^*x \over |A^*x|}\rt)
  \eeq
 to be applied to the two-pattern case. The second is the real-constrained AP (RAP)
   \beq
\cF(x)=\label{3.12}
  \lt[(A^*)^\dagger\lt(b \odot {A^*x\over |A^*x|}\rt)\rt]_{\cX},\quad \cX=\IR^n,~\IR^n_+
   \eeq
   to be applied to the one-pattern case. 
   
  The third is the Serial AP defined as follows. 
  Following \cite{rpi} in the spirit of Kaczmarz, we partition the measurement matrix and
the data vector into parts and treat them sequentially.

Let $A^*_l, b_l, l=1,2,$ be a partition of the measurement matrix and data, respectively, 
as  
  \[
  A^*=\lt[\begin{matrix} A_1^*\\ A^*_2\end{matrix}\rt],\quad b=\lt[\begin{matrix} b_1\\ b_2\end{matrix}\rt]
\]
with
 \[
 b_l=|A_l^* x_0|,\quad l=1,2.
 \]

 Let $y\in \IC^N$
be written as $y=[y_1^\top,y_2^\top]^\top$.  
Instead of \eqref{feas}, we now formulate the phase retrieval problem as the following 
feasibility problem
 \beq
 \label{feas'}
\hbox{Find}\quad  \hat y\in  \cap_{l=1}^2\lt(A_l^*\cX \cap \cY_l\rt),\quad \cY_l:=  \{y_l: |y_l|=b_l\}.
 \eeq
As  the projection onto the non-convex set $A_l^*\cX \cap \cY_l$ is not explicitly known,
we use the approximation instead 
 \begin{equation}\label{serial}
\cF_l(x)=(A^*_l)^\dagger\lt(b_l\odot \frac{A^*_l x}{|A^*_l x|}\rt),\quad l=1,2,
\end{equation}
and consider the Serial AP (SAP) map
    \beq
  \cF(x)=\cF_2(\cF_1(x)).  \label{3.13} 
  \eeq
In contrast,    the PAP map \eqref{3.11} 
\begin{equation}\label{para}
\cF(x)=A\lt(b\odot \frac{A^* x}{|A^*x|}\rt) =\cF_1(x)+\cF_2(x)
\end{equation}
 is the sum of $\cF_1$ and $\cF_2$. 
Note that $x_0$ is a fixed point of both $\cF_1$ and $\cF_2$.

 \section{Fixed points}\label{sec:fixed}
 Next we study the fixed points of PAP and RAP. Our analysis does not
 extend to the case of SAP. 
 
{Following \cite{rpi} we consider  the the generalized AP (PAP) map
 \beq
 \cF_{u}(x):=\label{3.7.2}
  \lt[A\lt(b\odot  u \odot {A^*x \over |A^*x|}\rt)\rt]_\cX,\quad \cX=\IC^n,~\IR^n,~\IR^n_+
 \eeq
 where 
 \beq
\label{fix2}
u\in U,\quad u(j)=1,\quad \hbox{whenever}~A^*x_*(j)\neq 0.
\eeq
We call $x_*$ a {\bf fixed point} of AP if there exists 
\[
u\in U=\{
u=(u(i))\in \IC^N: |u(i)|=1,\,\, \forall i\}\
\] satisfying \eqref{fix2} and 
\beq
\label{fixed}
x_*=\cF_u(x_*),
\eeq
\cite{rpi}. 
 In other words, the definition \eqref{fixed} allows flexibility of phase
 where $A^*x_* $ vanishes.}
 
First we identify any limit point of the AP iterates  
with a fixed point of AP. \\

\begin{prop}\label{Cauchy} The AP iterates  $x^{(k)}=\cF^{k}(x^{(1)})$ with any starting point $x^{(1)}$, where $\cF$ is given by \eqref{3.11} or \eqref{3.12},  is bounded
and 
every limit point  is a fixed point of AP in the sense \eqref{fix2}-\eqref{fixed}.\\
\end{prop}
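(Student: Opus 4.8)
The plan is to run the standard alternating-minimization argument, using crucially that $A^*$ is isometric in both settings \eqref{3.11} and \eqref{3.12}. First I would establish boundedness directly: since $A^*$ is isometric, $(A^*)^\dagger=A$ with operator norm $1$, the phase vector $u^{(k)}:=A^*x^{(k)}/|A^*x^{(k)}|$ has every entry of modulus $1$, and $0\in\cX$ with $[\cdot]_\cX$ being $1$-Lipschitz; hence $\|x^{(k+1)}\|=\|[A(b\odot u^{(k)})]_\cX\|\le\|A(b\odot u^{(k)})\|\le\|b\|$ for every $k\ge1$, so $\{x^{(k)}\}$ is bounded.

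Next I would exploit the variational structure. Reading the map \eqref{3.11}/\eqref{3.12} as the alternating minimization \eqref{ERstep1}--\eqref{ERstep2} of the function $f$ in \eqref{3} and using \eqref{5}, one gets $F(x^{(k)})=f(x^{(k)},u^{(k)})\ge f(x^{(k+1)},u^{(k)})\ge f(x^{(k+1)},u^{(k+1)})=F(x^{(k+1)})$, so $F(x^{(k)})$ is non-increasing and, being bounded below, converges. Because $A^*$ is isometric, $x\mapsto f(x,u^{(k)})=\tfrac12\|A^*x-u^{(k)}\odot b\|^2$ is $1$-strongly convex on $\IC^n\cong\IR^{2n}$ and attains its minimum over the closed convex set $\cX$ at $x^{(k+1)}$, so the elementary strong-convexity estimate gives $f(x^{(k)},u^{(k)})-f(x^{(k+1)},u^{(k)})\ge\tfrac12\|x^{(k)}-x^{(k+1)}\|^2$; summing in $k$ then yields $\sum_k\|x^{(k)}-x^{(k+1)}\|^2<\infty$, and in particular $\|x^{(k+1)}-x^{(k)}\|\to0$.

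For the limit-point claim I would fix a convergent subsequence $x^{(k_j)}\to x_*$; by the vanishing increments $x^{(k_j+1)}\to x_*$ as well. The key point is to pass to the limit not in the discontinuous phases $u^{(k)}$ but in the data-constrained vectors $v^{(k)}:=b\odot u^{(k)}$, which all lie in the compact set $\cY=\{y:|y|=b\}$: after a further subsequence $v^{(k_j)}\to v_*\in\cY$, and since $x^{(k_j+1)}=[Av^{(k_j)}]_\cX$ while $A$ and $[\cdot]_\cX$ are continuous, $x_*=[Av_*]_\cX$. For each coordinate $j$ with $A^*x_*(j)\ne0$ one has $A^*x^{(k_j)}(j)\to A^*x_*(j)\ne0$, hence $v_*(j)=b(j)A^*x_*(j)/|A^*x_*(j)|$; for the remaining $j$ only $|v_*(j)|=b(j)$ is known. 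Defining $u_*\in U$ by $u_*(j)=1$ where $A^*x_*(j)\ne0$ and, where $A^*x_*(j)=0$, $u_*(j)=v_*(j)/b(j)$ if $b(j)>0$ and $u_*(j)=1$ if $b(j)=0$, one checks entrywise (using the convention $A^*x_*(j)/|A^*x_*(j)|=1$ on the zero set) that $b\odot u_*\odot A^*x_*/|A^*x_*|=v_*$ and that $u_*$ satisfies \eqref{fix2}; therefore $\cF_{u_*}(x_*)=[Av_*]_\cX=x_*$, i.e.\ $x_*$ is a fixed point in the sense \eqref{fix2}--\eqref{fixed}.

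The step I expect to be the main obstacle is this last one: the phase map $x\mapsto A^*x/|A^*x|$ is genuinely discontinuous at every point where some coordinate of $A^*x$ vanishes, so one cannot simply pass to the limit in $x^{(k+1)}=\cF(x^{(k)})$ and conclude $x_*=\cF(x_*)$ in the naive sense. This is precisely why the fixed-point notion \eqref{fix2}--\eqref{fixed} allows arbitrary unit phases on the zero set of $A^*x_*$; the device of working with $v^{(k)}=b\odot u^{(k)}\in\cY$ in place of $u^{(k)}$ is what converts compactness of the data sphere into a well-defined limiting phase. Everything else (the isometry computations, the strong-convexity bound, and the entrywise verification) is routine.
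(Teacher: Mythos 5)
Your proposal is correct and follows essentially the same route as the paper: boundedness from the isometry of $A^*$, monotone decrease of $F$ via the alternating-minimization inequalities, a sufficient-decrease bound giving $\|x^{(k+1)}-x^{(k)}\|\to 0$, and a compactness argument to assign limiting phases on the zero set of $A^*x_*$ so that limit points satisfy the generalized fixed-point condition \eqref{fix2}--\eqref{fixed}. The only (cosmetic) difference is that you obtain $F(x^{(k)})-F(x^{(k+1)})\ge\tfrac12\|x^{(k+1)}-x^{(k)}\|^2$ from the strong-convexity characterization of the constrained minimizer of $f(\cdot,u^{(k)})$, whereas the paper combines the convex projection theorem with the descent lemma; for this exactly quadratic $f$ with unit Hessian the two derivations coincide, and your handling of the discontinuous phase via $v^{(k)}=b\odot u^{(k)}\in\cY$ is just a slightly more explicit version of the paper's subsequence extraction on the unit torus.
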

\begin{proof}
Due to \eqref{5} , (\ref{ERstep1}) and (\ref{ERstep2}), \begin{equation} \label{11}
0\le F(x^{(k+1)})=f(x^{(k+1)},u^{(k+1)})\le f(x^{(k+1)},u^{(k)}) \le f(x^{(k)},u^{(k)})=F(x^{(k)}),\quad \forall k,\end{equation}
and hence 
 AP yields a non-increasing sequence
$\{F(x^{(k)})\}_{k=1}^\infty$.


For an isometric $A^*$,    \[
\nabla_x f(x,u) =x- A(u\odot b),
\]
and 
 \beqn
 \cF(x)&=& \lt[x- \nabla_x f(x,u)\rt]_\cX, \quad u={A^* x\over |A^* x|}. 
\eeqn
implying 
 \beq
 \label{12}
 x^{(k+1)}=[x^{(k)}-\nabla_x f(x^{(k)},u^{(k)})]_\cX. 
 \eeq
 
 Now by the convex projection theorem (Prop. B.11 of \cite{Np}).\begin{equation}\label{21}
\langle x^{(k)}-\nabla_x f(x^{(k)},u^{(k)})-x^{(k+1)}, x-x^{(k+1)} \rangle \le 0,  \quad\forall x\in \mathcal{X}
\end{equation}
Setting $x=x^{(k)}$ in  Eq.~(\ref{21}) we have
\begin{equation}\label{22}
\|x^{(k)}- x^{(k+1)}\|^2\le \langle \nabla_x f(x^{(k)},u^{(k)}), x^{(k)}- x^{(k+1)}\rangle.
\end{equation}
Furthermore, the descent lemma (Proposition~A.24, ~\cite{Np}) yields
\begin{equation}\label{20}
f( x^{(k+1)},u^{(k)})\le f(x^{(k)},u^{(k)})+ \langle  x^{(k+1)}-x^{(k)},  \nabla_x f(x^{(k)},u^{(k)})\rangle+\frac{1}{2} \| x^{(k+1)}-x^{(k)}\|^2.
\end{equation}

From  Eq.~(\ref{11}),    Eq.~(\ref{20}) and Eq.~(\ref{22}), we have
 \begin{eqnarray}\label{3.23}
F(x^{(k)})-F(x^{(k+1)})&\ge& f(x^{(k)},u^{(k)})-f(x^{(k+1)},u^{(k)})\\
&\ge&   \langle  x^{(k)}-x^{(k+1)}, \nabla_x f(x^{(k)},u^{(k)}) \rangle-\frac{1}{2} \| x^{(k+1)}-x^{(k)}\|^2\nn\\
& \ge& \frac{1}{2} \| x^{(k+1)}-x^{(k)}\|^2.\nn
\end{eqnarray}
As a nonnegative and non-increasing sequence, $\{F(x^{(k)})\}_{k=1}^\infty$ converges
and then \eqref{3.23} implies 
\beq
\label{23}
\lim_{k\to\infty}\|x^{(k+1)}-x^{(k)}\|=0. 
\eeq
By the definition of $x^{(k)}$ and the isometry of $A^*$, we have
\[\|x^{(k)}\|\le
\|A(b\odot u^{(k-1)})\|\le \|b\|,\]
and hence $\{x^{(k)}\}$ is bounded.
Let $\{x^{(k_j)}\}_{j=1}^\infty$ be a convergent subsequence and  $x_*$ its limit. 
Eq.  \eqref{23} implies that 
\[
\lim_{j\to\infty} x^{(k_j+1)}=x_*.
\]

If $A^* x_*$ vanishes nowhere, then $\cF$ is continuous at $x_*$.  
Passing to the limit in $\cF(x^{(k_j)})=x^{(k_j+1)}$
we get $\cF(x_*)=x_*$.  Namely, $x_*$ is a fixed point of $\cF$.

Suppose $a_l^* x_*=0$ for some $l$. By the compactness of the unit circle and 
further selecting a subsequence, still denoted by $\{x^{(k_j)}\}$, we have
\[
\lim_{j\to\infty} {A^* x^{(k_j)}\over |A^* x^{(k_j)}|}=A^* x_*\odot u
\]
for some $u\in U$ satisfying \eqref{fix2}. 
Now passing to the limit in $\cF(x^{(k_j)})=x^{(k_j+1)}$
we have \[
x_*=\cF_u(x_*)
\]   
implying that $x_*$ is a fixed point of AP.\commentout{
\[
\lt[ x^{(k_j)} -\nabla F( x^{(k_j)})\rt]_\cX=x^{(k_j+1)}.
 \]
we obtain 
\begin{equation}\lt[ x_* -\nabla F(  x_*)\rt]_\cX=x_*.\end{equation}
}
\end{proof}

 Since the true object is unknown, 
the following norm criterion is useful for distinguishing the phase retrieval solutions from the non-solutions among many coexisting fixed points. 
 
\begin{prop} \label{prop2.21} 
Let $\cF$ be the AP map \eqref{papf} with isometric $A^*$.  
If   a fixed point $x_*$ of AP in the sense \eqref{fix2}-\eqref{fixed} satisfies $\|x_*\|= \|b\|$, then
 $x_*$ is a phase retrieval solution  almost surely. 
On the other hand, if $x_*$ is not a phase retrieval solution, then $\|x_*\|<\|b\|$. 
\end{prop}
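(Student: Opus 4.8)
The plan is to set up a short chain of norm inequalities that collapses to a chain of equalities precisely when $\|x_*\|=\|b\|$, and then read off from each equality a structural fact about $x_*$, arriving at $|A^*x_*|=b$.

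First I would record the consequences of isometry: $AA^*=I$ forces $(A^*)^\dagger=A$, $\|A\|_{\rm op}=1$, and $\ker A=\mathrm{Ran}(A^*)^\perp$. Writing the fixed-point relation \eqref{fix2}--\eqref{fixed} as $x_*=[Av]_\cX$ with
\[
v:=b\odot u\odot \frac{A^*x_*}{|A^*x_*|},\qquad u\in U,\quad u(j)=1\ \text{whenever}\ (A^*x_*)(j)\neq0,
\]
a componentwise check (using the convention $0/|0|=1$) gives $|v|=b$, hence $\|v\|=\|b\|$. Since each admissible $\cX$ (namely $\IC^n$, $\IR^n$, $\IR^n_+$) contains $0$, the projection theorem yields $\|[z]_\cX\|\le\|z\|$ for all $z$. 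Stringing these together,
\[
\|x_*\|=\|[Av]_\cX\|\le\|Av\|\le\|A\|_{\rm op}\|v\|=\|v\|=\|b\|,
\]
which already gives the universal bound $\|x_*\|\le\|b\|$ for every fixed point of AP.

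Next I would treat the equality case $\|x_*\|=\|b\|$, which forces equality at every step above. From $\|Av\|=\|v\|$, decomposing $v$ along $\mathrm{Ran}(A^*)\oplus\ker A$ shows the $\ker A$ component vanishes, so $v=A^*(Av)$. Setting $w:=Av$, the equality $\|[w]_\cX\|=\|w\|$ together with the variational inequality for the projection onto $\cX\ni 0$ and Cauchy--Schwarz forces $[w]_\cX=w$; hence $x_*=Av\in\cX$ and $A^*x_*=A^*(Av)=v$. Substituting the definition of $v$ then gives $A^*x_*=b\odot u\odot A^*x_*/|A^*x_*|$, and comparing moduli componentwise (directly on indices where $(A^*x_*)(j)\neq0$, and concluding $b(j)=0$ on the rest) yields $|A^*x_*|=b$. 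With $x_*\in\cX$ this exhibits $x_*$ as a solution of \eqref{0}, and Proposition \ref{prop:unique} then identifies it with $e^{i\theta}x_0$ with probability one. For the second assertion I would simply take the contrapositive: a non-solution fixed point must have $\|x_*\|\neq\|b\|$, so the universal bound forces $\|x_*\|<\|b\|$.

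The only steps needing a little care are the passages from the two equality cases to ``$v\in\mathrm{Ran}(A^*)$'' and ``$[Av]_\cX=Av$''; both are routine once one uses $0\in\cX$ and that $A^*$ is an isometry onto its range, so I do not anticipate a genuine obstacle. The remaining subtlety is the bookkeeping at indices where $A^*x_*$ vanishes, which is exactly what the clause ``$u(j)=1$ whenever $(A^*x_*)(j)\neq0$'' and the convention $0/|0|=1$ are there to handle; under the measurement schemes \eqref{one}--\eqref{two} that index set is almost surely empty, which is presumably the source of the ``almost surely'' in the statement.
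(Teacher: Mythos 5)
Your argument is correct and follows essentially the same route as the paper's own proof: the identical chain $\|x_*\|=\|[Av]_\cX\|\le\|Av\|\le\|v\|=\|b\|$, with the second equality forcing $v\in\mathrm{Ran}(A^*)$ and the first forcing the projection onto $\cX$ to act as the identity, whence $A^*x_*=v$ and $|A^*x_*|=b$. Your extra care with $0\in\cX$ and with the indices where $A^*x_*$ vanishes is a harmless elaboration of what the paper leaves implicit.
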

\begin{rmk}
If the isometric $A^*$ is specifically given by \eqref{two} or  \eqref{one}, then we can identify  any fixed point  $x_*$ satisfying the norm criterion  $\|x_*\|=\|b\|$
with the unique phase retrieval solution $x_0$ in view of Proposition \ref{prop:unique}. 
\end{rmk}
\begin{proof} By  the convex projection theorem (Prop. B.11 of \cite{Np})
\beq
\label{3.14}
\|[v]_\cX\|\le \|v\|, \quad \forall v\in \IC^n
\eeq
where the equality holds if and only if $v\in \cX$.
Hence 
\beq
\label{317}\|x_*\|&=& \lt\| \lt[A\lt(\frac{A^* x_*}{|A^* x_*|}\odot b\odot u\rt)\rt]_\cX \rt\| \\
\nn&\le & \lt\| A\lt(\frac{A^* x_*}{|A^* x_*|}\odot b\odot u\rt) \rt\|\\
&\le & \lt\|\frac{A^* x_*}{|A^* x_*|}\odot b\odot u\rt\|=\|b\|.\nn
 \eeq
Clearly $\|x_*\|=\|b\|$ holds if and only if  both inequalities in Eq.~(\ref{317})  are equalities. The second inequality is an equality only when     \begin{equation}\label{11'}
 \frac{A^* x_*}{|A^* x_*|}\odot b\odot u= A^* z \quad \textrm{ for some $z \in \IC^n$}. 
 \end{equation} By Eq.~(\ref{3.14}) and (\ref{11'}) the first inequality in Eq.~(\ref{317}) becomes an equality only when $
 z\in \cX$. 
 
 Since $AA^*=I$ the fixed point equation \eqref{fixed} implies  $z=x_*$ 
 and    \[
 \frac{A^* x_*}{|A^* x_*|}\odot b\odot u=A^* x_*. 
 \]
Thus $b=|A^* x_*|$.
\end{proof}

 \section{Parallel AP}\label{sec:loc}


Define 
  \beq B_x&=&A\, \diag\lt[\frac{A^* x}{|A^*x|}\rt]\label{25}\\
  \label{B} \mathcal{B}_x&=&\left[
\begin{array}{c}
\Re(B_x)     \\
 \Im(B_x)
\end{array}
\right].\eeq
When $x=x_0$, we will drop the subscript $x$ and write simply $B$ and $\cB$.

\commentout{
From \eqref{51} we  have
\beq\label{53}
G(B_x^*u)=\lt[\begin{matrix}
\cB_x^\top G(u)\\
           \cB_x^\top G(-i u)
           \end{matrix}\rt],\quad u\in \IC^n. 
           \eeq
           }
           
Whenever $F(x)$ is differentiable at $x$, we have as before  
\beq
\label{4.5}\nabla F(x) &= &
{\partial F(x)\over \partial \Re(x)}+i {\partial F(x) \over \partial \Im(x)}\\
&=&A(A^* x-b\odot u),\quad u={A^*x\over |A^* x|}\nn
\eeq
and 
  \beq
  \nabla^2 F(x)\zeta &:=& \nabla \langle \nabla F(x),\zeta\rangle\\
 \nn &=& {\partial \langle \nabla F(x),\zeta\rangle\over \partial \Re(x)}+i {\partial \langle \nabla F(x),\zeta\rangle\over\partial  \Im(x)},\quad\forall \zeta\in \IC^n. 
    \eeq
 
\begin{prop}  \label{prop4.1} {Suppose $|a_j^*x|> 0$ for all $j\in J=\{i:b_i>0\}$ (i.e.  $F(x)$ is smooth at $x$ by Proposition \ref{C2}).  } For  all $\zeta \in \IC^n$, we have 
\begin{equation} \langle \nabla F(x),\zeta\rangle =\Re(x^{*}\zeta)-b^{\top} \Re(B_x^* \zeta),\label{28}\end{equation}
and \beq\label{5.6}
\langle \zeta, \nabla^2 F(x) \zeta\rangle &=&\|\zeta\|^2-\langle  \Im\left(  B_x^* \zeta \right), \rho_x\odot \Im\left(  B_x^* \zeta \right)\rangle\\
&=&\|\zeta\|^2-\langle     \mathcal{B}_x^\top G(-i\zeta) , \rho_x \odot \mathcal{B}_x^\top G(-i\zeta )\rangle\nn
\eeq
with
  \[
  \rho_x(j)=\lim_{\ep \to 0^+}\frac{b(j)}{\ep+|a_j^* x|},\quad j=1,\cdots,N. 
  \]


\end{prop}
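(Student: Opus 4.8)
The plan is to derive both identities from the gradient formula \eqref{4.5}, $\nabla F(x) = A(A^*x - b\odot u)$ with $u = A^*x/|A^*x|$, which holds wherever $F$ is differentiable. Under the hypothesis $|a_j^*x|>0$ for all $j\in J$, Proposition \ref{C2} places $x$ in an open set on which $F$ is $C^\infty$, so $u$ depends smoothly on $x$ there and every manipulation below is legitimate; throughout I use that $A^*$ is isometric, $AA^*=I$.

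For \eqref{28}, I would pair $\nabla F(x)$ with an arbitrary $\zeta\in\IC^n$: using $AA^*=I$, $\langle\nabla F(x),\zeta\rangle = \Re\big((A^*x-b\odot u)^*A^*\zeta\big) = \Re(x^*\zeta) - \Re\big((b\odot u)^*A^*\zeta\big)$. Since $b$ is real and $(B_x^*\zeta)(j)=\overline{u(j)}\,a_j^*\zeta$ by the definition of $B_x$, the last term equals $\sum_j b(j)\Re\big((B_x^*\zeta)(j)\big) = b^\top\Re(B_x^*\zeta)$. This step is routine.

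For the Hessian I would use smoothness to pass to the scalar second directional derivative, $\langle\zeta,\nabla^2F(x)\zeta\rangle = \frac{d^2}{d\tau^2}\big|_{\tau=0}F(x+\tau\zeta)$. By \eqref{3'} and isometry, $F(x+\tau\zeta)=\frac12\|x+\tau\zeta\|^2 - \sum_{j\in J}b(j)\,|a_j^*x+\tau a_j^*\zeta| + \frac12\|b\|^2$; the quadratic term yields $\|\zeta\|^2$ and the constant nothing. For each $j$, with $s_j=a_j^*x\neq 0$, $t_j=a_j^*\zeta$, and $h_j(\tau)=|s_j+\tau t_j|$, differentiating $h_jh_j'=\Re(\bar s_jt_j)+\tau|t_j|^2$ once more gives $(h_j')^2+h_jh_j''=|t_j|^2$, hence $h_j''(0)=\big(|t_j|^2-\Re(\bar s_jt_j)^2/|s_j|^2\big)/|s_j|$.

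The one step requiring care — and the main obstacle — is simplifying this bracket. Writing $\bar s_jt_j=|s_j|\,\overline{u(j)}\,t_j$ and invoking the Pythagorean identity $|t_j|^2=|\overline{u(j)}\,t_j|^2=\Re(\overline{u(j)}\,t_j)^2+\Im(\overline{u(j)}\,t_j)^2$ collapses the bracket to $\Im(\overline{u(j)}\,t_j)^2=\Im\big((B_x^*\zeta)(j)\big)^2$, so $h_j''(0)=\Im\big((B_x^*\zeta)(j)\big)^2/|a_j^*x|$. Since $b(j)=0$ off $J$ while $|a_j^*x|>0$ on $J$, one has $\rho_x(j)=b(j)/|a_j^*x|$ there, and summing yields the first line of \eqref{5.6}. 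For the second line I would check the real-linear identity $\Re(B_x^*v)=\Re(B_x)^\top\Re(v)+\Im(B_x)^\top\Im(v)=\mathcal{B}_x^\top G(v)$ by splitting $B_x$ into real and imaginary parts, then apply it with $v=-i\zeta$ together with $\Im(w)=\Re(-iw)$ to get $\Im(B_x^*\zeta)=\mathcal{B}_x^\top G(-i\zeta)$, and substitute.
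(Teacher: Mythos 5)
Your proposal is correct and follows essentially the same route as the paper: split $F$ into the quadratic part (whose Hessian is $I$ by isometry) plus the terms $-b(j)|a_j^*x|$, compute first and second directional derivatives of the modulus, and identify the resulting quantities with $\Re(B_x^*\zeta)(j)$ and $\Im(B_x^*\zeta)(j)$. Your derivation of $h_j''(0)$ by differentiating $h_jh_j'=\Re(\bar s_jt_j)+\tau|t_j|^2$ and the Pythagorean simplification is just a more explicit version of the second-derivative formula the paper states directly, and the handling of $\rho_x$ off $J$ and the identity $\Im(B_x^*\zeta)=\mathcal{B}_x^\top G(-i\zeta)$ both check out.
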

\begin{proof} Rewriting \eqref{3'} as  \beq\label{FF}
F(x)
&=&\frac{1}{2}\|  A^* x\|^2- \sum_{j\in J} f_j(x)+\frac{1}{2}\|b\|^2,\quad f_j(x):=b(j)|a_j^*x|,
\eeq
we analyze the derivative of each term on the right hand side of \eqref{FF}.

Since $AA^{*}=I$, the gradient and the Hessian of   $\|  A^* x\|^2/2$ are $x$ and  $I$, respectively.

For $f_j$,  we have   Taylor's expansion
\begin{equation}\label{5.8} f_j(x+\ep\zeta) =f_j(x)+\ep \langle \nabla f_j(x), \zeta\rangle + \frac{\ep^2}{2}  \langle \zeta, \nabla^2 f_j(x)\zeta\rangle  + O(\ep^3)
\end{equation}
where 
\begin{equation}
\langle \nabla f_j(x),\zeta\rangle =
 \frac{b(j)}{|a_j^* x|} \langle a^*_j x, a^*_j \zeta\rangle,\quad j\in J
\end{equation}
and
 \beq\label{5.9}
 \langle\zeta,  \nabla^2 f_j(x)\zeta  \rangle &=& \frac{b(j)}{|a_j^*x|}  \lt|{\Re(a_j^* x) \over |a_j^*x|} \Im( a_j^*\zeta)-{\Im( a_j^*x)\over|a_j^*x|} \Re (a_j^*\zeta)\rt|^2,\quad j\in J. 
\eeq
Observe  that
\beq\nn
 \langle {a^*_j x\over |a_j^* x|}, a^*_j \zeta\rangle=\Re(B^*_x\zeta)(j),\quad j\in J
\eeq
and 
\[
{\Re(a_j^* x) \over |a_j^*x|} \Im( a_j^*\zeta)-{\Im( a_j^*x)\over|a_j^*x|} \Re (a_j^*\zeta)=\Im(B^*_x\zeta)(j)=\cB^\top_x G(-i\zeta)(j), \quad j\in J
\]
which, together with \eqref{5.8} and \eqref{5.9},  yield the desired results \eqref{28} and \eqref{5.6}. 
\end{proof}


Next we investigate the conditions under which  $\nabla^ 2 F(x_0)$ is positive definite.

\subsection{Spectral gap}

 Let $\lambda_1\ge \lambda_2\ge \ldots\ge \lambda_{2n}\ge \lambda_{2n+1}=\cdots=\lambda_{N}=0$ be the singular values of $\mathcal{B}$ with the corresponding right singular vectors
$\{\bv_k\in \IR^{N}\}_{k=1}^{N} $ and left singular vectors $\{{\bu}_k\in \IR^{2n}\}_{k=1}^{2n}$.

%
%
%

\begin{prop}\label{prop4.2}
We have $\lambda_1=1, \lambda_{2n}=0$,  $\eta_1=|A^* x_0|$ and 
\[
\bu_{1}=G(x_0)=\lt[\begin{matrix} \Re(x_0)\\
\Im(x_0)\end{matrix}\rt],  \quad \bu_{2n}=G(-ix_0)=\lt[\begin{matrix} \Im(x_0)\\
-\Re(x_0)
\end{matrix}\rt]
\]
\end{prop}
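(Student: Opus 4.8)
The plan is to reduce everything to two identities about $B:=B_{x_0}=AD$, where $D:=\diag\big[A^*x_0/|A^*x_0|\big]$ is the unitary diagonal matrix of phases of $A^*x_0$ (the convention at vanishing entries being harmless). Since $A^*$ is isometric, $BB^*=ADD^*A^*=AA^*=I$. Moreover
\[
B^*x_0=D^*A^*x_0=|A^*x_0|=b\in\IR^N_{\ge0},\qquad Bb=A(Db)=AA^*x_0=x_0,
\]
both being immediate from $D^*(A^*x_0)=|A^*x_0|$ and $Db=(A^*x_0/|A^*x_0|)\odot|A^*x_0|=A^*x_0$. Finally, splitting $B^*u$ into real and imaginary parts, exactly as in the proof of Proposition \ref{prop4.1}, gives $\Re(B^*u)=\cB^\top G(u)$ and $\Im(B^*u)=\cB^\top G(-iu)$ for every $u\in\IC^n$.

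Taking $u=x_0$ and using that $B^*x_0=b\ge0$ is real, I get $\cB^\top G(x_0)=b=|A^*x_0|$ and $\cB^\top G(-ix_0)=0$. The second equality puts the nonzero vector $G(-ix_0)$ in the kernel of $\cB^\top$, so $\cB$ is rank-deficient and its least singular value is $\lambda_{2n}=0$, realized by the left singular vector $\bu_{2n}=G(-ix_0)$.

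For the top singular value, the key estimate is the Pythagorean identity: for all $\zeta\in\IC^n$,
\[
\|\cB^\top G(\zeta)\|^2+\|\cB^\top G(-i\zeta)\|^2=\|\Re(B^*\zeta)\|^2+\|\Im(B^*\zeta)\|^2=\|B^*\zeta\|^2=\zeta^*BB^*\zeta=\|\zeta\|^2 .
\]
Since $G$ is onto $\IR^{2n}$, this forces $\|\cB^\top v\|\le\|v\|$ for every $v$, i.e.\ $\lambda_1\le1$. Conversely, from $\cB^\top G(x_0)=b$ and $Bb=x_0$ one checks that $\bu_1:=G(x_0)$ and $\eta_1:=b=|A^*x_0|$ obey $\cB^\top\bu_1=\eta_1$ and $\cB\,\eta_1=\cB b=G(Bb)=G(x_0)=\bu_1$, with $\|\bu_1\|=\|x_0\|=\|A^*x_0\|=\|\eta_1\|$ by isometry; after the common rescaling by $\|x_0\|^{-1}$ this is a bona fide singular pair with singular value $1$. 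Hence $\lambda_1=1$, $\bu_1=G(x_0)$ and $\eta_1=|A^*x_0|$.

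I expect no genuine obstacle: once $B^*x_0=b$ and $Bb=x_0$ are in hand the rest is bookkeeping, and the one idea that makes the whole spectral picture transparent is recognizing that it is governed entirely by $BB^*=I$ together with the orthogonal symmetry $v\mapsto G(-iG^{-1}v)$. The only points needing a little care are keeping the $G(\cdot)$ versus $G(-i\,\cdot)$ roles straight in the Pythagorean step (so that $\|\cB^\top G(-i\zeta)\|^2$, not $\|\cB^\top G(\zeta)\|^2$, is the term vanishing at $\zeta=x_0$) and the purely cosmetic normalization built into the statement. As a sanity check, the Pythagorean identity says that $v\mapsto G(-iG^{-1}v)$ conjugates $\cB\cB^\top$ to $I-\cB\cB^\top$, so the eigenvalues of $\cB\cB^\top$ come in pairs $\mu,1-\mu$ with paired eigenvectors; this recovers $\lambda_{2n}=0$ directly from $\lambda_1=1$ through $G(x_0)\mapsto G(-ix_0)$.
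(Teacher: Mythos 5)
Your proof is correct and follows essentially the same route as the paper: the paper's entire argument is the computation $\Re(B^*x_0)=\cB^\top G(x_0)=|A^*x_0|$ and $\Im(B^*x_0)=\cB^\top G(-ix_0)=0$, followed by ``hence the results.'' You carry out exactly this computation and additionally supply the details the paper leaves implicit --- the bound $\lambda_1\le 1$ coming from $BB^*=I$ and the verification that $G(x_0)$, $|A^*x_0|$ form a genuine singular pair --- all of which are correct.
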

\begin{proof} Since
 \beq\nn
&& B^* x=\Om^* A^* x,\quad \Om= \diag\lt[\frac{A^* x_0}{|A^*x_0|}\rt]\eeq
we have 
\beq
& &\Re[B^* x_0]=\mathcal{B}^\top \xi_{1}=|A^* x_0|, \quad \Im[B^*x_0]=\mathcal{B}^\top \xi_{2n}=0  \label{56}\eeq
 and hence the results.
\end{proof}

\begin{prop}\label{cor5.5}
 \beq 
\label{63} \lambda_2&= &\max\{ \|\Im[B^* u]\|:  {u\in \IC^n,  iu\perp x_0}, \|u\|=1 \}\\
&=&\max \{\|\cB^\top u\|: {u\in \IR^{2n},  u\perp \xi_1}, \|u\|=1\}. \nn
 \eeq
\end{prop}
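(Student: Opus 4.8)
The plan is to compute $\lambda_2$ by the Courant--Fischer variational characterization applied to the symmetric matrix $\cB\cB^\top$ (equivalently, $\cB^\top\cB$), exploiting the fact that we already know the top singular triple from Proposition~\ref{prop4.2}. Recall $\lambda_1 = 1$ with right singular vector $\xi_1 = |A^* x_0|/\||A^*x_0|\|$ (up to normalization; here I will use $\xi_1$ as the unit vector proportional to $|A^*x_0|$), and that $\lambda_1^2$ is the largest eigenvalue of $\cB\cB^\top$. By the min-max theorem, the second singular value satisfies
\[
\lambda_2 = \max\{\|\cB^\top u\| : u \in \IR^{2n},\ \|u\|=1,\ u \perp \bu_1\},
\]
since $\lambda_2^2$ is the maximum of $\langle u, \cB\cB^\top u\rangle$ over unit vectors in the orthogonal complement of the top left singular vector $\bu_1 = G(x_0)/\|x_0\|$. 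This already gives the second line of \eqref{63} once I verify the orthogonality constraint is $u \perp \bu_1$, i.e. $u \perp G(x_0)$.

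Next I would translate this real-variable statement back into complex notation. Using the isomorphism $G$, a unit vector $u \in \IR^{2n}$ with $u \perp G(x_0)$ corresponds to a complex vector $v \in \IC^n$ with $\|v\|=1$ and $\langle v, x_0\rangle = \Re(v^* x_0) = 0$. Here I must be careful about which of the two real orthogonality conditions ($u\perp G(x_0)$ versus $u \perp G(-ix_0)$) translates to which complex condition: since $G(iv)$ is the image of $v$ rotated by $i$, and $\langle iv, x_0 \rangle = \Re(-i v^* x_0) = \Im(v^* x_0)$, the real constraint $u\perp G(x_0)$ reads $\Re(v^*x_0)=0$, which can equivalently be written as $iv \perp x_0$ in the paper's notation (since $\langle iv, x_0\rangle = \Im(v^*x_0)$... ) — I will need to reconcile this with the claimed form $iu\perp x_0$ in \eqref{63}, matching conventions with \eqref{56} where $\cB^\top \xi_{2n} = \Im[B^* x_0] = 0$ and $\bu_{2n} = G(-ix_0)$. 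The cleanest route: write $\|\cB^\top u\|^2 = \|\Re(B^* v)\|^2$ when $u = G(v)$, observe $\|\Re(B^*v)\|^2 + \|\Im(B^*v)\|^2 = \|B^* v\|^2 = \|v\|^2$ (by isometry of $A^*$ and the diagonal unitary $\Om$), and note that the maximum of $\|\Re(B^*v)\|^2$ over the relevant constrained set equals $1 - \min \|\Im(B^*v)\|^2$. Then I would identify the first expression in \eqref{63}, $\max\{\|\Im[B^*u]\|\}$, by instead applying min-max to get the \emph{small} singular value structure, or — more directly — by noting that swapping the roles of $\Re$ and $\Im$ corresponds to replacing $v$ by $iv$, which interchanges the constraint $\Re(v^*x_0)=0$ with $\Im(v^*x_0)=0$; hence $\max\{\|\Im(B^*v)\| : \Re(v^*x_0)=0, \|v\|=1\} = \max\{\|\Re(B^*w)\| : \Im(w^*x_0)=0,\|w\|=1\}$, and the latter is $\lambda_2$ by the computation just sketched.

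The main obstacle I anticipate is purely bookkeeping: getting the two real orthogonality conditions, their complex translations, and the $\Re$/$\Im$ swap all consistent with the specific sign and conjugation conventions fixed in \eqref{51}, \eqref{56}, and Proposition~\ref{prop4.2} — in particular making sure that the constraint written as ``$iu \perp x_0$'' in \eqref{63} is exactly the one picking out the orthogonal complement of $\bu_1 = G(x_0)$ (and not that of $\bu_{2n} = G(-ix_0)$, which would give a different, trivial answer since $\lambda_{2n}=0$). I would resolve this by carefully expanding $\langle G(iu), G(x_0)\rangle = \Re((iu)^* x_0) = \Re(-i u^* x_0) = \Im(u^* x_0)$ and $\langle G(u), G(x_0)\rangle = \Re(u^* x_0)$, then checking against \eqref{56} to confirm the pairing. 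The rest — the identity $\|\Re(B^*v)\|^2 + \|\Im(B^*v)\|^2 = \|v\|^2$ and the min-max invocation — is routine given isometry of $A^*$.
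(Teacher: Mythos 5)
Your overall route is the same as the paper's: identify $\xi_1=G(x_0)$ as the top left singular vector of $\cB$ (Proposition \ref{prop4.2}), invoke the Courant--Fischer characterization of $\lambda_2$ to obtain the second expression in \eqref{63}, and then translate between the real and complex descriptions. The first half is fine. The problem is the concrete chain you write for the translation. The constraint $iu\perp x_0$ means $\Re((iu)^*x_0)=\Im(u^*x_0)=0$, so the first expression in \eqref{63} is $\max\{\|\Im(B^*u)\|:\Im(u^*x_0)=0,\ \|u\|=1\}$. The rotation $u\mapsto iu$ swaps $\Re\leftrightarrow\pm\Im$ in the objective \emph{and} in the constraint simultaneously, so the identity you need is
\begin{equation*}
\max\{\|\Im(B^*u)\|:\Im(u^*x_0)=0\}=\max\{\|\Re(B^*w)\|:\Re(w^*x_0)=0\},
\end{equation*}
whose right-hand side is exactly the Courant--Fischer quantity you computed (constraint $G(w)\perp\xi_1$, objective $\|\cB^\top G(w)\|$). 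What you wrote instead is $\max\{\|\Im(B^*v)\|:\Re(v^*x_0)=0\}=\max\{\|\Re(B^*w)\|:\Im(w^*x_0)=0\}$. That identity is true, but it is not the one you need: the constraint $\Im(w^*x_0)=0$ translates to $G(w)\perp\xi_{2n}=G(-ix_0)$, the \emph{bottom} singular vector, and since $\xi_1$ lies in that orthogonal complement, both sides of your identity equal $\lambda_1=1$, not $\lambda_2$. Neither side is the expression in \eqref{63}, and the claim that ``the latter is $\lambda_2$ by the computation just sketched'' is false whenever $\lambda_2<1$, which is precisely the case of interest.

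This is exactly the bookkeeping trap you flagged, and your own expansion $\langle G(iu),G(x_0)\rangle=\Im(u^*x_0)$ at the end is the correct one; you just did not feed it back into the displayed identity. The paper's proof avoids the intermediate $\Re(B^*v)$ step entirely: since $\Im[B^*u]=\cB^\top G(-iu)$ and $iu\perp x_0$ is equivalent to $G(-iu)\perp G(x_0)=\xi_1$, the substitution $v=G(-iu)$ carries the first expression in \eqref{63} directly onto the second, and the result follows from Proposition \ref{prop4.2} in one line. Once you repair the constraint pairing, your argument reduces to the same thing.
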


\begin{proof}
Note that 
\[
\Im[B^* u]=\cB^\top G(-i u).
\]
The orthogonality condition $iu\perp x_0$ is equivalent to
\[
G(x_0)\perp G(-iu).
\]
Hence, by
 Proposition \ref{prop4.2},  $\bu_2 $ is the maximizer of the right hand side of \eqref{63}, yielding the desired value $\lambda_2$.

\end{proof}

We recall the spectral gap property, proved in \cite{DR-phasing}, that is a key to
local convergence of the one-pattern and the two-pattern case.

  \begin{prop}\label{cor5.2} \cite{DR-phasing}
Suppose $x_0\in \IC^n$ is rank $\ge$ 2. For $A^*$ given by  \eqref{two} with independently and continuously distributed mask phases,  we have $\lambda_2 <1$
 with probability one. 
 \end{prop}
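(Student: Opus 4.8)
The statement to prove is $\lambda_2<1$ with probability one for $A^*$ given by \eqref{two}. Since $\lambda_2$ is the largest singular value of $\cB^\top$ restricted to the orthogonal complement of $\xi_1=G(x_0)$, and $\lambda_1=1$ is attained only along $\xi_1$ (and along $\xi_{2n}=G(-ix_0)$ we get $\cB^\top\xi_{2n}=0$, so that direction is harmless), the claim is equivalent to saying that \emph{no} unit vector $u\perp\xi_1$ achieves $\|\cB^\top u\|=1$. Using the characterization in Proposition \ref{cor5.5}, the plan is to argue by contradiction: suppose there is $v\in\IC^n$ with $iv\perp x_0$, $\|v\|=1$, and $\|\Im[B^*v]\|=1$. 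Then I would show that equality forces $\Re[B^*v]=0$ as well — because $\|B^*v\|^2=\|\Re[B^*v]\|^2+\|\Im[B^*v]\|^2\le\|v\|^2=1$ (as $B$ has unit operator norm, $B$ being $A$ composed with a diagonal unitary, and $A$ isometric means $\|Bw\|\le\|w\|$; more precisely $BB^*=AA^*=I$ so $B^*$ is a partial isometry and $\|B^*v\|\le\|v\|$) — hence $\Re[B^*v]=0$. Writing out $B^*v=\Om^*A^*v$ with $\Om=\diag[A^*x_0/|A^*x_0|]$, the conditions $\Re[\Om^*A^*v]=0$ and $\Im[\Om^*A^*v]$ of full norm together with $iv\perp x_0$ say that $\Om^*A^*v = i\,(\text{real vector})$, i.e. $A^*v(j)/A^*x_0(j)$ is purely imaginary for every $j$.

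\textbf{Reduction to a structural identity.} So the heart of the matter is: if for all $j$ the ratio $(A^*v)(j)\big/(A^*x_0)(j)$ is purely imaginary, derive a contradiction with $x_0$ being rank $\ge 2$ and with the masks being generic. Concretely, write $A^*=c[\Phi\diag\{\mu_1\};\Phi\diag\{\mu_2\}]$. The condition becomes: for each frequency $w\in\cL$ and each mask $\ell\in\{1,2\}$, $\widehat{\mu_\ell x_0}(w)$ and $\widehat{\mu_\ell v}(w)$ have orthogonal phases in $\IC$, equivalently $\overline{\widehat{\mu_\ell x_0}(w)}\,\widehat{\mu_\ell v}(w)\in i\IR$, i.e. its real part vanishes for all $w$. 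Taking the inverse DFT of this real-part-zero condition turns it into a statement about the autocorrelation-type sequences: $\sum_{\mbm}\big(\mu_\ell(\mbm+\bn)x_0(\mbm+\bn)\overline{\mu_\ell(\mbm)v(\mbm)}+\overline{\mu_\ell(\mbm+\bn)v(\mbm+\bn)}\mu_\ell(\mbm)x_0(\mbm)\big)=0$ for all $\bn$ in the difference set. The plan is to view this, for fixed generic $x_0,v$, as polynomial equations in the mask phases $e^{i\phi(\bn)}$, and to show that the solution set has measure zero unless $v$ is a (complex) scalar multiple of $x_0$ — but $iv\perp x_0$ excludes $v\in\IR x_0$, and a genuinely complex multiple $v=ce^{i\theta}x_0$ would give $B^*v=e^{i\theta}B^*x_0$ whose real part is $\cos\theta\,|A^*x_0|\ne 0$, again a contradiction. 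So the only escape is a measure-zero event in the mask randomness.

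\textbf{Where the real work is, and how to shortcut it.} The genuinely delicate step is ruling out, almost surely over the mask phases, the existence of \emph{some} non-parallel $v$ satisfying the identity — this is a ``for all $v$'' statement hidden inside a ``there exists'' and requires care, since $v$ ranges over a continuum. The standard device, and the one I expect the paper to use (indeed this is exactly the kind of genericity argument behind Proposition \ref{prop:unique} and the uniqueness theory of \cite{unique}), is to reduce to a finite-dimensional real-algebraic variety: the bad set $\{(\phi,v): \text{identity holds}, v\not\parallel x_0, \|v\|=1\}$ is a real algebraic set, its projection to $\phi$-space is semi-algebraic, and one shows this projection is a proper subvariety by exhibiting a single mask realization for which no bad $v$ exists (e.g. perturb off a convenient deterministic mask, or use that for rank $\ge 2$ objects the Fourier phase retrieval solution is a.s. unique so the only coincidences $|A^*v|=|A^*x_0|$ come from global phase — and our identity is strictly stronger than $|A^*v|=|A^*x_0|$). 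Then a proper real-algebraic subset of $\IR^{|\cM|}$ (the torus of phases) has Lebesgue measure zero, and since the $\phi(\bn)$ are independent with continuous distributions, it has probability zero. The main obstacle is thus making the ``reduce to an algebraic variety and find one good point'' argument airtight; everything else — the operator-norm bound on $B^*$, the equality analysis, the translation to autocorrelation identities — is routine. Since Proposition \ref{cor5.2} is quoted from \cite{DR-phasing}, I would in fact just cite that paper for this genericity step and present only the reduction showing why $\lambda_2=1$ forces the purely-imaginary-ratio structural identity, which is the conceptual content needed here.
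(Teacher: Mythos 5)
The paper itself offers no proof of this proposition: it is imported verbatim from \cite{DR-phasing}, so the ``reference proof'' here is simply that citation, and your plan to present only the reduction and cite \cite{DR-phasing} for the genericity step is therefore consistent with what the paper does. Your reduction is correct and is indeed the first half of the argument in \cite{DR-phasing}: since $A^*$ is isometric and $\Om$ is a diagonal unitary, $\|B^*v\|=\|v\|$ exactly (not merely $\le$), so a maximizer $v$ with $\|\Im[B^*v]\|=\|v\|=1$ must satisfy $\Re[B^*v]=0$; because $A^*x_0$ vanishes nowhere almost surely, this says $(A^*v)(j)/(A^*x_0)(j)\in i\IR$ for every $j$, and inverting the oversampled DFT converts this into the symmetrized cross-correlation identity you wrote down. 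One concrete error in your sketch of the genericity step: the parenthetical claim that the identity is ``strictly stronger than $|A^*v|=|A^*x_0|$'' is false. The condition $\Re\big[\overline{A^*x_0}\odot A^*v\big]=0$ is a pointwise orthogonality of the two data vectors, not an equality of moduli (e.g.\ $A^*v=2i\,A^*x_0$ satisfies it), so the uniqueness theorem cannot be invoked that way; what the identity actually yields is $|A^*(x_0+tv)|=|A^*(x_0-tv)|$ for all $t$, and, as you yourself observe, Proposition \ref{prop:unique} holds almost surely for a \emph{fixed} object and cannot be applied to the mask-dependent object $x_0+tv$. The route actually taken in \cite{DR-phasing} is more hands-on than your variety-projection sketch: the independence and continuity of the mask phases force the coefficients $v(\mbm+\bn)\overline{x_0(\mbm)}+x_0(\mbm+\bn)\overline{v(\mbm)}$ in the cross-correlation identity to vanish term by term, and the rank-$\ge 2$ hypothesis then forces $v\in i\IR x_0$, which is incompatible with $iv\perp x_0$ and $\|v\|=1$. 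With the incorrect parenthetical removed and the genericity step delegated to \cite{DR-phasing}, your write-up is sound.
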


\begin{prop} \label{prop4.5} Let  \begin{equation} \label{38} \lambda_2(x)=
 \max\{ \|\Im (B_x^* u)\|: {u\in \IC^n, \langle u, x\rangle=0, \|u\|=1
 }\}.\end{equation}
 Let $\gamma$
be a convex combination of $x$ and $x_0$ with $\langle x_0, x\rangle >0$.
 Then
\begin{equation}\| \Im (B_\gamma^* (x-x_0))\|\le \lambda_2(\gamma)   \|x-x_0\|  .\end{equation}
\end{prop}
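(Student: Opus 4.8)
The plan is to reduce the bound to the definition \eqref{38} of $\lambda_2(\gamma)$ by splitting $x-x_0$ orthogonally along $\gamma$ and discarding the component parallel to $\gamma$, for which the key fact is that $\Im(B_\gamma^*\gamma)=0$. First I would record, exactly as in the proof of Proposition~\ref{prop4.2}, the elementary identity $B_\gamma^*\gamma=|A^*\gamma|$: componentwise, $(B_\gamma^*\gamma)(j)=\overline{(A^*\gamma/|A^*\gamma|)(j)}\,(A^*\gamma)(j)=|(A^*\gamma)(j)|$, which holds even at the vanishing components thanks to the convention $z/|z|=1$ where $z=0$. In particular $\Im(B_\gamma^*\gamma)=0$. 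I would also note that the hypotheses force $\gamma\neq 0$: if $\gamma=(1-s)x+s x_0=0$ for some $s\in[0,1]$, then either $x=0$, or $x_0=0$, or $x$ and $x_0$ are antiparallel, each of which contradicts $\langle x_0,x\rangle>0$.

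Next I would decompose $x-x_0$ with respect to the real inner product $\langle u,v\rangle=\Re(u^*v)$ (equivalently, in the $\IR^{2n}$ picture through $G$):
\[
x-x_0=c\,\gamma+v,\qquad c=\frac{\langle x-x_0,\gamma\rangle}{\|\gamma\|^2}\in\IR,\qquad \langle v,\gamma\rangle=0 .
\]
The Pythagorean identity gives $\|x-x_0\|^2=c^2\|\gamma\|^2+\|v\|^2$, hence $\|v\|\le\|x-x_0\|$. Since $c$ is real, linearity of $B_\gamma^*$ and of $\Im(\cdot)$ yields $\Im\bigl(B_\gamma^*(x-x_0)\bigr)=c\,\Im(B_\gamma^*\gamma)+\Im(B_\gamma^*v)=\Im(B_\gamma^*v)$, using $\Im(B_\gamma^*\gamma)=0$. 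Because $v\perp\gamma$ with $\|v\|\le\|x-x_0\|$, applying \eqref{38} to the unit vector $v/\|v\|$ (or trivially if $v=0$) gives $\|\Im(B_\gamma^*v)\|\le\lambda_2(\gamma)\|v\|\le\lambda_2(\gamma)\|x-x_0\|$, which is the asserted inequality.

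This is essentially a one-step orthogonal-projection estimate, so I do not expect a genuine obstacle. The only point needing a little care is that the scalar $c$ in the decomposition must be \emph{real}, so that $\Im(c\,B_\gamma^*\gamma)=c\,\Im(B_\gamma^*\gamma)=0$; this is why the splitting has to be taken with respect to the real inner product rather than the Hermitian one. Everything else (the nonvanishing of $\gamma$, the norm contraction under orthogonal projection) is routine once the hypothesis $\langle x_0,x\rangle>0$ and the convex-combination structure of $\gamma$ are used.
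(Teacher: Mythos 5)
Your proof is correct and follows essentially the same route as the paper's: both arguments hinge on $\Im(B_\gamma^*\gamma)=\Im(|A^*\gamma|)=0$, split off the (real-coefficient) $\gamma$-component, and apply the definition of $\lambda_2(\gamma)$ to the orthogonal remainder, whose norm is controlled by $\|x-x_0\|$ via Pythagoras. The only cosmetic difference is that you decompose $x-x_0$ directly while the paper decomposes $x$ and $x_0$ separately and subtracts; your explicit remarks that $\gamma\neq 0$ and that the coefficient is real are worthwhile clarifications of points the paper leaves implicit.
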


\begin{proof}
   Since $\langle x_0, x\rangle >0$, 
\begin{equation}
c_1:=\|\gamma\|^{-2} \langle\gamma, x_0\rangle >0,\quad c_2:=\|\gamma\|^{-2}\langle \gamma, x\rangle >0
\end{equation}
and we can write the orthogonal decomposition 
\begin{equation}
x_0=c_1 \gamma+\gamma_1,\quad
x=c_2 \gamma+\gamma_2
\end{equation}
with some vectors $\gamma_1,\gamma_2$ satisfying  $\langle\gamma_1, \gamma\rangle =\langle \gamma_2, \gamma\rangle =0$.

 By \eqref{25}, 
 \[
 \Im(B_\gamma^* \gamma)=\Im(|A^*\gamma|)=0
 \]
 and hence 
\beqn
\Im(B_\gamma^* (x-x_0)) &=& \Im(B_\gamma^*(\gamma_2-\gamma_1))
\eeqn
from which it follows that 
\beqn
\|x-x_0\|^{-1}\|\Im(B_\gamma^* (x-x_0))\|&\le& \| \gamma_2-\gamma_1 \|^{-1}\|\Im(B_{\gamma}^*( \gamma_2-\gamma_1))\|\le \lambda_2(\gamma)
\eeqn
by the definition \eqref{38}. 
\end{proof}

\subsection{Local convergence}
We state the local convergence theorem for arbitrary isometric $A^*$, not necessarily given by the Fourier measurement. 
\begin{thm} \label{thm1} (PAP) For any isometric $A^*$, let $b=|A^* x_0|$ and
$\cF$ be given by \eqref{3.11}. 
Suppose $\lambda_2<1$ where $\lambda_2$ is given by \eqref{63}. 

For any given $0<\ep< 1-\lamb_2^2$, if $x^{(1)}$ is sufficiently close to $x_0$ then with probability one the AP iterates $x^{(k+1)}=\cF^k(x^{(1)})$ converge to $x_0$  geometrically after global phase adjustment, i.e.  
\begin{equation}\label{41'}
{\|\alpha^{(k+1)}x^{(k+1)}- x_0\|}\le (\lambda^2_2+\ep) {\|\alpha^{(k)}x^{(k)}- x_0\|},\quad \forall k
\end{equation}
where
 $\alpha^{(k)}:=\hbox{\rm arg}\min_{\alpha} \{ \|\alpha x^{(k)} -x_0\| :  |\alpha|=1\}$. 
\end{thm}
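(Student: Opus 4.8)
The plan is to establish a local contraction estimate for the AP map $\cF$ in a neighborhood of $x_0$ by combining the smoothness of $F$ near $x_0$ (Proposition \ref{C2}), the projected-gradient representation \eqref{gp} of $\cF$, and the curvature information encoded in the Hessian formula \eqref{5.6}. Since the iteration is only defined up to a global phase, I would work throughout with the phase-adjusted iterates $\tilde x^{(k)}:=\alpha^{(k)}x^{(k)}$, noting that $\cF$ commutes with multiplication by a unimodular constant, so it suffices to bound $\|\cF(\tilde x^{(k)})-x_0\|$ in terms of $\|\tilde x^{(k)}-x_0\|$ after re-adjusting the phase of $\cF(\tilde x^{(k)})$; the re-adjustment only decreases the error, so \eqref{41'} follows from the unadjusted bound.

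First I would fix a neighborhood $\{x:\|x-x_0\|<\min_{j\in J}b(j)\}$ on which $F$ is $C^\infty$ (Proposition \ref{C2}) and on which, by continuity of $x\mapsto\nabla^2F(x)$ and of $\lambda_2(x)$ (from \eqref{38}), we have $\lambda_2(x)^2\le\lambda_2^2+\ep/2$ and the Hessian obeys, via \eqref{5.6} together with Proposition \ref{prop4.5}, an estimate of the form $\langle\zeta,\nabla^2F(\gamma)\zeta\rangle\ge(1-\lambda_2(\gamma)^2)\|\zeta\|^2$ for $\zeta$ in the relevant direction $x-x_0$ and $\gamma$ on the segment between $x$ and $x_0$ (using that $\Im(B_\gamma^*\zeta)=\cB_\gamma^\top G(-i\zeta)$ and the variational characterization of $\lambda_2(\gamma)$, after checking the orthogonality hypothesis $\langle\gamma,\zeta\rangle$ is handled by the convex-combination argument of Proposition \ref{prop4.5}). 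The key algebraic step is then: since $x_0$ is a fixed point with $\nabla F(x_0)=0$ (because $|A^*x_0|=b$), the projected-gradient form \eqref{gp} with $AA^*=I$ gives $\cF(x)-x_0=[x-\nabla F(x)]_\cX-[x_0-\nabla F(x_0)]_\cX$; using nonexpansiveness of $[\cdot]_\cX$ (convex projection) this is bounded by $\|(x-x_0)-(\nabla F(x)-\nabla F(x_0))\|$, and by the fundamental theorem of calculus $\nabla F(x)-\nabla F(x_0)=\big(\int_0^1\nabla^2F(x_0+s(x-x_0))\,ds\big)(x-x_0)$. So with $\zeta=x-x_0$ and $M=\int_0^1\nabla^2F(\gamma_s)\,ds$ I need $\|(I-M)\zeta\|\le(\lambda_2^2+\ep)\|\zeta\|$, which follows once I show $0\preceq I-M\preceq(\lambda_2^2+\ep)I$ on the span of $\zeta$ — equivalently $\lambda_2(\gamma_s)^2\le\lambda_2^2+\ep$ (upper bound on $I-M$) and $M\preceq I$ i.e. the correction term in \eqref{5.6} is nonnegative (lower bound), both of which hold in the chosen neighborhood.

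The main obstacle I anticipate is the orthogonality bookkeeping needed to invoke $\lambda_2(\gamma)$ rather than the full operator norm of $\cB_\gamma^\top$: the bound $\|\cB_\gamma^\top G(-i\zeta)\|\le\lambda_2(\gamma)\|\zeta\|$ requires $G(-i\zeta)\perp\xi_1(\gamma)$, i.e. $i\zeta\perp\gamma$, which is exactly what Proposition \ref{prop4.5} delivers via the decomposition $x_0=c_1\gamma+\gamma_1$, $x=c_2\gamma+\gamma_2$ — but this needs $\langle x_0,x\rangle>0$, so I must first argue that once $x^{(1)}$ (after phase adjustment) is close enough to $x_0$, every subsequent phase-adjusted iterate stays in a ball small enough that $\langle x_0,\tilde x^{(k)}\rangle>0$ and $\|\tilde x^{(k)}-x_0\|<\min_j b(j)$; this is an induction that rides on the contraction \eqref{41'} itself, so the estimate and the invariance of the neighborhood must be proved simultaneously. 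A secondary technical point is handling points where $A^*x$ nearly vanishes in some coordinate (so $\rho_x$ blows up): this is precisely excluded by staying within the smoothness neighborhood of Proposition \ref{C2}, so I would make the radius of the working ball at most $\min_{j\in J}b(j)$ from the outset and note $b$ vanishes nowhere almost surely for the Fourier measurements, giving the "with probability one" in the statement.
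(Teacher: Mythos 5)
Your outline tracks the paper's proof closely — phase re-adjustment reduces \eqref{41'} to an unadjusted bound, the mean value theorem applied to $g(x)=x-\nabla F(x)$, the Hessian formula of Proposition \ref{prop4.1}, the orthogonality supplied by Proposition \ref{prop4.5} via $\langle \alpha^{(k)}x^{(k)},x_0\rangle>0$, and a continuity-plus-induction closing step. But there is one step that fails as you have written it, and it is the crux of the contraction estimate: you claim that $\|(I-M)\zeta\|\le(\lambda_2^2+\ep)\|\zeta\|$ "follows once I show $0\preceq I-M\preceq(\lambda_2^2+\ep)I$ on the span of $\zeta$." A two-sided quadratic-form bound evaluated only at the single vector $\zeta$ does not control the norm of the image vector $(I-M)\zeta$, because $(I-M)\zeta$ need not lie in ${\rm span}(\zeta)$; a symmetric PSD matrix $S$ can satisfy $\langle\zeta,S\zeta\rangle\le c\|\zeta\|^2$ while $\|S\zeta\|>c\|\zeta\|$ (take $S={\rm diag}(0,2c)$ and $\zeta=(1,1)/\sqrt2$). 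Since Proposition \ref{prop4.1} only records the quadratic form of $\nabla^2F$, something more is needed to get a vector-norm estimate.

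The paper supplies exactly that missing ingredient: from the computation behind Proposition \ref{prop4.1} one has the vector identity $(I-\nabla^2F(\gamma))\zeta\cong\mathcal{B}_{\gamma}\bigl(\rho_{\gamma}\odot\mathcal{B}_{\gamma}^{\top}G(-i\zeta)\bigr)$ in the $\IR^{2n}$ coordinates, which is then split into a main term with weight ${\bf 1}_J$ and a perturbation with weight $\rho_{\gamma}-{\bf 1}_J$. The main term $\mathcal{B}_{\gamma}\mathcal{B}_{\gamma}^{\top}G(-i\zeta)$ is bounded by $\lambda_2^2(\gamma)\|\zeta\|$ (the paper's \eqref{43.1}) because $\mathcal{B}_{\gamma}\mathcal{B}_{\gamma}^{\top}$ is symmetric, the subspace orthogonal to the leading singular vector $\xi_1(\gamma)$ is invariant under it, and its operator norm on that whole invariant subspace is $\lambda_2^2(\gamma)$; the orthogonality $G(-i\zeta)\perp\xi_1(\gamma)$ is what Proposition \ref{prop4.5} provides. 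The perturbation is absorbed into $\ep/2$ via $\sup_t\|\rho_{\gamma(t)}-{\bf 1}_J\|_{\infty}\le\ep/2$ near $x_0$ — note this splitting is not optional, since $\mathcal{B}_{\gamma}\,{\rm diag}(\rho_{\gamma})\,\mathcal{B}_{\gamma}^{\top}$ does not itself leave $\xi_1(\gamma)^{\perp}$ invariant. Your proposal already contains the correct one-sided bound $\|\mathcal{B}_{\gamma}^{\top}G(-i\zeta)\|\le\lambda_2(\gamma)\|\zeta\|$ in the "main obstacle" paragraph, so the repair is local: replace the quadratic-form sandwich on ${\rm span}(\zeta)$ by the operator bound on the invariant subspace $\xi_1(\gamma)^{\perp}$, applied to the explicit vector $\mathcal{B}_{\gamma}\mathcal{B}_{\gamma}^{\top}G(-i\zeta)$, and handle the $\rho_{\gamma}$ weights separately. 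The remaining bookkeeping (invariance of the working ball, $\langle x_0,\tilde x^{(k)}\rangle>0$, the a.s. nonvanishing of $b$) is handled correctly in your outline.
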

\begin{proof} By Proposition \ref{C2} and the projected gradient formulation \eqref{gp},  we have
\beq
\cF(x)&=& x-\nabla F(x). \nn
\eeq

From the definition of $\alpha^{(k+1)}$, we have 
\beq
\label{41}\|\alpha^{(k+1)}x^{(k+1)}-x_0\|
&\le& \| \alpha^{(k)} x^{(k+1)}-x_0\| \\
&\le &  \| \alpha^{(k)} x^{(k)}-\nabla F(\alpha^{(k)} x^{(k)})-x_0+\nabla F(x_0)\| \nn
\eeq

Let $ g(x)=x-\nabla F(x)$ and $\gamma(t)=x_0+t(x-x_0)$.  
By the  mean value theorem, 
\begin{equation}\label{42.1}
g(x)-g(x_0)=\int_{0}^{1}\left[I-\nabla^{2} F(\gamma(t))\right] (x-x_0)dt
\end{equation}
and hence with $x=\alpha^{(k)}x^{(k)}$
the right hand side of \eqref{41} equals 
\beqn
&&\|\int_{0}^{1}(I-\nabla^2F(\gamma(t)))(\alpha^{(k)}x^{(k)}-x_0)dt\|\\
&=&\|
\int_{0}^{1}\mathcal{B}_{\gamma(t)} (\rho_{\gamma(t)} \odot \mathcal{B}_{\gamma(t)}^{\top} G(-i(\alpha^{(k)}x^{(k)}- x_0)))dt
\|
\eeqn
by Proposition \ref{prop4.1}, and  is  bounded  by 
\beq\label{49'}
&& \|
\int_{0}^{1}\mathcal{B}_{\gamma(t)}\Big( (\rho_{\gamma(t)}-{\bf 1}_J) \odot \mathcal{B}_{\gamma(t)}^{\top} G(-i(\alpha^{(k)}x^{(k)}-x_0))\Big)dt
\|\\
&&
+\|
\int_{0}^{1}\mathcal{B}_{\gamma(t)} {\bf 1}_J\odot \mathcal{B}_{\gamma(t)}^{\top} G(-i(\alpha^{(k)}x^{(k)}- x_0))dt\| \nn
\eeq
where ${\bf 1}_J$ is the indicator of $J=\{j: b_j>0\}$. 

Since 
\[
\|\alpha x^{(k)} -x_0\|^2=\|x^{(k)}\|^2+\|x_0\|^2-2\langle \alpha x^{(k)},x_0\rangle,
\]
we have 
\[
\langle \alpha^{(k)} x^{(k)},x_0\rangle>0,\quad \forall k
\]
and hence, by Proposition \ref{prop4.5}, 
\beq
\label{43.1}\|
\cB_\gamma  \cB_\gamma^* G(-i(\alpha^{(k)}x^{(k)}-x_0))
\|\le \lambda^2_2(\gamma) \| \alpha^{(k)}x^{(k)}-x_0
\|\eeq so we can bound \eqref{49'} by 
\beqn
\lt(\sup_{t\in(0,1)}\|\rho_{\gamma(t)} -{\bf 1}_J\|_{\infty}+ 
\sup_{t\in (0,1)}  \lambda^2_2(\gamma(t))\rt)\| \alpha^{(k)}x^{(k)}-x_0\|.
\eeqn

For any $\ep>0$, if $x^{(1)}$ is sufficiently close to $x_0$, then by continuity   \begin{equation} \sup_{t\in (0,1)}  \lambda^2_2(\gamma(t))\leq \lambda^2_2+\ep/2,\quad \sup_{t\in(0,1)}\|\rho_{\gamma(t)} -{\bf 1}_J\|_{\infty}\leq \ep/2,  \end{equation} 
and we have from above estimate 
\[
\|\alpha^{(2)}x^{(2)}-x_0\|\leq (\lamb^2_2+\ep)  \|\alpha^{(1)}x^{(1)}-x_0\|.\]
By induction, we have
\[
\|\alpha^{(k+1)}x^{(k+1)}-x_0\|\leq (\lamb^2_2+\ep)  \|\alpha^{(k)}x^{(k)}-x_0\|\]
from which \eqref{41'} follows.

\end{proof}

\section{Real-constrained AP}\label{sec:one-pattern}

In the case of $x_0, x\in \IR^n$ (or $\IR^n_+$), we adopt the new definition
\beq \label{63'}\tilde  \lambda_2&:=&
 \max\{\|\Im (B^*) u\|:{u\in \IR^n, \langle u, x_0\rangle=0, \|u\|=1}\}
 \eeq
 which differs  from the definition  \eqref{38} of  $\lambda_2$  in that $u$ has all real components. Clearly we have $\tilde\lambda_2\leq \lambda_2$ of the one-pattern case.
  
From the isometry property of $B^*$ and that $u\in \IR^n$, it follows that 
 \beq
\tilde \lambda^2_2
 &=&1-\min\{\|\Re(B^*)u\|^2:{u\in \IR^n, \langle u,x_0\rangle=0,\|u\|=1}\}. \label{48.1} 
 \eeq
 By Proposition \ref{prop4.2} and $x_0\in \IR^n$, 
 \[
 \xi_1=\lt[\begin{matrix} x_0\\ 0
 \end{matrix}\rt]
 \]
 and hence $x_0$ is the leading singular vector of $\Re(B^*)$ over $\IR^n$.  Therefore, we can remove the condition  $ \langle u,x_0\rangle=0$   in \eqref{48.1} and write 
 \beq\label{49.1}
\tilde  \lambda^2_2 &=&1-\min_{u\in \IR^n\atop\|u\|=1}\|\Re(B^*)u\|^2\\
&=&\max_{u\in \IR^n\atop\|u\|=1}\|\Im(B^*)u\|^2\nn\\
&=&\|\Im(B^*)\|^2. \nn
\eeq
 
 The spectral gap property $\lambda_2<1$  holds even with just one coded diffraction pattern
 for any complex object.
 \begin{prop} \label{prop4.8} \cite{DR-phasing} Let $x_0\in \IC^n$  be rank $\ge$ 2. For $A^*$ given by \eqref{one} with independently and continuously distributed mask phases, 
 \beqn
\lambda_2= \max\{\|\Im[B_l^* u]\|: {u\in \IC^n,  iu\perp x_0}, \|u\|=1\}<1
\eeqn 
and hence $\tilde\lambda_2<1$  with probability one. 
\end{prop}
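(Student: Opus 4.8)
The plan is to pin down exactly when equality $\lambda_2=1$ can occur and to show this is a probability-zero event. Recall from the proof of Proposition \ref{prop4.2} that $B^*=\Om^*A^*$ with $\Om$ a diagonal unitary matrix; since $A^*$ is isometric, $B^*$ is an isometry, so $\|B^*u\|=\|u\|$ for all $u$. Hence on the feasible set of \eqref{63} one has $\|\Im(B^*u)\|^2=1-\|\Re(B^*u)\|^2\le 1$, and (the constraint set being compact) $\lambda_2<1$ \emph{unless} there is a unit vector $u$ with $iu\perp x_0$ and $\Re(B^*u)=0$. Since $b=|A^*x_0|$ vanishes nowhere almost surely, $\Re(B^*u)=0$ is equivalent to the pointwise relations $\Re\big(\overline{(A^*x_0)(j)}\,(A^*u)(j)\big)=0$ for every $j$. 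So it suffices to show that, almost surely over the random mask $\mu$, no nonzero $u$ with $iu\perp x_0$ satisfies these relations.

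To exploit the coded-DFT structure \eqref{one}, I would pass to $z$-transforms of the masked objects. Set $\tilde x_0=\mu\odot x_0$, $\tilde u=\mu\odot u$ (both supported in $\cM$), and let $\tilde X_0(z)=\sum_{m\in\cM}\tilde x_0(m)z^{-m}$, $\tilde U(z)=\sum_{m\in\cM}\tilde u(m)z^{-m}$; then $A^*x_0$ and $A^*u$ are, up to the constant $c$, the values of $\tilde X_0$ and $\tilde U$ at the oversampling nodes on the $d$-torus indexed by $\cL$. On the torus $\overline{\tilde X_0(z)}=\tilde X_0^*(z)$, where $P^*$ denotes the Laurent polynomial obtained from $P$ by conjugating its coefficients and replacing $z$ by $z^{-1}$, so the pointwise relations say that the Laurent polynomial $\tilde X_0^*\,\tilde U+\tilde X_0\,\tilde U^*$ vanishes at every node of $\cL$. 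This polynomial is supported on the enlarged grid $\widetilde\cM$, whose cardinality equals $|\cL|$, and $\cL$ is a complete set of DFT nodes for $\widetilde\cM$ (this is the standard oversampling set up in Section \ref{sec:not}); hence it vanishes identically. Equivalently $\tilde X_0^*\,\tilde U=-\tilde X_0\,\tilde U^*$, i.e.\ the nonzero rational function $Q:=\tilde U/\tilde X_0$ satisfies $Q=-Q^*$ and is therefore purely imaginary on the torus.

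Now the rank hypothesis and the randomness of $\mu$ enter. Since $x_0$ is rank $\ge 2$, the convex hull of $\mathrm{supp}(x_0)$ is at least two-dimensional, and by the genericity machinery behind Proposition \ref{prop:unique} (see \cite{unique}), almost surely $\tilde X_0$ is irreducible and $A^*x_0$ does not have globally constant phase (the latter excludes the degenerate alternative $\tilde X_0\mid\tilde X_0^*$). Then $\tilde X_0^*\,\tilde U=-\tilde X_0\,\tilde U^*$ together with irreducibility of $\tilde X_0$ forces $\tilde X_0\mid\tilde U$; since $\tilde U$ and $\tilde X_0$ share the same support box $\cM$, the rigidity of supports under multiplication (again as in the proof of Proposition \ref{prop:unique}) forces the quotient $Q$ to be a constant, and $Q=-Q^*$ then forces $Q=i\beta$ for some $\beta\in\IR$. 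Hence $\tilde u=i\beta\tilde x_0$, so $u=i\beta x_0$ and $iu=-\beta x_0$; the constraint $iu\perp x_0$ yields $\beta\|x_0\|^2=0$, so $u=0$, contradicting $\|u\|=1$. Therefore $\lambda_2<1$ almost surely, and since the maximum defining $\tilde\lambda_2$ in \eqref{63'} runs over the smaller set of real vectors $u$, we also get $\tilde\lambda_2\le\lambda_2<1$ almost surely.

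I expect the main obstacle to be the third step: importing from \cite{unique} the almost-sure irreducibility of the random-phase masked transform $\tilde X_0$ under the rank-$\ge 2$ hypothesis, together with the support-rigidity statement that an exact polynomial factorization $\tilde U=Q\tilde X_0$ respecting the box $\cM$ can have only a constant cofactor. The remaining ingredients — the isometry reduction, the oversampling/DFT-invertibility argument, and the concluding algebraic contradiction — are routine.
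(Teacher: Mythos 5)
This proposition is stated in the paper without proof; it is imported wholesale from \cite{DR-phasing}, so there is no in-paper argument to compare against line by line. Your argument is, in substance, the one that underlies the cited result: the isometry of $B^*=\Om^*A^*$ reduces $\lambda_2=1$ to the existence of a unit $u$ with $iu\perp x_0$ and $\Re(\overline{A^*x_0}\odot A^*u)=0$; the oversampling identity $|\widetilde{\cM}|=|\cL|$ upgrades the pointwise relations to the polynomial identity $\tilde X_0^*\tilde U+\tilde X_0\tilde U^*=0$; and almost-sure irreducibility of the masked polynomial plus support rigidity force $u\in i\IR x_0$, which the orthogonality constraint annihilates. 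The proof in \cite{DR-phasing} packages the endgame through the observation $|A^*(x_0+u)|=|A^*(x_0-u)|$ and the ambiguity classification of \cite{unique} for a single coded pattern, but the algebra being invoked is the same, so I would call this the same approach rather than a genuinely different one.

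Two caveats. First, the support-rigidity step (``$\tilde U$ and $\tilde X_0$ share the same support box $\cM$, hence the quotient is constant'') silently requires $\cM$ to be the \emph{tight} bounding box of $\mathrm{supp}(x_0)$. If $\mathrm{supp}(x_0)+\bk\subseteq\cM$ and $\mathrm{supp}(x_0)-\bk\subseteq\cM$ for some $\bk\neq 0$, then $Q(z)=\alpha z^{-\bk}-\bar{\alpha}z^{\bk}$ satisfies $Q=-Q^*$, and $\tilde U=Q\tilde X_0$ yields a vector in the kernel of $u\mapsto\Re(\overline{A^*x_0}\odot A^*u)$ that is not a real multiple of $ix_0$; projecting out $ix_0$ then gives a feasible unit vector and $\lambda_2=1$. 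So the tight-box convention (standard in \cite{unique} and \cite{DR-phasing}) must be made explicit, otherwise the unconstrained $\lambda_2$ statement is literally false. Second, the degenerate alternative to be excluded is $\tilde X_0\mid\tilde X_0^*$, i.e.\ conjugate symmetry $\overline{\tilde x_0(\bk-\bn)}=c\,\tilde x_0(\bn)$ of the masked object, which is not the same condition as ``$A^*x_0$ has globally constant phase''; it is nonetheless a probability-zero event for a rank-$\ge 2$ object under a continuous random phase mask, so the conclusion stands. The final inequality $\tilde\lambda_2\le\lambda_2$ is immediate, as you say, since real $u$ with $u^\top x_0=0$ satisfy $iu\perp x_0$.
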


Following  {\em verbatim} the proof  of
Proposition \ref{prop4.5}, we have the similar result.

\begin{prop} \label{prop4.5'} Let  $x_0, x\in \IR^n$ (or $\IR^n_+$) with $\langle x_0, x\rangle >0$.  Let $\gamma$
be a convex combination of $x$ and $x_0$.
 Then
\begin{equation}\| \Im (B_\gamma^* (x-x_0))\|\le \tilde \lambda_2(\gamma)   \|x-x_0\|\end{equation}
where
\[
\tilde \lambda_2(\gamma) :=\max\{\|\Im (B_\gamma^*) u\|: {u\in \IR^n, \langle u, \gamma\rangle=0,\|u\|=1
 }\}.
 \]
\end{prop}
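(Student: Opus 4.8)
The statement to prove, Proposition \ref{prop4.5'}, is the real-variable analogue of Proposition \ref{prop4.5}, and the author has already told us the proof is \emph{verbatim} the same. So my plan is essentially to transcribe the proof of Proposition \ref{prop4.5} with every complex inner product replaced by the real one and with all vectors constrained to $\IR^n$ (or $\IR^n_+$). Let me lay out the steps.

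\begin{proof}[Proof proposal]
The plan is to mimic the proof of Proposition \ref{prop4.5} line by line, working entirely over $\IR^n$. First I would use the hypothesis $\langle x_0, x\rangle > 0$ to conclude that the coefficients
\[
c_1 := \|\gamma\|^{-2}\langle \gamma, x_0\rangle, \qquad c_2 := \|\gamma\|^{-2}\langle \gamma, x\rangle
\]
are both strictly positive; here one needs that $\gamma$, being a convex combination $\gamma = (1-t)x_0 + tx$ of two real vectors with positive mutual inner product, also has $\langle \gamma, x_0\rangle > 0$ and $\langle \gamma, x\rangle > 0$ (expand and use $\langle x_0,x\rangle>0$ together with $\|x_0\|^2,\|x\|^2 \geq 0$; if one of $x_0,x$ were zero the hypothesis $\langle x_0,x\rangle>0$ would fail, so both are nonzero and $\gamma\neq 0$). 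Then write the orthogonal decompositions
\[
x_0 = c_1\gamma + \gamma_1, \qquad x = c_2\gamma + \gamma_2,
\]
with $\gamma_1,\gamma_2 \in \IR^n$ satisfying $\langle \gamma_1,\gamma\rangle = \langle \gamma_2,\gamma\rangle = 0$.

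Next, the key cancellation: by the definition \eqref{25} of $B_\gamma$,
\[
\Im(B_\gamma^* \gamma) = \Im(|A^*\gamma|) = 0,
\]
so that, by linearity,
\[
\Im\big(B_\gamma^*(x - x_0)\big) = \Im\big(B_\gamma^*(c_2\gamma + \gamma_2 - c_1\gamma - \gamma_1)\big) = \Im\big(B_\gamma^*(\gamma_2 - \gamma_1)\big),
\]
where in the last step the term $(c_2-c_1)\Im(B_\gamma^*\gamma) = 0$ drops out. Note that $\gamma_2 - \gamma_1 \in \IR^n$ and $\langle \gamma_2 - \gamma_1, \gamma\rangle = 0$, so $\gamma_2-\gamma_1$ is a legitimate competitor in the maximization defining $\tilde\lambda_2(\gamma)$ in the statement. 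Hence
\[
\frac{\|\Im(B_\gamma^*(x-x_0))\|}{\|x - x_0\|} = \frac{\|\Im(B_\gamma^*(\gamma_2-\gamma_1))\|}{\|x-x_0\|} \le \frac{\|\Im(B_\gamma^*(\gamma_2-\gamma_1))\|}{\|\gamma_2-\gamma_1\|} \le \tilde\lambda_2(\gamma),
\]
where the middle inequality uses $\|\gamma_2 - \gamma_1\| \le \|x - x_0\|$, which follows because $x - x_0 = (c_2-c_1)\gamma + (\gamma_2-\gamma_1)$ is an orthogonal decomposition (the $\gamma$-component is orthogonal to $\gamma_2-\gamma_1$), so $\|x-x_0\|^2 = (c_2-c_1)^2\|\gamma\|^2 + \|\gamma_2-\gamma_1\|^2 \ge \|\gamma_2-\gamma_1\|^2$. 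This gives the claimed bound.

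The only real point needing care, and the one I would flag as the main (minor) obstacle, is verifying that everything stays within $\IR^n$ so that $\gamma_2 - \gamma_1$ is admissible for the \emph{real} maximization defining $\tilde\lambda_2(\gamma)$; this is immediate since $x, x_0, \gamma$ are all real, the orthogonal projections onto $\operatorname{span}\{\gamma\}$ and its complement preserve $\IR^n$, and the real inner product is the one used throughout. For the $\IR^n_+$ case, the decomposition vectors $\gamma_1,\gamma_2$ need not be nonnegative, but that is fine: the definition of $\tilde\lambda_2(\gamma)$ only restricts the maximizing $u$ to $\IR^n$ (with $\langle u,\gamma\rangle=0$, $\|u\|=1$), not to $\IR^n_+$, so $\gamma_2-\gamma_1$ is still admissible and the argument goes through unchanged.
\end{proof}
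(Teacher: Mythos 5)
Your proposal is correct and follows essentially the same route as the paper, which simply declares the proof to be \emph{verbatim} that of Proposition \ref{prop4.5}: positivity of $c_1,c_2$, the orthogonal decomposition along $\gamma$, the cancellation $\Im(B_\gamma^*\gamma)=0$, and the comparison $\|\gamma_2-\gamma_1\|\le\|x-x_0\|$ against the maximization defining $\tilde\lambda_2(\gamma)$. Your added remarks (that all vectors remain in $\IR^n$ so the real maximization applies, and that no nonnegativity of $\gamma_2-\gamma_1$ is needed in the $\IR^n_+$ case) are exactly the points the paper leaves implicit.
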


 
 The following convergence theorem is analogous to Theorem \ref{thm1}. 
  \begin{thm} \label{thm2} (RAP) 
For any isometric $A^*$, let $b=|A^* x_0|$ and
$\cF$ be given by \eqref{3.12}. 
Suppose $\tilde\lambda_2<1$ where $\tilde\lambda_2$ is given by \eqref{63'}.  
  
 For any given $0<\ep<1-\tilde\lamb_2^2$, if $x^{(1)}$ is sufficiently close to $x_0$ then with probability one the AP iterates $x^{(k+1)}=\cF^{k}(x^{(1)})$ converge to $x_0$  geometrically after global phase adjustment, i.e.  
\begin{equation}\label{41.2}
{\|\alpha^{(k+1)}x^{(k+1)}- x_0\|}\le (\tilde \lambda^2_2+\ep) {\|\alpha^{(k)}x^{(k)}- x_0\|},\quad \forall k
\end{equation}
where
 $\alpha^{(k)}:=\hbox{\rm arg}\min_{\alpha=\pm 1} \{ \|\alpha x^{(k)} -x_0\| \}$ and $\alpha^{(k)}=1$ if $x_0\in\IR^n_+$. 
\end{thm}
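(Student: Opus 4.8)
The plan is to mirror the proof of Theorem \ref{thm1} almost \emph{verbatim}, substituting the real-constrained objects for their complex counterparts and being careful about the one extra subtlety introduced by the projection $[\cdot]_\cX$ onto $\IR^n$ (or $\IR^n_+$). The starting observation is that, in a small enough neighborhood of $x_0$ on which $F$ is smooth (Proposition \ref{C2}), the RAP map \eqref{3.12} is exactly the projected gradient step $\cF(x)=[x-\nabla F(x)]_\cX$. Since $\cX$ is a closed convex set containing $x_0$, the projection $[\cdot]_\cX$ is nonexpansive and fixes $x_0$, so
\[
\|\cF(x)-x_0\|=\|[x-\nabla F(x)]_\cX-[x_0-\nabla F(x_0)]_\cX\|\le \|(x-\nabla F(x))-(x_0-\nabla F(x_0))\|,
\]
using $\nabla F(x_0)=0$. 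This reduces everything to bounding $\|g(x)-g(x_0)\|$ with $g(x)=x-\nabla F(x)$, exactly as in \eqref{41}--\eqref{42.1} of the proof of Theorem \ref{thm1}.

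Next I would invoke the mean value theorem along the segment $\gamma(t)=x_0+t(x-x_0)$ to write $g(x)-g(x_0)=\int_0^1 (I-\nabla^2F(\gamma(t)))(x-x_0)\,dt$, then use the Hessian formula \eqref{5.6} from Proposition \ref{prop4.1} to express $I-\nabla^2F(\gamma(t))$ in terms of $\cB_{\gamma(t)}$ and $\rho_{\gamma(t)}$. Splitting off the $\rho_{\gamma(t)}-{\bf 1}_J$ term as in \eqref{49'}, the main term is $\|\cB_\gamma\cB_\gamma^\top G(-i(x-x_0))\|$. Here is where the real constraint enters favorably: when $x,x_0\in\IR^n$ (and taking $\alpha^{(k)}=\pm1$ so that $\alpha^{(k)}x^{(k)}-x_0\in\IR^n$ with $\langle x_0,\alpha^{(k)}x^{(k)}\rangle>0$), the difference vector $x-x_0$ lies in $\IR^n$, and one applies Proposition \ref{prop4.5'} (the real-constrained analogue of Proposition \ref{prop4.5}, proved \emph{verbatim} in the same way) to get
\[
\|\cB_{\gamma(t)}\cB_{\gamma(t)}^\top G(-i(x-x_0))\|\le \tilde\lambda_2^2(\gamma(t))\,\|x-x_0\|.
\]
Combining, $\|\cF(x)-x_0\|\le\big(\sup_t\|\rho_{\gamma(t)}-{\bf 1}_J\|_\infty+\sup_t\tilde\lambda_2^2(\gamma(t))\big)\|x-x_0\|$.

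To close the argument I would use continuity of $\tilde\lambda_2(\cdot)$ and of $\rho$: given $\ep\in(0,1-\tilde\lambda_2^2)$, if $x^{(1)}$ is close enough to $x_0$ then $\sup_{t}\tilde\lambda_2^2(\gamma(t))\le\tilde\lambda_2^2+\ep/2$ and $\sup_t\|\rho_{\gamma(t)}-{\bf 1}_J\|_\infty\le\ep/2$ on the first segment, yielding $\|\alpha^{(2)}x^{(2)}-x_0\|\le(\tilde\lambda_2^2+\ep)\|\alpha^{(1)}x^{(1)}-x_0\|$; since the contraction factor is $<1$ the iterate stays in the neighborhood and one induces on $k$ to obtain \eqref{41.2}. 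The spectral gap hypothesis $\tilde\lambda_2<1$ is supplied with probability one for the one-pattern Fourier measurement by Proposition \ref{prop4.8}. The one point needing slightly more care than in Theorem \ref{thm1} is the $\IR^n_+$ case: there the global phase is pinned to $\alpha^{(k)}=1$, and one must check that the projection onto $\IR^n_+$ is still nonexpansive (it is, being projection onto a closed convex cone) and fixes $x_0$ — so the reduction to bounding $\|g(x)-g(x_0)\|$ goes through unchanged. I expect the main (minor) obstacle to be bookkeeping the interaction of the $[\cdot]_\cX$ projection with the mean-value-theorem step — but nonexpansiveness handles it cleanly, so there is no essential new difficulty beyond Theorem \ref{thm1}.
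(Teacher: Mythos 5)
Your proposal is correct and follows essentially the same route as the paper: reduce to the unconstrained estimate via nonexpansiveness of the convex projection $[\cdot]_\cX$ (together with its commutation with $\alpha^{(k)}=\pm 1$, trivial for $\IR^n_+$ where $\alpha^{(k)}=1$), then repeat the proof of Theorem \ref{thm1} from the mean-value step onward with Proposition \ref{prop4.5'} supplying $\tilde\lambda_2$ in place of $\lambda_2$. The paper's proof does exactly this, stating that the remainder follows \emph{verbatim} from \eqref{42.1} onward.
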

\begin{proof}

From the definition of $\alpha^{(k+1)}$, we have 
\beq
&&\|\alpha^{(k+1)}x^{(k+1)}-x_0\|
\le \| \alpha^{(k)} x^{(k+1)}-x_0\| \label{41.1}
\eeq

Recalling \eqref{gp}, we write 
 \begin{eqnarray*}
x^{(k+1)}=\lt[x^{(k)}-\nabla F(x^{(k)})\rt]_\cX. 
\end{eqnarray*}
By the properties of  linear projection,
 \beq
\alpha^{(k)}x^{(k+1)}
&=& \lt[\alpha^{(k)}x^{(k)}-\nabla F(\alpha^{(k)}x^{(k)})\rt]_\cX \eeq
and hence the right hand side of \eqref{41.1} equals 
\beq
\nn &&\|[ \alpha^{(k)}x^{(k)}-\nabla F(\alpha^{(k)} x^{(k)}) ]_\cX- [x_0-\nabla F(x_0)]_\cX\|\\
&\le& \| \alpha^{(k)} x^{(k)}-\nabla F(\alpha^{(k)} x^{(k)})-x_0+\nabla F(x_0)\|. 
\eeq

The rest of the proof follows {\em verbatim} that of Theorem \ref{thm1} from \eqref{42.1} onward, except with $\lambda_2$ replaced by $\tilde \lambda_2$. 
\end{proof}

%
%
%
%
%
%
%
%
%
%
%
%
%
%
%
%
%

\section{Serial AP}\label{sec:SAP}
To build on the theory of PAP, we 
assume, as for two coded diffraction patterns, $A= [A_1,A_2]$ where
$A_l^* \in \IC^{N/2\times n}$ are isometric
and let $b_l=|A_l^* x_0|\in \IR^{N/2}$.

By applying Theorem \ref{thm1} separately to $\cF_1$ and $\cF_2$, we get the following bound
\begin{equation}\label{58.2}
{\|\alpha^{(k+1)}x^{(k+1)}- x_0\|}\le ((\lambda_2^{(2)}\lambda^{(1)}_2)^2+\ep) {\|\alpha^{(k)}x^{(k)}- x_0\|},\quad \forall k,
\end{equation}
where \beqn
\lambda^{(l)}_2= \max\{\|\Im[B_l^* u]\|: {u\in \IC^n,  iu\perp x_0}, \|u\|=1\}, \quad B_l=A_l\diag\lt\{A_l^*x_0\over |A^*_l x_0|\rt\},
\eeqn
$  l=1,2.$
But we can do better. 

 Similar to the calculation in Proposition \ref{prop4.1}, the derivative $d\cF_l$ of $\cF_l$ in the notation of \eqref{51}, \eqref{25},\eqref{B} can be expressed as 
\beqn
G(d\cF_l \xi)&=&G(iB_l\Im(B_l^* \xi))\\
&=&\lt[\begin{matrix}
-\Im(B_l)\\
\Re(B_l)\end{matrix}\rt] \cB_l^\top  G(-i\xi),\quad \forall\xi\in \IC^n.
\eeqn
Equivalently, we have
\beqn
G(-id\cF_l \xi)&=&\cB_l\cB_l^{\rm T}  G(-i\xi),\quad \forall\xi\in \IC^n.
\eeqn
Hence, by the isomorphism  $\IC^{n}\cong \IR^{2n}$
via $G(-i\xi)$, we can represent the action of $d\cF_l$ on $\IR^{2n}$ by the real matrix
\beq
\cB_l\cB_l^\top=\lt[\begin{matrix}
\Re(B_l)\\
\Im(B_l)\end{matrix}\rt] \lt[
\Re(B_l^\top) \,\, \Im(B^\top_l)\rt] \label{59.2}
\eeq
and the action of $d(\cF_2\cF_1)$ by
\beqn
\cD:=\cB_2\cB^\top_2 \cB_1\cB^\top_1.\eeqn
Define 
\beq\label{60.2}
\|\cD\|_\perp&:=&\max \{\|\cD \xi\|: {\xi\in \IR^{2n}, \xi\perp \xi_1},\|\xi\|=1\}.
\eeq

We have the following bound. 
\begin{prop}\label{prop4.11}
\beq
\|\cD\|_\perp 
&\leq & (\lambda_2^{(2)}\lambda^{(1)}_2)^2\nn.
\eeq
\end{prop}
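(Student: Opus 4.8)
The plan is to estimate $\|\cD\|_\perp$ by factoring the action of $\cD=\cB_2\cB_2^\top\cB_1\cB_1^\top$ through the subspace orthogonal to $\xi_1$ and bounding each factor $\cB_l\cB_l^\top$ separately on that subspace. The key observation is that, in the representation of $d\cF_l$ on $\IR^{2n}$ via the isomorphism $\IC^n\cong\IR^{2n}$ through $G(-i\xi)$, each operator $\cB_l\cB_l^\top$ is a symmetric positive semidefinite matrix of norm $\le 1$ (since $B_l^*$ is a partial isometry), with leading eigenvector $\xi_1=G(x_0)$ and eigenvalue $1$ (Proposition \ref{prop4.2}), and second singular value exactly $\lambda_2^{(l)}$ by the variational characterization analogous to \eqref{63} (Proposition \ref{cor5.5}) applied to the single-mask matrix $A_l^*$.

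First I would record the two facts that make everything work: (i) $\cB_l\cB_l^\top$ fixes $\xi_1$, i.e. $\cB_l\cB_l^\top\xi_1=\xi_1$, because $\Re(B_l^*x_0)=|A_l^*x_0|=b_l$ and $\Im(B_l^*x_0)=0$, which by the formula \eqref{59.2} gives $\cB_l\cB_l^\top G(-ix_0)=G(-ix_0)$; and (ii) consequently $\cB_l\cB_l^\top$ maps $\xi_1^\perp$ into itself (by symmetry of $\cB_l\cB_l^\top$ and invariance of $\xi_1$), with operator norm on $\xi_1^\perp$ equal to the square of the second singular value of $\cB_l$, namely $(\lambda_2^{(l)})^2$ — this is exactly \eqref{63'}-type identity: $\max\{\|\cB_l^\top u\|:u\perp\xi_1,\|u\|=1\}=\lambda_2^{(l)}$ combined with $\cB_l\cB_l^\top$ being the square. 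Then for any $\xi\perp\xi_1$ with $\|\xi\|=1$, I would write $\cD\xi=\cB_2\cB_2^\top(\cB_1\cB_1^\top\xi)$; since $\cB_1\cB_1^\top\xi\in\xi_1^\perp$ with $\|\cB_1\cB_1^\top\xi\|\le(\lambda_2^{(1)})^2$, and then $\cB_2\cB_2^\top$ acting on this vector in $\xi_1^\perp$ contributes another factor $(\lambda_2^{(2)})^2$, giving $\|\cD\xi\|\le(\lambda_2^{(2)})^2(\lambda_2^{(1)})^2=(\lambda_2^{(2)}\lambda_2^{(1)})^2$. Taking the supremum over such $\xi$ yields the claim.

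The main obstacle is step (ii): one must verify carefully that the restriction of the symmetric operator $\cB_l\cB_l^\top$ to the invariant subspace $\xi_1^\perp$ has norm $(\lambda_2^{(l)})^2$ and not merely $\le\lambda_1^2=1$. This requires identifying $\xi_1=G(x_0)$ as precisely the top singular vector of $\cB_l$ over the relevant real subspace, so that the remaining singular values of $\cB_l$ are $\lambda_2^{(l)}\ge\lambda_3^{(l)}\ge\cdots$; this is the single-mask analogue of Proposition \ref{prop4.2} and Proposition \ref{cor5.5}, which hold verbatim with $A^*$ replaced by $A_l^*$ since $A_l^*$ is itself isometric and $x_0$ is a fixed point of $\cF_l$. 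A secondary (minor) subtlety is that $\cB_l\cB_l^\top$ as written in \eqref{59.2} is manifestly symmetric positive semidefinite, so its operator norm equals its largest eigenvalue and its restriction to the orthogonal complement of the top eigenvector is controlled by the second eigenvalue — no non-normality issues arise, which is why the two factors multiply cleanly rather than only submultiplicatively in a lossy way.
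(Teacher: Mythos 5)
Your proposal is correct and follows essentially the same route as the paper: both rest on the facts that $\xi_1=G(x_0)$ is fixed by each $\cB_l\cB_l^\top$, that $\xi_1^\perp$ is therefore invariant, and that the norm of $\cB_l\cB_l^\top$ restricted to $\xi_1^\perp$ is $(\lambda_2^{(l)})^2$, after which the two factors are chained. One small slip: at the end of your point (i) the fixed-vector identity should read $\cB_l\cB_l^\top G(x_0)=G(x_0)$ rather than $\cB_l\cB_l^\top G(-ix_0)=G(-ix_0)$ (indeed $\cB_l^\top G(-ix_0)=\Im(B_l^*x_0)=0$), consistent with $\xi_1=G(x_0)$ as you state at the outset.
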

\begin{rmk}
By Proposition \ref{prop4.8}, $\lamb_2^{(l)}<1, l=1,2,$ and hence $\|\cD\|_\perp<1$. \end{rmk}
\begin{proof}
Since $\xi_1=G(x_0)$ is the fixed point for both $\cB_1\cB^\top_1$ and $\cB_2\cB_2^\top$,
the set $\{\xi\in \IR^{2n}: \xi\perp\xi_1\}$ is invariant under both. 
 Hence, by the calculation 
\beqn
\|\cB_2\cB^\top_2 \cB_1\cB^\top_1\xi\|&=&\|\cB_2\cB_2^\top \xi'\|,\quad \xi'=\cB_1\cB_1^\top \xi\\
&\leq&( \lamb_2^{(2)})^2 \|\xi'\|\\
&\leq&( \lamb_2^{(2)})^2 ( \lamb_2^{(1)})^2 \|\xi\|
  \eeqn
  the proof   is complete.  

\end{proof}

 We now prove the local convergence of SAP. 
\begin{thm} \label{thm3} (SAP) For any isometric $A^*$, let $b=|A^* x_0|$ and
$\cF$ be given by \eqref{3.13}. 
Suppose $\|\cD\|_\bot<1$ where $\|\cD\|_\bot$ is given by \eqref{60.2}.

For any given $0<\ep<1-\|\cD\|_\perp$, if $x^{(1)}$ is sufficiently close to $x_0$ then with probability one the AP iterates $x^{(k+1)}=\cF^{k}(x^{(1)})$ converge to $x_0$  geometrically after global phase adjustment, i.e.  
\begin{equation}\label{41.3}
{\|\alpha^{(k+1)}x^{(k+1)}- x_0\|}\le (\|\cD\|_\perp+\ep) {\|\alpha^{(k)}x^{(k)}- x_0\|},\quad \forall k
\end{equation}
where
 $\alpha^{(k)}:=\hbox{\rm arg}\min_{\alpha} \{ \|\alpha x^{(k)} -x_0\| :  |\alpha|=1\}$. 
\end{thm}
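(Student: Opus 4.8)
Here is the proof strategy.

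The plan is to follow the template of the proof of Theorem~\ref{thm1}, with the composite linearization $d(\cF_2\circ\cF_1)(x_0)$ playing the role that $I-\nabla^2 F(x_0)$ plays there. By the computation preceding \eqref{59.2}, under the real-linear isometry $\zeta\mapsto G(-i\zeta)$ of $\IC^n$ onto $\IR^{2n}$ the map $d\cF_l(x_0)$ is represented by $\cB_l\cB_l^\top$, hence $d(\cF_2\circ\cF_1)(x_0)$ by $\cD=\cB_2\cB_2^\top\cB_1\cB_1^\top$. First I would record smoothness near $x_0$: applying Proposition~\ref{C2} to each $A_l^*$, the map $\cF_l(x)=A_l(b_l\odot A_l^*x/|A_l^*x|)$ is $C^\infty$ on the ball $\|x-x_0\|<\min_j b_l(j)$; since $\cF_1(x_0)=x_0$ and $\cF_1$ is continuous, there is a neighborhood $\cN$ of $x_0$ which $\cF_1$ carries into the smoothness domain of $\cF_2$, so $\cF=\cF_2\circ\cF_1$ is $C^\infty$ on $\cN$. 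Each $\cF_l$, and hence $\cF$, is phase-covariant: $\cF(cx)=c\cF(x)$ for $|c|=1$.

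Analogously to \eqref{41}, the optimality of $\alpha^{(k+1)}$ together with the phase-covariance of $\cF$ reduces \eqref{41.3} to the one-step bound $\|\cF(x)-x_0\|\le(\|\cD\|_\perp+\ep)\|x-x_0\|$ for $x=\alpha^{(k)}x^{(k)}$ in a sufficiently small (convex) neighborhood of $x_0$, the phase of $x$ being chosen so that $x^*x_0\ge 0$ is real. Since $x^*x_0$ is real, $\langle -i(x-x_0),x_0\rangle=-\Im\bigl((x-x_0)^*x_0\bigr)=0$, so the vector $v:=G(-i(x-x_0))$ satisfies $v\perp\xi_1=G(x_0)$ with $\|v\|=\|x-x_0\|$. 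Writing $\gamma(t)=x_0+t(x-x_0)$ and using $\cF(x_0)=x_0$ and the $C^1$ smoothness of $\cF$ on the segment, the fundamental theorem of calculus gives
\[
\cF(x)-x_0=\int_0^1 d\cF(\gamma(t))\,(x-x_0)\,dt;
\]
applying $\zeta\mapsto G(-i\zeta)$ and letting $M(t)$ denote the matrix of $d\cF(\gamma(t))$ in this basis yields $G(-i(\cF(x)-x_0))=\int_0^1 M(t)v\,dt$. Since $\cF$ is $C^\infty$ near $x_0$ with $d\cF(x_0)$ represented by $\cD$, the map $d\cF$ is continuous, so $\sup_{t\in[0,1]}\|M(t)-\cD\|\to 0$ as $x\to x_0$; hence, for $x$ close enough to $x_0$, the estimate $\|M(t)v\|\le\|\cD v\|+\|M(t)-\cD\|\,\|v\|\le(\|\cD\|_\perp+\ep)\|v\|$ holds for every $t$, using \eqref{60.2} and $v\perp\xi_1$. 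Therefore $\|\cF(x)-x_0\|=\|G(-i(\cF(x)-x_0))\|\le(\|\cD\|_\perp+\ep)\|x-x_0\|$. As $\ep<1-\|\cD\|_\perp$ the contraction factor is $<1$, so the iterates remain in the neighborhood and \eqref{41.3} follows by induction, exactly as in Theorem~\ref{thm1}.

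The main obstacle here is conceptual rather than computational: one should not bound the composite derivative by iterating the per-factor inequalities $\|\cB_{l,\gamma}\cB_{l,\gamma}^\top\zeta\|\le(\lambda^{(l)}_2(\gamma))^2\|\zeta\|$, because after one application of $d\cF_1$ the intermediate vector is no longer controlled relative to the leading eigenvector $G(\gamma_2)$ of the second factor, which moves with its base point. The fix is to linearize only at $x_0$, work with the fixed operator $\cD$ and the quantity $\|\cD\|_\perp$, and use that the global-phase adjustment forces $v=G(-i(x-x_0))\perp G(x_0)$; the per-factor spectral bounds then enter only through Proposition~\ref{prop4.11}, which together with Proposition~\ref{prop4.8} guarantees $\|\cD\|_\perp\le(\lambda^{(2)}_2\lambda^{(1)}_2)^2<1$ in the Fourier case, consistent with the hypothesis of the theorem.
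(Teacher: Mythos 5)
Your proposal is correct and follows essentially the same route as the paper: linearize the composite map $\cF_2\circ\cF_1$ at $x_0$ (rather than factor-by-factor along the segment), represent $d\cF(x_0)$ by $\cD$ under $\zeta\mapsto G(-i\zeta)$, use the phase-adjustment to get $G(-i(\alpha^{(k)}x^{(k)}-x_0))\perp\xi_1$, and absorb the deviation from the linearization into $\ep$ before closing with induction. The paper writes the remainder as a one-line $o(\|u^{(k)}\|)$ Taylor term where you use the integral mean-value form plus continuity of $d\cF$; this is only a cosmetic difference, and your cautionary remark about not iterating the per-factor bounds at moving base points accurately reflects why the paper reserves those bounds for Proposition \ref{prop4.11} alone.
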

\begin{proof}

At the optimal phase $\alpha^{(k)}$ adjustment for $x^{(k)}$,  we have  
\[
\Im(x_0^*\alpha^{(k)} x^{(k)})=0
\]
and hence  \begin{equation} \langle  \alpha^{(k)} x^{(k)}-x_0 ,i  x_0\rangle=\langle  \alpha^{(k)} x^{(k)}, i  x_0\rangle =\Re(({\alpha^{(k)} x^{(k)} })^*i x_0)=0\end{equation}
which implies  that \[
u^{(k)}:=-i(\alpha^{(k)}x^{(k)}-x_0)
\]
 is orthogonal to the leading right singular vector $\bu_1=G(x_0)$ of $\cB^*_l, l=1,2$:
\beq
\bu_1\perp G(u^{(k)}),\quad\forall k\label{74}
\eeq
cf. Proposition \ref{prop4.2}.

We have for $k=1,2,3,\cdots$ 
\beqn
\|\alpha^{(k+1)}\cF_2\cF_1(x^{(k)})-x_0\|&\leq&\|\alpha^{(k)}\cF_2\cF_1(x^{(k)})-x_0\|\\
&=&\|\cF_2\cF_1(\alpha^{(k)}x^{(k)})-\cF_2\cF_1(x_0)\|\\
&=& \|\cD G(u^{(k)})\|+o( \|u^{(k)}\|)\\
&\leq&\max_{\xi\perp \xi_1\atop \|\xi\|=1} \|\cD \xi\| \|u^{(k)}\|+o (\|u^{(k)}\|) \eeqn
and hence
\beq
\|u^{(k+1)}\|\leq \|\cD\|_\perp \|u^{(k)}\|+o (\|u^{(k)}\|).
\eeq
By induction on $k$ with $u^{(1)}$ sufficiently small, we have the desired result \eqref{41.3}.

\end{proof}

\commentout{
The expression \eqref{60.2} can be simplified as follows.
\begin{prop} If $\lamb_2^{(l)}<1, l=1,2,$ then 
\[
\|\cD\|_\perp= \max \{\|\cB_2^\top \cB_1\xi\|: {\xi\in \IR^{2n}, \xi\perp \xi_1},\|\xi\|=1\}
\]
\end{prop}

\begin{proof}
With $\lamb_2^{(l)}<1, l=1,2,$ the set $\{\xi\in \IR^{2n}: \xi\perp\xi_1, \xi\perp\xi_{2n} \}$ is 
co-dimension 2 invariant subspace under $\cB_l\cB_l^\top, l=1,2$.
\end{proof}
}

\section{Numerical experiments}\label{sec:num}

\subsection{Test images}

Let $C,B$ and $ P$ denote  the $256\times256$ non-negatively valued Cameraman, Barbara and Phantom images, respectively. 

For one-pattern simulation, we use $C$ and $P$ for test images.  For the two-pattern simulations, we use the complex-valued images,
Randomly Signed Cameraman-Barbara (RSCB) and Randomly Phased Phantom (RPP),  constructed as follows.\begin{description}
\item[RSCB]
Let $\{\beta_{R}(\bn)=\pm 1\}$ and $\{\beta_{I}(\bn)=\pm 1\}$ be i.i.d. Bernoulli random variables. Let \begin{eqnarray*}
x_{0}=\beta_R\odot C+i \beta_I \odot B. 
\end{eqnarray*}
\item[RPP]
Let $\{\phi(\bn)\}$ be i.i.d. uniform random variables over $[0,2\pi]$ and let 
\beqn
x_{0}=P\odot  e^{i\phi}. 
\eeqn
\end{description}

We  use the relative error (RE) 
\begin{eqnarray*}
\hbox{\rm RE}=\underset{\theta\in [0,2\pi)}{\min}\|x_{0}-e^{i\theta}x\|/\|x_{0}\|
\end{eqnarray*}
as the figure of merit
and the relative residual (RR) 
\begin{eqnarray*}
\hbox{\rm RR} =\|b- |A^* x|\|/\|x_{0}\|
\end{eqnarray*}
as a metric for setting the stopping rule. 

\subsection{Wirtinger Flow}

WF is a two-stage algorithm proposed by \cite{CLS2} and further improved by
\cite{truncatedWF} (the truncated version). 

The first stage is the spectral initialization (Algorithm 2). 
For the truncated spectral initialization  \eqref{truncated_version}, the parameter $\tau$ can be  optimized  by tracking and minimizing  the residual $\|b-|A^* x_k|\|$.

The second stage is
a gradient descent method  for the cost function
\beq
\label{6.1}
F_{\rm w}(x)={N\over 2} \||A^*x|^2-b^2\|^2 
\eeq
where a proper normalization is introduced to adjust for notational difference and facilitate
a direct comparison between the present set-up ($A^*$ is an isometry) and that of
\cite{CLS2}. A motivation for using \eqref{6.1} instead of  \eqref{3'} is its global differentiability. 

Below we consider these two stages separately and use the notation WF to denote primarily the second stage defined by 
the WF map
\beq\label{wf}
W(x^{(k)})&=&x^{(k)}-{s^{(k)} \over \|x^{(1)}\|^2}\nabla F_{\rm w}(x^{(k)})\\
&=&x^{(k)}-{s^{(k)} \over \|x^{(1)}\|^2} A \lt(N\lt(|A^* x^{(k)} |^2-|b|^2\rt)\odot A^* x^{(k)}\rt),\nn
\eeq
for  $k=1,2,\cdots,$ with  $s^{(k)}$ is the step size at the $k$-th iteration. Each step of WF involves twice FFT and once pixel-wise operations, comparable to the computational complexity of one iteration of PAP. 

In \cite{CLS2} (Theorem 5.1),  a basin of attraction at $x_0$ of radius $O(n^{-1/2})$ is established for $W$ 
 for a sufficiently small constant step size $s^{(k)}=s$.   
No explicit bound on $s$ is given.  As pointed out in \cite{CLS2},
 the effective step size  $s\|x^{(1)}\|^{-2}$ is inversely proportional to $\|x^{(1)}\|^2$.
 
In comparison, consider 
 the projected gradient formulation of PAP
 \beq
 \label{8.3}
 \cF(x)&= &x-\nabla F(x)\\
 &=&x-A\lt(\lt({\bf 1}-{b\over |A^*x|}\rt)\odot A^*x\rt)\nn
 \eeq
which is well-defined locally at $x_0$ and  can be extended globally by selecting
an element from the subdifferential of $F$.  

Eq. \eqref{8.3} implies a constant
step size $1$, which is significantly larger than the optimal step size for \eqref{wf}  from experiments (see below).  It is possible to improve the numerical performance of WF with 
  a heuristic dynamic step size as proposed by \cite{CLS2}, eq. (II.5), 
 \[
 s^{(k)}=\min\lt(1-e^{-k/k_0}, s_{\rm max}\rt)
 \]
 with experimentally determined $k_0, s_{\rm max}$. The performance of this ad hoc rule 
 can be sensitive to the set-up (image size, measurement scheme etc). For example, the numerical
 values $k_0=330$ and $s_{\rm max}=0.4$ suggested by \cite{CLS2} often lead
 to instability in our setting.  Since such a dynamic rule does not yet enjoy any performance guarantee, 
 we will not consider it further.

In addition, it may be worthwhile to compare  the ``weights" in $\nabla F_{\rm w}$ and $\nabla F$: \beq
\label{8.4}
N\lt(|A^* x^{(k)} |^2-|b|^2\rt)=N|A^* x^{(k)}|^2 \lt({\bf 1}-{|b|^2\over |A^*x|^2}\rt)\quad\hbox{ in $\nabla F_{\rm w}$}
\eeq
 versus
\beq
\label{8.5} \lt({\bf 1}-{b\over |A^*x|}\rt) \quad\hbox{in $\nabla F$}.
\eeq
Notice that the factor $N|A^* x^{(k)}|^2(j)$ in \eqref{8.4} is approximately $N b^2(j),\forall j,$ when $x^{(k)}\approx x_0$ while the corresponding factor in \eqref{8.5} is uniformly 1 independent of $x^{(k)}$. Like the truncated spectral initialization, the truncated Wirtinger Flow  seeks to reduce the variability of the weights in \eqref{8.4} by
introducing 3 new control parameters \cite{truncatedWF}.

\commentout{
Unfortunately, the resampled WF (\cite{CLS2}, Algorithm 2)  requires a large number of coded diffraction patterns in practice, not just in theory, and can not possibly work with two coded patterns or less. Hence instead of the resampled WF we test the original version \eqref{wf} whose convergence, however,  is proved only for Gaussian i.i.d matrices $A$ with a constant step size $s=O(1/n)$, resulting in a pessimistic rate of convergence $(1-{c\over n})$ (\cite{CLS2}, Theorem 3.3). 
}
 

\commentout{%
 In unconstrained optimization, convergence (not necessarily to the global minimum point) for gradient descent with a constant step size $s$  holds  if a Lipschitz constant $L$ of $\nabla F$ satisfies  $0<s<2/L$ (see Proposition  1.2.3 \cite{Np}).
 For the quartic cost function \eqref{6.1}, this general result only leads to local convergence for any step size.
 }


\commentout{
\begin{equation}
s^{(k)}=\min(1-e^{-k/330},0.4).
\end{equation}
}



\commentout{
\subsection{Initialization}
\label{sec:init}

Initialization is an important part of any iterative schemes. 
Here we compare the null initialization with the spectral initialization used in \cite{CLS2} and the truncated spectral initialization used in \cite{truncatedWF}. We do not
consider the resampled initialization scheme of \cite{CLS2} since it requires
a large number of coded diffraction patterns for numerical implementation. 

Let $\mathbf{1}_{c}$ be the characteristic function of the complementary index $I_c$
with $|I_c|=\gamma N$. Note that $\gamma+\sigma=1$ with $\sigma$ given by \eqref{53'}.

\begin{algorithm}
\SetKwFunction{Round}{Round}

\textbf{Random initialization:} $x_{1}=x_{\rm rand}$
\\
\textbf{Loop:}\\
\For{$k=1:k_{\textup{max}}-1$}
{
$x'_{k}\leftarrow A (\mathbf{1}_{c}\odot A^*x_k)$;\\
$x_{k+1}\leftarrow \lt[x_k^{'}\rt]_\cX/\|\lt[x_k^{'}\rt]_\cX\|$
}
{\bf Output:} {$x_{\textup{null}}=x_{k_{\textup{max}}}$.}
\caption{\textbf{The  null initialization}}
\label{null-algorithm}
\end{algorithm}
In Algorithm \ref{null-algorithm}, the default choice for $\gamma$ is the  median value $\gamma=0.5$.

\begin{algorithm}[h]
\SetKwFunction{Round}{Round}
\textbf{Random initialization:} $x_1=x_{\rm rand}$\\
\textbf{Loop:}\\
\For{$k=1:k_{\textup{max}}-1$}
{
$x_k'\leftarrow A(|b|^2\odot A^*x_k);$\\
$x_{k+1}\leftarrow \lt[x_k^{'}\rt]_\cX/\|\lt[x_k^{'}\rt]_\cX\|$;
}
{\bf Output:} $x_{\rm spec}=x_{k_{\rm max}}$.
\caption{\textbf{The spectral initialization}}
\label{spectral-algorithm}
\end{algorithm}

The key difference between the null initialization and the spectral initialization  is the different weights used in
step 4 where  the null initialization uses $\mathbf{1}_{c}$ and  the spectral vector
method uses $|b|^2$ (Algorithm 2). 
The truncated spectral initialization uses  a still  different weighting 
\begin{equation}\label{truncated_version}
x_{\textup{t-spec}}=\textup{arg}\underset{\|x\|=1}{\textup{max}}
\| A
\left(\mathbf{1}_\tau\odot |b|^{2}\odot A^*x \right)\|
\end{equation}
where 
$\mathbf{1}_\tau$ is the characteristic function of 
the set
\[
\{i: |A^* x(i)|\leq \tau {\|b\|}\}
\]
with an adjustable parameter $\tau$. 
As shown below the choice of weight  significantly
affects  the quality of initialization, with the null initialization as  the best performer
consistent with Remark \ref{rmk5.2}. 

Both $\gamma$ of Algorithm 1 and $\tau$ of \eqref{truncated_version} can be 
optimized  by tracking and minimizing  the RR.

}

\begin{figure}[t]
\centering
\subfigure[][$x_{\rm spec}$]{\label{fig:spectral_initial3}\includegraphics[scale=0.7]{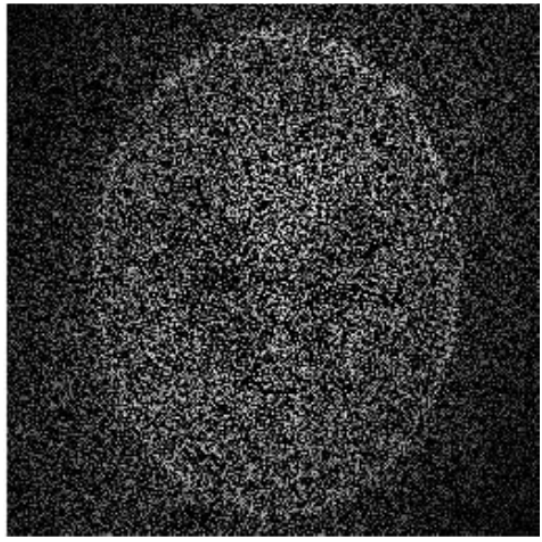}}
\subfigure[][$x_{\textup{t-spec}}$ ($\tau^2=4.6$)]{\label{fig:Tspectral_initial3}\includegraphics[scale=0.7]{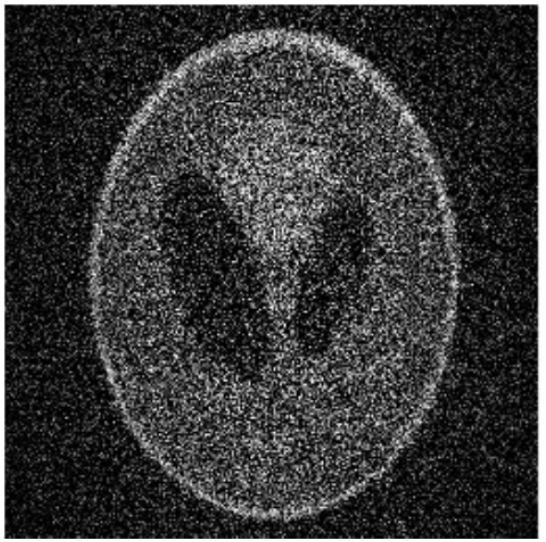}}
\subfigure[][$x_{\textup{null}}$ ($\gamma=0.5$) ]{\label{fig:null_initial3_neutral}\includegraphics[scale=0.7]{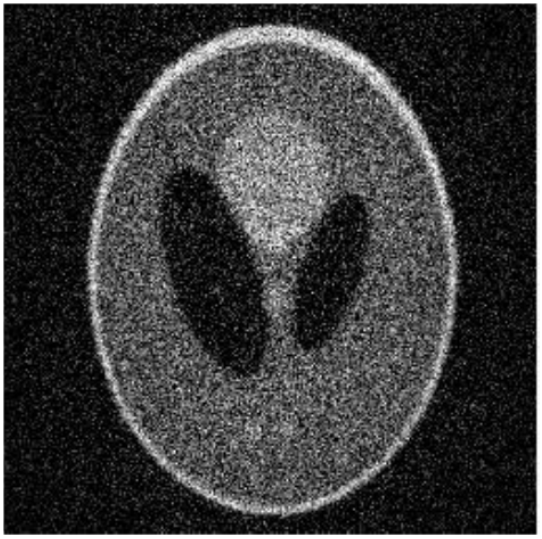}}
\subfigure[][$x_{\textup{null}}$ ($\gamma=0.74$)]{\label{fig:null_initial3}\includegraphics[scale=0.7]{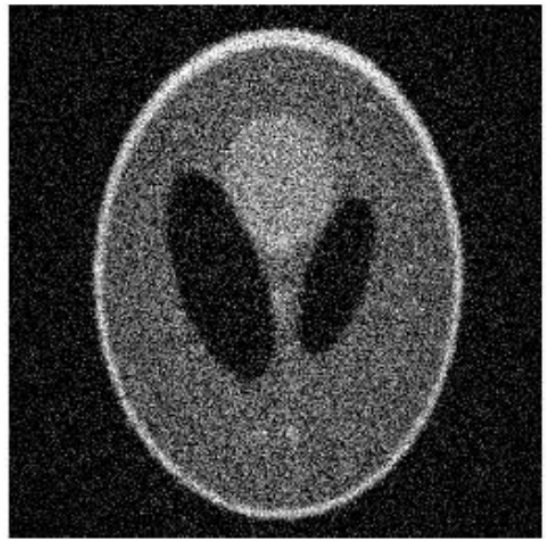}}
\subfigure[][$x_{\rm spec}$]{\label{fig:L1spectral_initial}\includegraphics[scale=0.7]{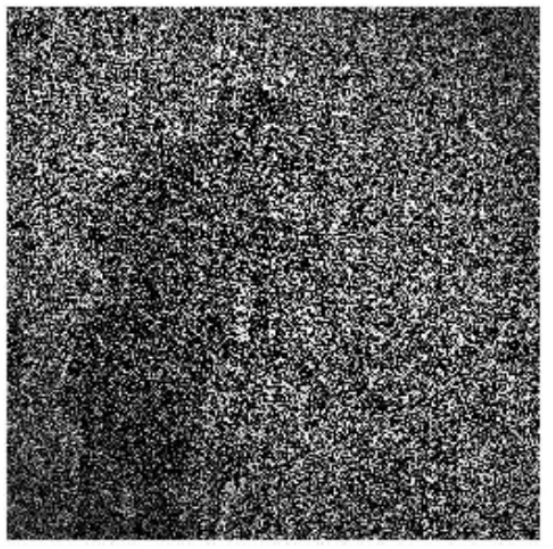}}
\subfigure[][$x_{\textup{t-spec}}$ ($\tau^{2}=4.1$)]{\label{fig:L1Tspectral_initial}\includegraphics[scale=0.7]{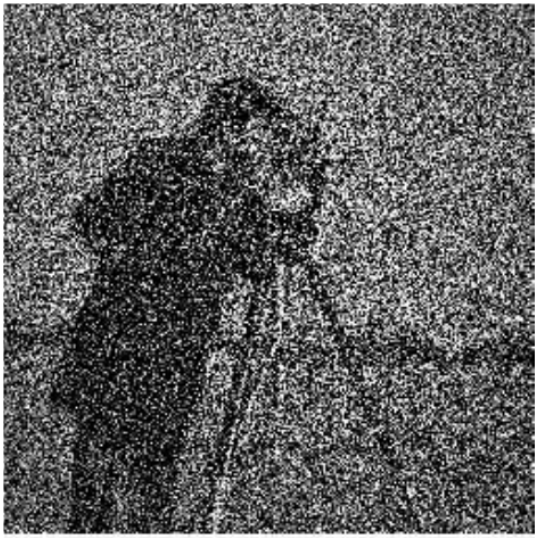}}
\subfigure[][$x_{\textup{null}}$ ($\gamma=0.5$)]{\label{fig:L1null_initial_neutral}\includegraphics[scale=0.7]{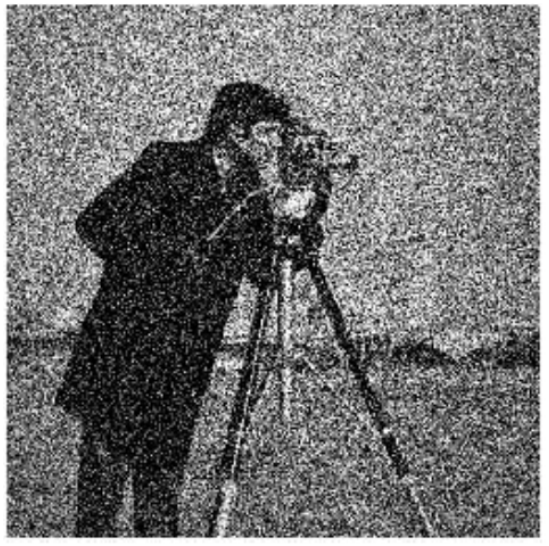}}
\subfigure[][$x_{\textup{null}}$ ($\gamma=0.7$)]{\label{fig:L1null_initial}\includegraphics[scale=0.7]{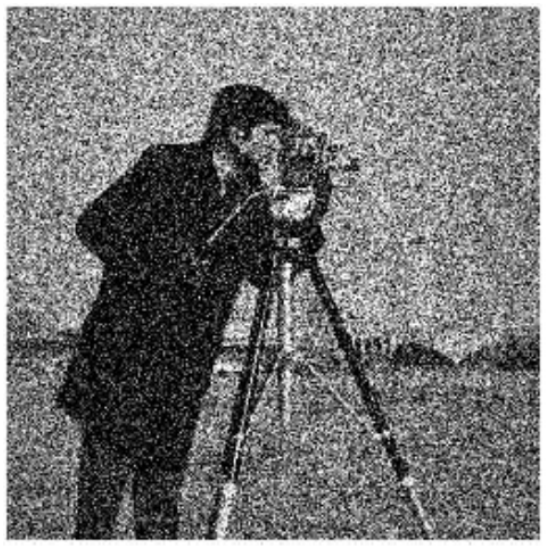}}
\caption{Initialization with one pattern  of the Phantom ((a) \hbox{\rm RE}$(x_{\rm spec})=0.9604$,
(b) \hbox{\rm RE}$(x_{\textup{t-spec}})=0.7646$, (c) \hbox{\rm RE}$(x_{\textup{null}})=0.5119$,
(d) \hbox{\rm RE}$(x_{\textup{null}})=0.4592$) and  the Cameraman ((e) \hbox{\rm RE}$(x_{\rm spec})=0.8503$, (f) \hbox{\rm RE}$(x_{\textup{t-spec}})=0.7118$, (g) \hbox{\rm RE}$(x_{\textup{null}})=0.4820$, (h) \hbox{\rm RE}$(x_{\textup{null}})=0.4423$).}
\label{fig:initials3}\label{fig:initials}
\end{figure}

\begin{figure}[h]
\centering
\subfigure[][Cameraman]{\includegraphics[scale=0.96]{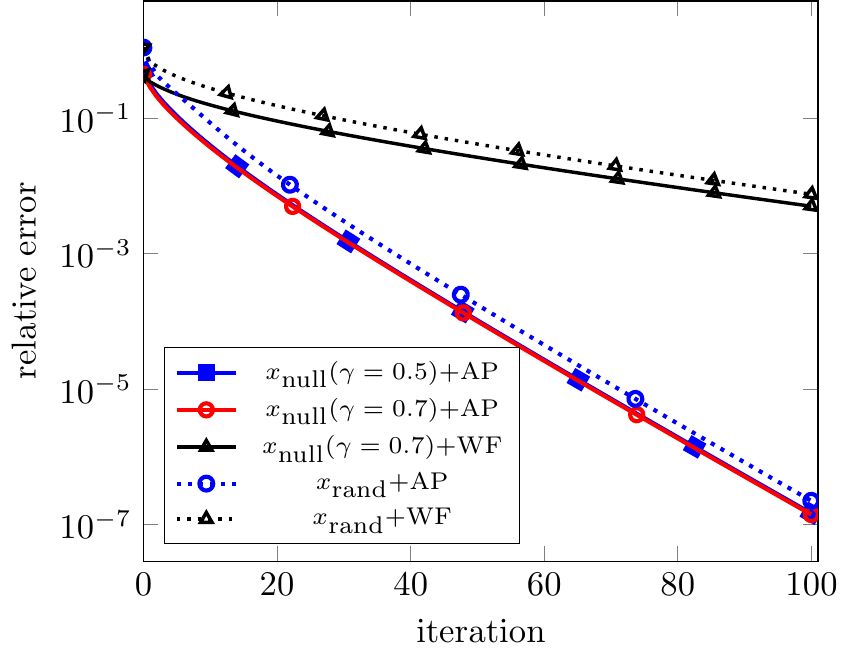}}
\subfigure[][Phantom]{\includegraphics[scale=0.96]{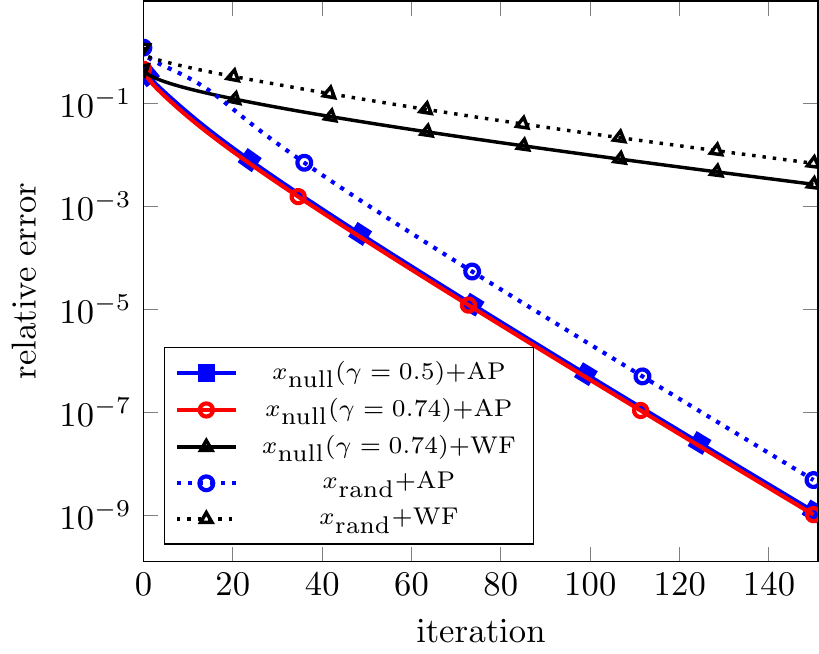}}
\caption{RE versus iteration in the one-pattern case with the (a) Cameraman and (b) Phantom. WF is tested with the optimized  step size $s=$ 0.2.}
\label{fig:positive_phantom_case}\label{fig5}\label{fig4}\end{figure}

\begin{figure}[tbp]
\centering
\subfigure[][$|\textup{Re}(x_{\textup{t-spec}})|$ ($\tau^2=5$)]{\label{fig:Tspectral_initial}\includegraphics[scale=0.7]{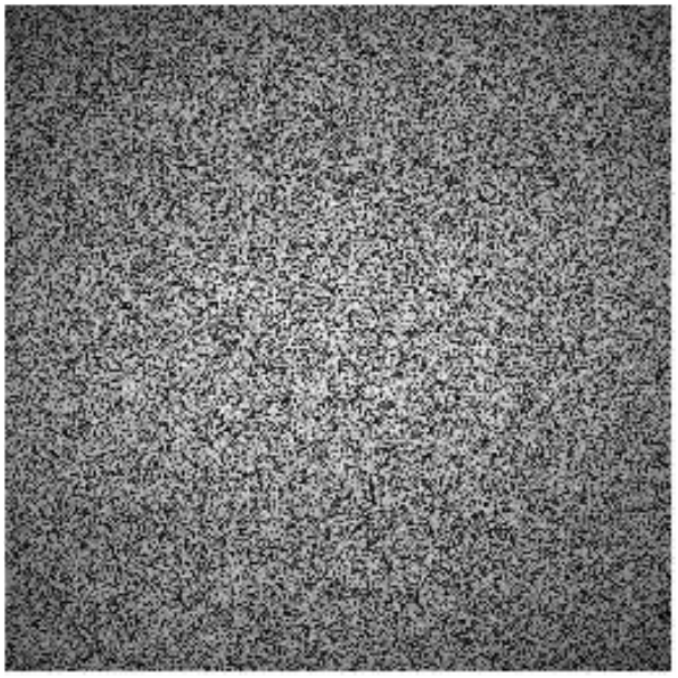}}
\subfigure[][$|\textup{Re}(x_{\textup{null}})|$ ($\gamma=0.5$)]{\label{fig:null_initial_neutral_real}\includegraphics[scale=0.7]{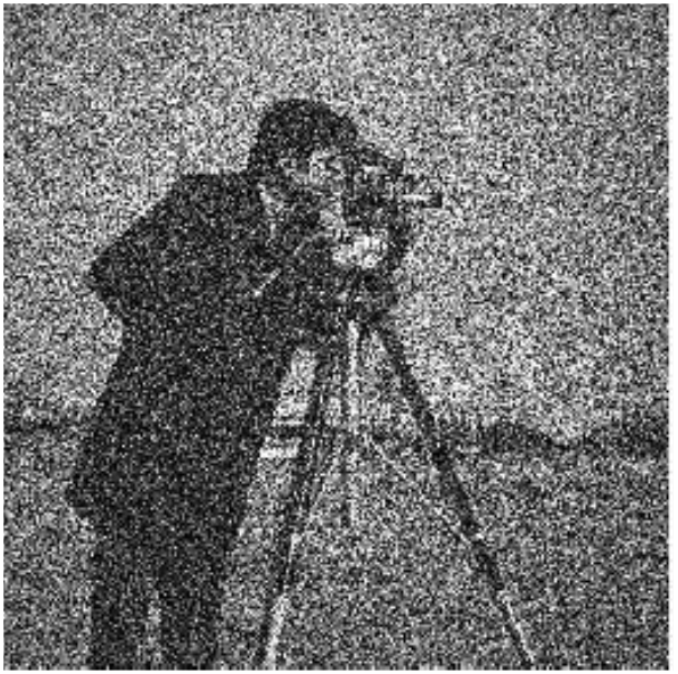}}
\subfigure[][$|\textup{Re}(x_{\textup{null}})|$ ($\gamma=0.63$)]{\label{fig:null_initial_real}\includegraphics[scale=0.7]{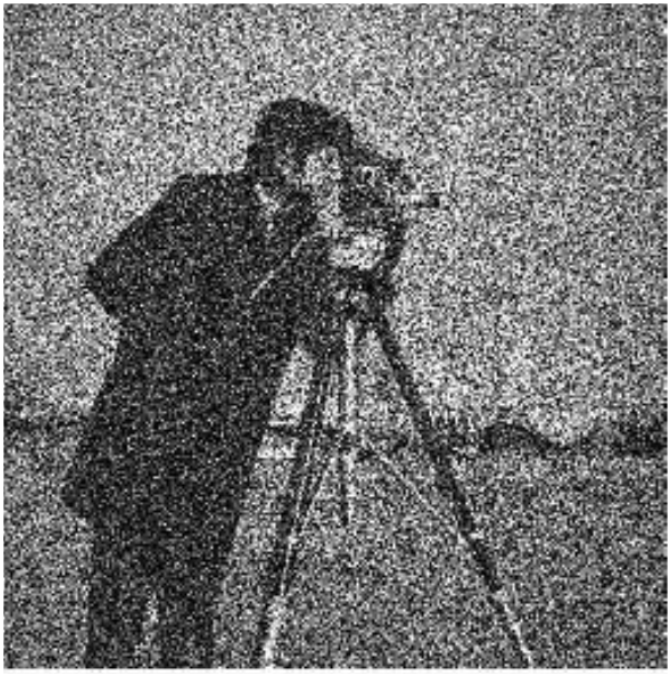}}
\\
\subfigure[][$|\textup{Im}(x_{\textup{t-spec}})|$ ($\tau^2=5$)]{\label{fig:Tspectral_initial'}\includegraphics[scale=0.7]{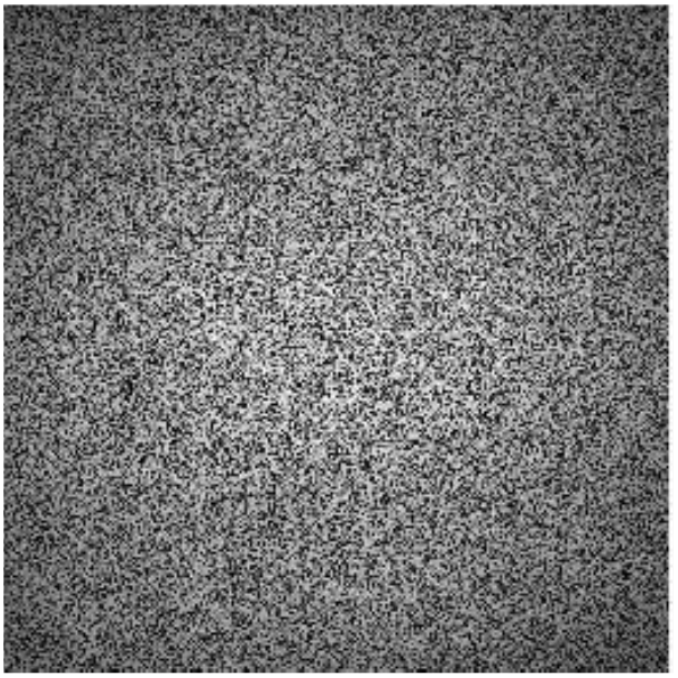}}
\subfigure[][$|\textup{Im}(x_{\textup{null}})|$ ($\gamma=0.5$)]{\label{fig:null_initial_neutral_imag}\includegraphics[scale=0.7]{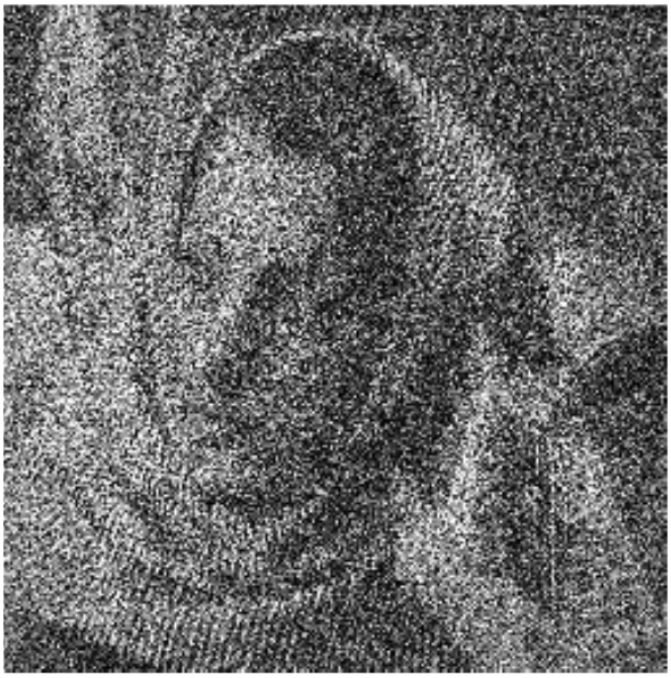}}
\subfigure[][$|\textup{Im}(x_{\textup{null}})|$ ($\gamma=0.63$)]{\label{fig:null_initial_imag}\includegraphics[scale=0.7]{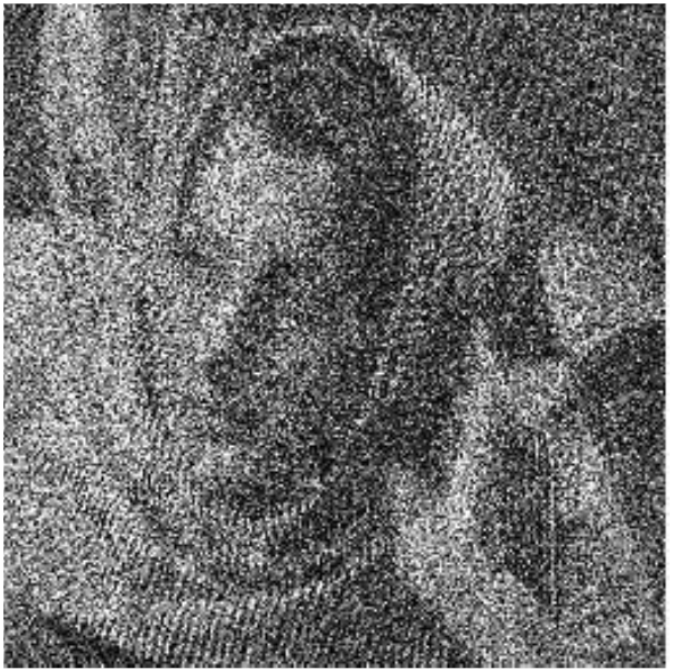}}
\subfigure[][$|x_{\textup{t-spec}}|$ ($\tau^2=5$)]{\label{fig:Tspectral_abs_phantom}\includegraphics[scale=0.7]{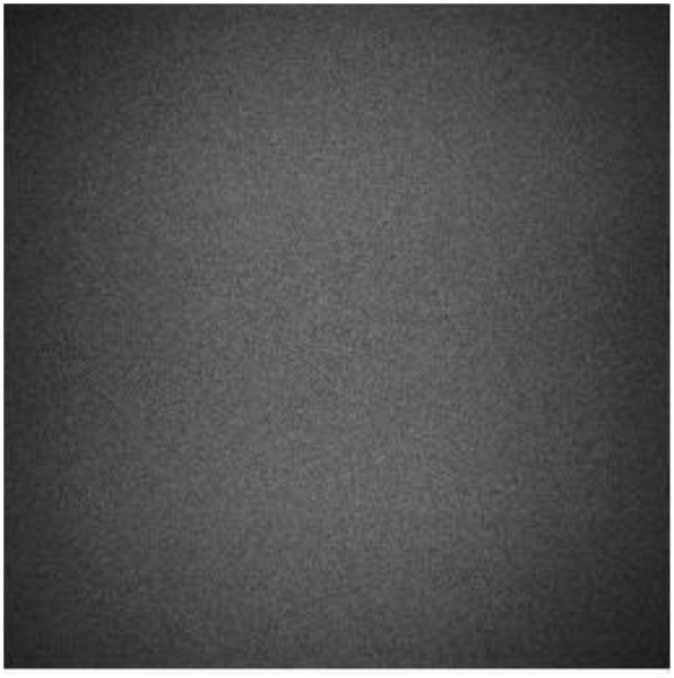}}
\subfigure[][$|x_{\textup{null}}|$ ($\gamma=0.5$)]{\label{fig:null_initial_neutral_abs_phantom}\includegraphics[scale=0.7]{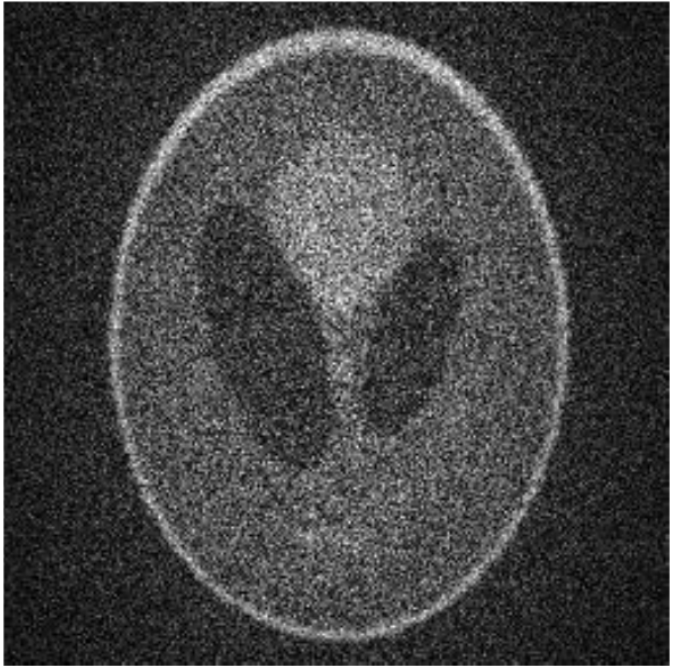}}
\subfigure[][$|x_{\textup{null}}|$ ($\gamma=0.6$)]{\label{fig:null_initial_abs_phantom}\includegraphics[scale=0.7]{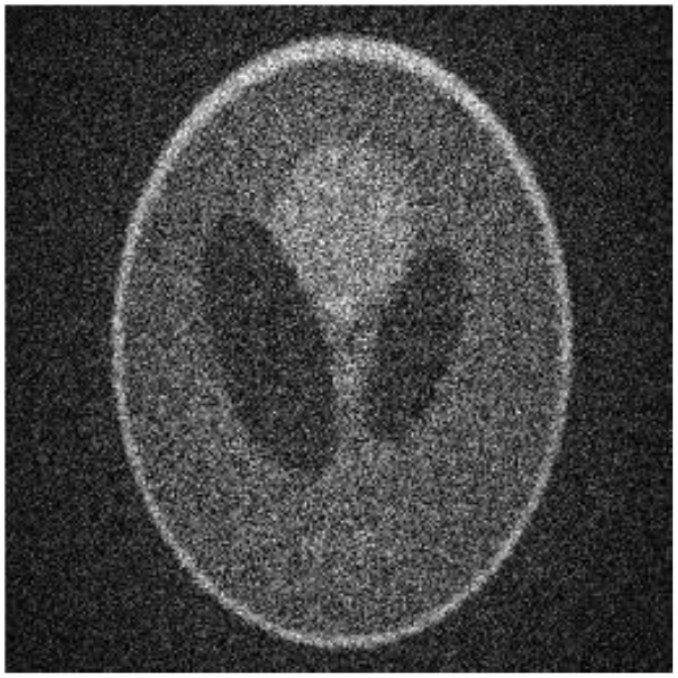}}
\caption{ Initialization with two patterns for RSCB ((a)(d) \hbox{\rm RE}$(x_{\textup{t-spec}})=1.3954$, (b)(e) \hbox{\rm RE}$(x_{\textup{null}})=0.5736$,  (c)(f) \hbox{\rm RE}$(x_{\textup{null}})=0.5416$) and  RPP ((g)\hbox{\rm RE}$(x_{\textup{t-spec}})=1.3978$,
(h) \hbox{\rm RE}$(x_{\textup{null}})=0.7399$, (i) \hbox{\rm RE}$(x_{\textup{null}})=0.7153$)}
\label{fig:initials_2masks_phantom}\label{fig:initials_2masks}
\end{figure}

\subsection{One-pattern experiments}

 Fig. \ref{fig:initials}  shows that  the null vector $x_{\textup{null}}$ is more accurate than the spectral vector $x_{\rm spec}$ and the truncated spectral vector $x_{\textup{t-spec}}$ in approximating the true images.
For the Cameraman (resp. the Phantom) 
  $\hbox{\rm RR}(x_{\textup{null}})$
can be minimized by setting $\gamma\approx0.70$
(resp. $\gamma\approx0.74$).  The optimal parameter $\tau^2$ for $x_{\rm t-spec}$ in (\ref{truncated_version}) is about $4.1$ (resp. $4.6$).  

Next we compare the performances of PAP and WF \cite{CLS2}  with $x_{\textup{null}}$ as well as  the random initialization $x_{\textup{rand}}$. Each pixel of $x_{\textup{rand}}$ is independently sampled from the uniform distribution over $[0,1]$.

To account for the real/positivity constraint, we modify \eqref{wf} as
\beq\label{wf1}
W(x^{(k)})&=&\lt[x^{(k)}-{s^{(k)} \over \|x^{(1)}\|^2}\nabla F_{\rm w}(x^{(k)})\rt]_\cX,\quad \cX=\IR^n,~\IR^n_+.
\eeq

As shown in Fig. \ref{fig4}, the convergence of both PAP and WF is faster with $x_{\textup{null}}$ than  $x_{\textup{rand}}$. In all cases, PAP converges  faster than WF. 

Also, the median value  $\gamma=0.5$ for initialization  is as good as
the optimal value. The convergence of PAP with random initial condition $x_{\rm rand}$  suggests  {\em global}
convergence to the true object  in the one-pattern case with the positivity constraint. 

\subsection{Two-pattern experiments }\label{sec:WF_ER}
\begin{figure}[htbp]
\centering
\subfigure[][RSCB]{\label{fig:general_case_ERWF_null}\includegraphics[scale=0.90]{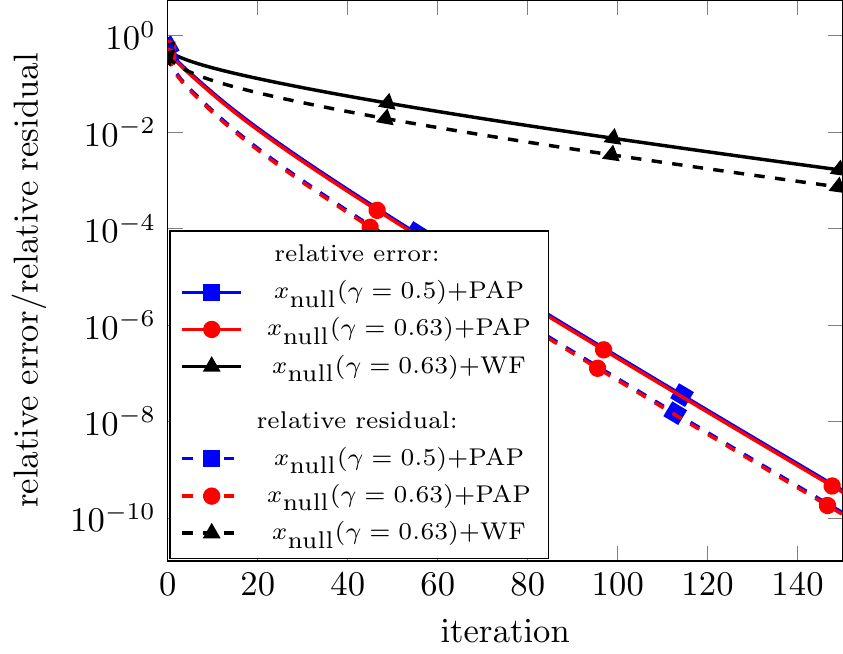}}
\subfigure[][RPP]{\label{fig:general_case_ERWF_null_phantom}\includegraphics[scale=0.90]{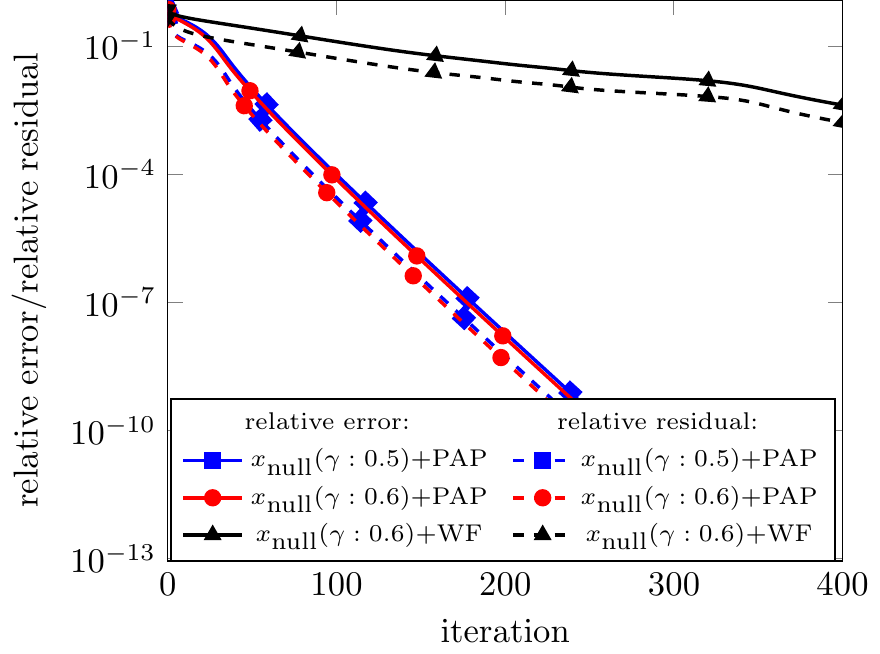}
}
\caption{RE and RR versus iteration for PAP and WF with two patterns. WF is tested with the optimized  step size (a) $s=0.2$  and (b) $s=0.15$.}
\label{fig:general_case_phantom}\label{fig:general_case}
\end{figure}

We use the complex images, RSCB and RPP,  for the two-pattern simulations. 

 Fig. \ref{fig:initials_2masks} shows  that
$x_{\rm null}$  is more accurate than the $x_{\rm spec}$ and $x_{\rm t-spec}$ in
approximating $x_0$.  The  difference in RE  between  the initializations with the median value and
the optimal values is less than $3\%$. 

Fig. \ref{fig:general_case} shows that PAP outperforms WF, both with
 the null initialization. 
 
 As Fig. \ref{fig:serial_parallel_ER} shows, SAP converges much faster than PAP
and takes about half the number of iterations to converge to the object. 
Different samples correspond to different realizations of random masks, showing robustness with respect to the ensemble of random masks.
In  terms of the rate of convergence, 
SAP with the null initialization outperforms  the Fourier-domain Douglas-Rachford algorithm \cite{DR-phasing}. 
 
 \begin{figure}[tbp]
\centering
\subfigure[][RSCB ]{\includegraphics[scale=0.9]{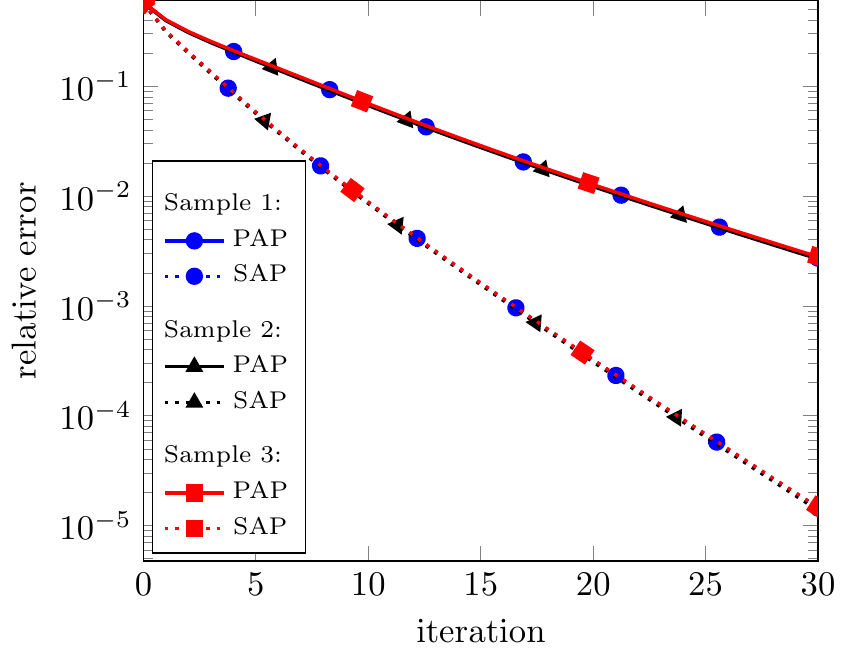}}
\hspace{0.0cm}
\subfigure[][ RPP ]{\includegraphics[scale=0.9]{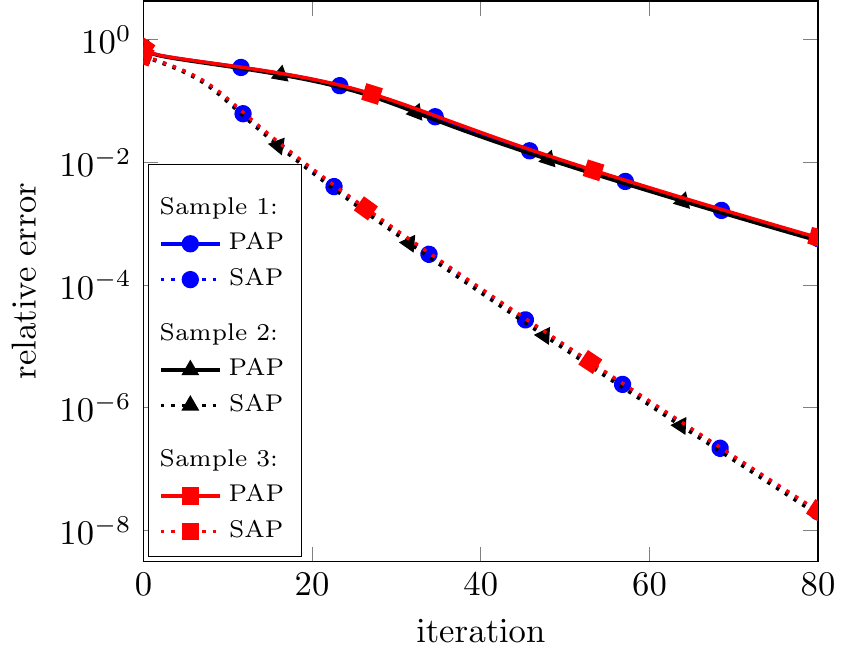}}
\caption{RE versus iteration of PAP and SAP in the two-pattern case  ($\gamma=0.5$).}
\label{fig:serial_parallel_ER}
\end{figure}

Fig. \ref{fig0} shows the RE versus iteration for the (a) one-pattern and (b) two-pattern cases.  The dotted lines represent  the geometric series $\{ \tilde \lambda_2^{2k}\}_{k=1}^{200}$,  $\{ \lambda_2^{2k}\}_{k=1}^{200}$ and $\|\cD\|_\perp^k$ (the pink line in (a) and the red and the blue lines in (b)), which track well the actual iterates  (the black-solid curve in (a) and the blue- and the red-solid curves in (b)), consistent with the predictions of Theorems \ref{thm1},  \ref{thm2} and \ref{thm3}. In particular, SAP has a better rate of convergence than PAP (0.7946 versus 0.9086).

\begin{figure}[t]
\centering
\subfigure[][One pattern: $\tilde \lambda^2_2=0.9084$]{\includegraphics[scale=.45]{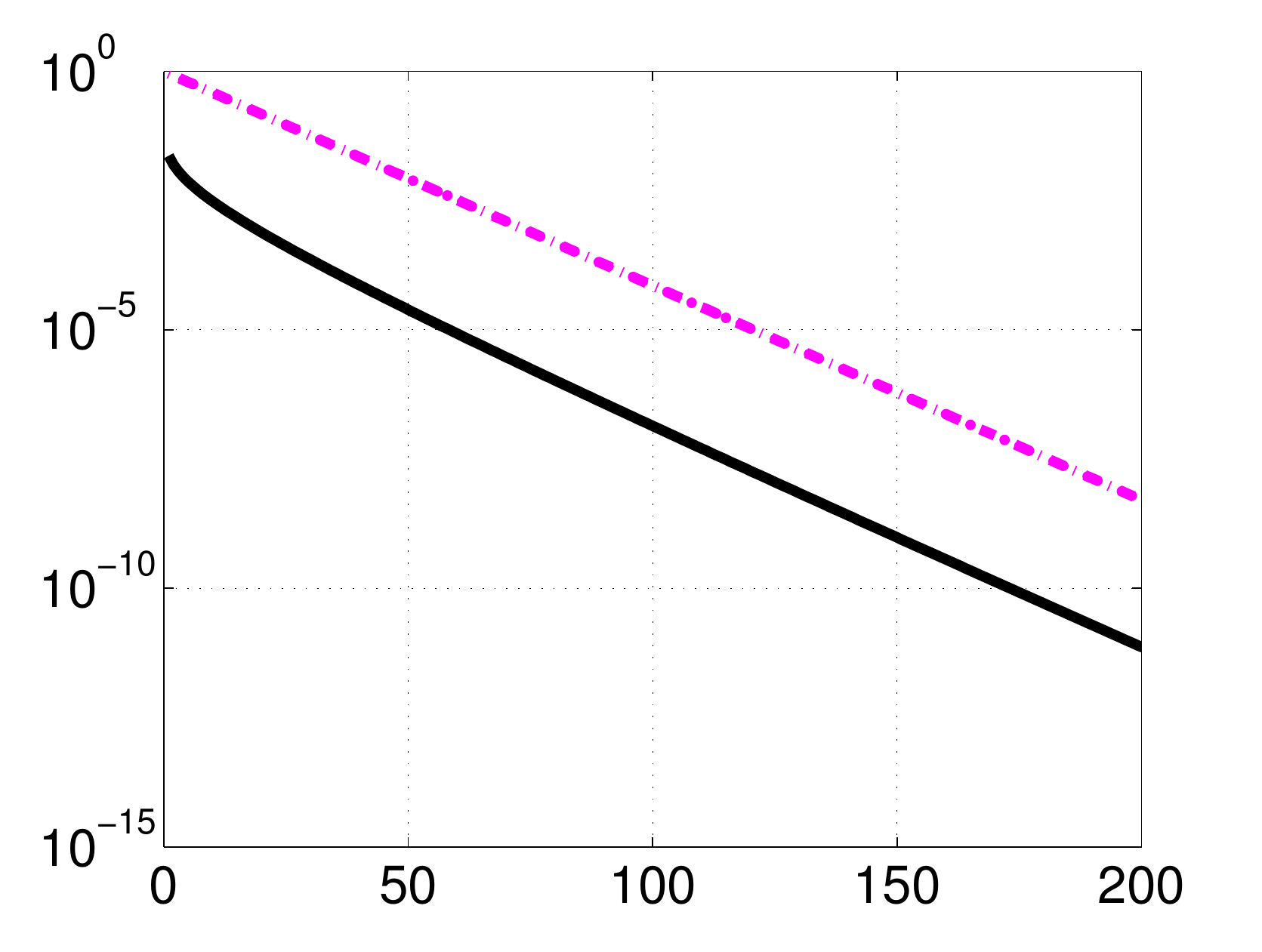}}
\subfigure[][SAP $\|\cD\|_\perp=0.7946$; PAP $\lambda^2_2=0.9086$]{
\includegraphics[width=7.7cm,height=5.7cm]{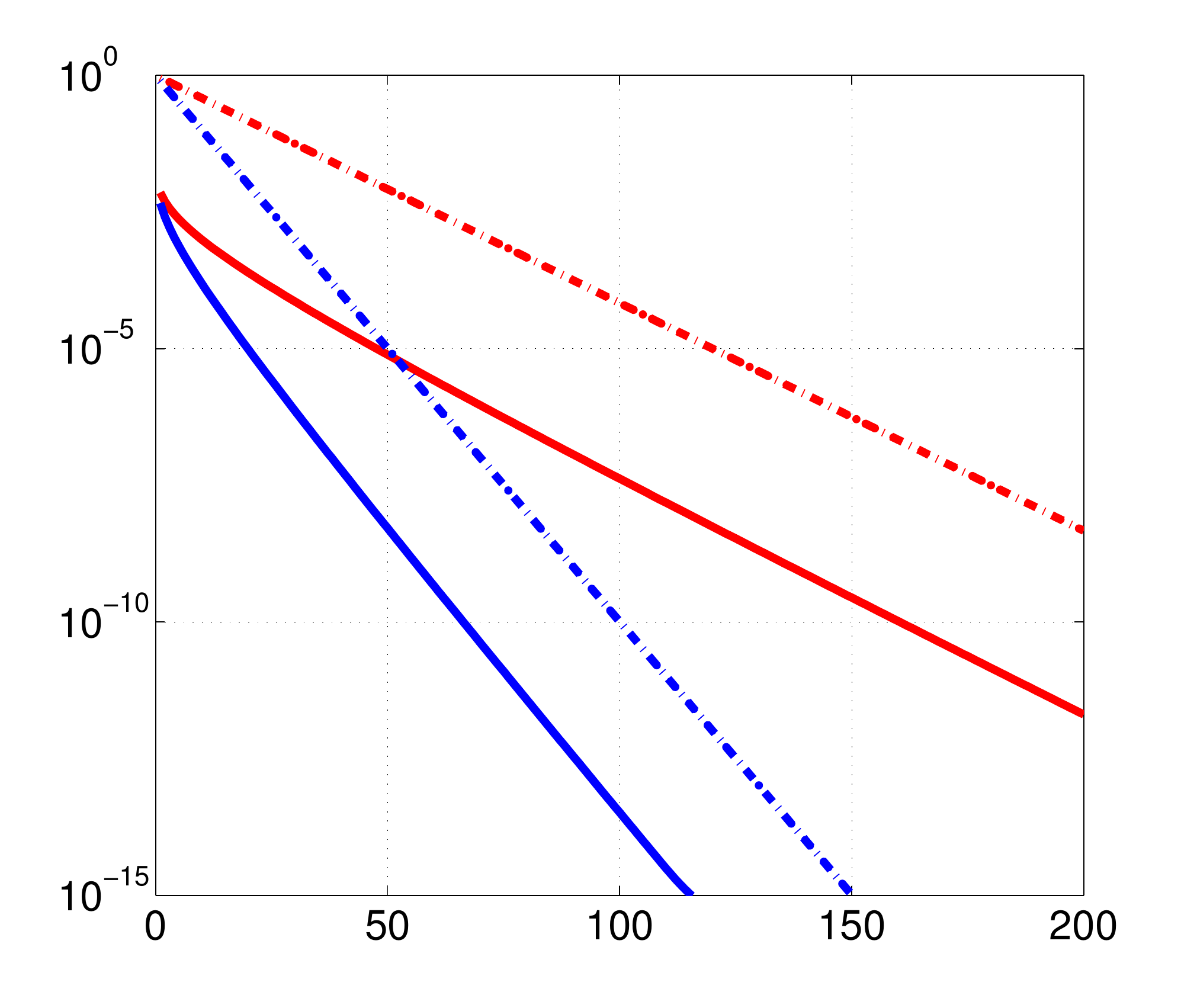}
}\caption{RE on the log scale versus iteration with (a)  one pattern and (b) two patterns
(PAP in red, SAP in blue). The solid curves are the AP iterates and the dotted lines are
the geometric series predicted by the theory.  }
\label{fig0}
\end{figure}

\subsection{Oversampling ratio}\label{sec:or}
Phase  retrieval with just one coded diffraction pattern without the real/positivity constraint
has many solutions \cite{unique} and as a result AP with the null initialization does
not perform well numerically.

What would happen if we measure two coded diffraction patterns each with
fewer samples? 

The amount of data in each coded diffraction pattern is measured by the oversampling ratio 
\begin{eqnarray*}
\rho=\frac{\textup{Number of data in {\bf each} coded diffraction pattern
}}{\textup{Number of image pixels}},
\end{eqnarray*}
which is approximately  4 in the standard oversampling. 

\commentout{
\begin{figure}[t]
\centering
\subfigure[][Cameraman]{
\includegraphics[scale=0.9]{L1_zeropadding_cameraman_combine_eps}
}
\subfigure[][Phantom]{
\includegraphics[scale=0.9]{L1_zeropadding_phantom_combine_eps}
}
\caption{RE versus iteration for the one-pattern case with various $\rho$'s and $\gamma=0.5$. }
\label{fig:effect_rho_positivephantom_caseWFER}\label{fig9}
\end{figure}
}

\begin{figure}[tbp]
\centering
\subfigure[][RSCB ($\rho=1.65$)]{\includegraphics[scale=0.9]{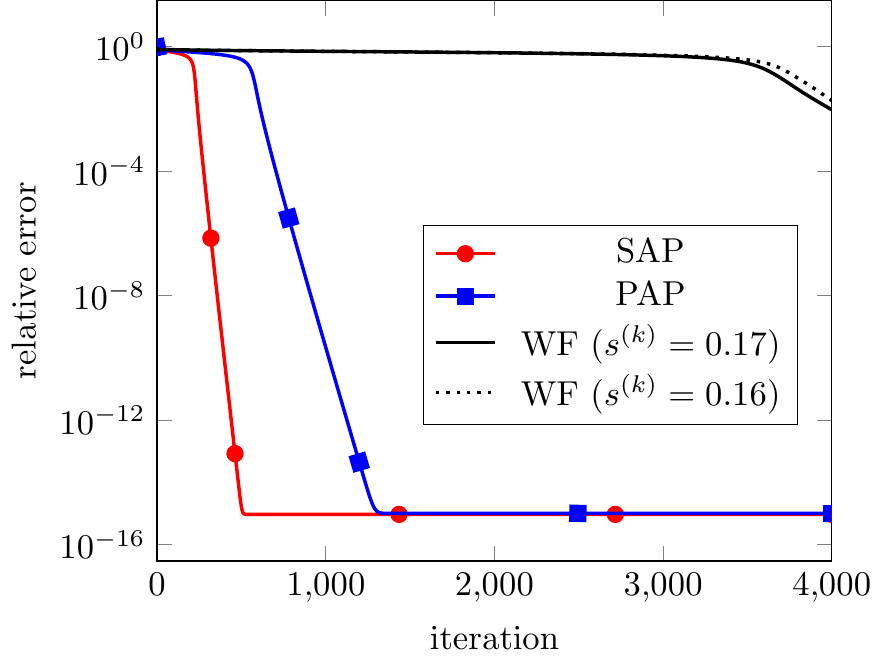}}
\subfigure[][RPP ($\rho=1.96$)]{\includegraphics[scale=0.9]{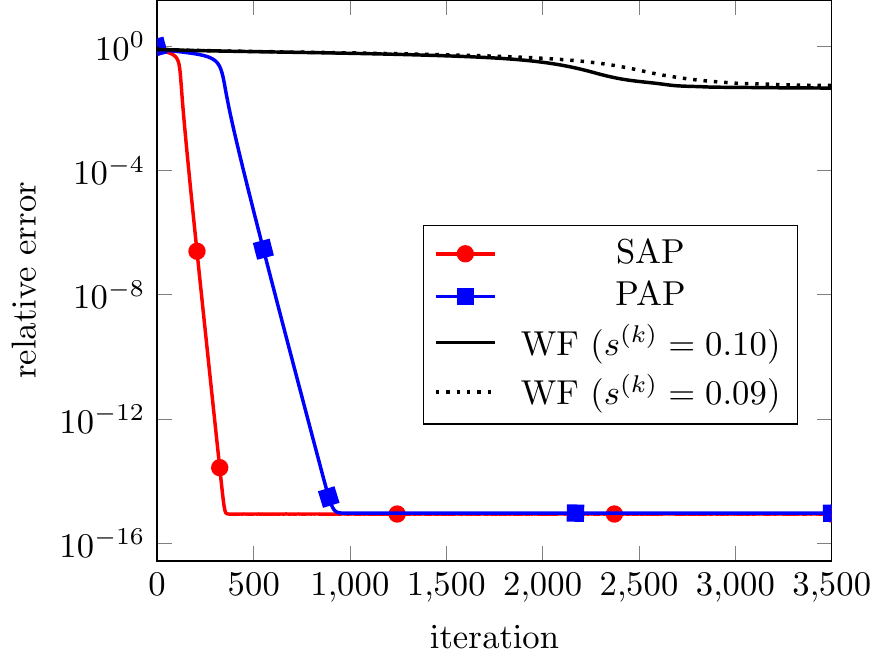}}
\caption{RE vs. iteration by SAP, PAP and WF with two patterns and the null initialization ($\gamma=0.38$ and $0.4$ for RSCB and RPP, respectively).
The optimal step size for WF is $s=0.17$ and $0.10$ for RSCB and RPP, respectively.
}
\label{fig:ERPB}\label{fig10}\end{figure}


For the two-pattern results in Fig. \ref{fig10}, we use $\rho=1.65, 1.96$ (respectively for RSCB and RPP) and hence $N\approx 3.3 n, 3.92 n$  (respectively for RSCB and RPP). 
For $n=256\times 256$, $3.3n\approx 216269, 3.92n\approx 256901$ are both
 significantly less than $(2\sqrt{n}-1)^2=261121$, the number of data in a coded diffraction pattern with the standard oversampling.
 
As expected, convergence is slowed down for both methods (much less so for SAP) as the oversampling ratio
decreases. Nevertheless, both SAP and PAP converge rapidly to the true solution, reaching machine precision, within 500 and 1200 iterations, while WF fails to converge within 4000 steps for RSCB and stagnates after 3000 iterations for RPP. 
 The optimal constant step size for WF is $s = 0.10$ and $0.17$ for RSCB and RPP, respectively. And  if we set $s= 0.11$ and $0.18$ respectively, then the relative error would blow up for both images.  On the other hand, a smaller step size results in even worse performance.


\subsection{Noise stability}
\begin{figure}[t]
\centering
\subfigure[][One pattern]{\includegraphics[scale=0.9]{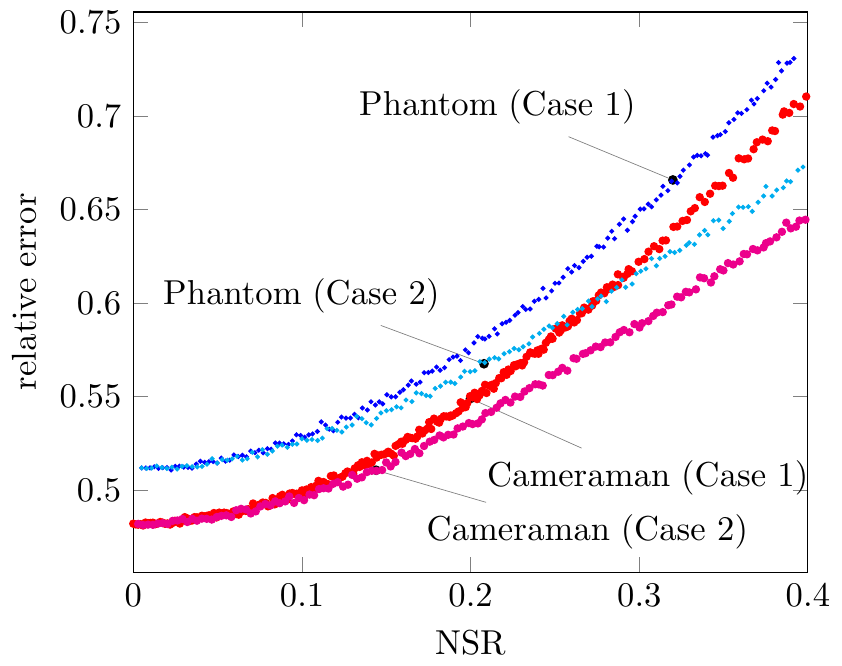}}
\subfigure[][Two patterns]{\label{fig:L2_noiseeffect_nullvector}\includegraphics[scale=0.9]{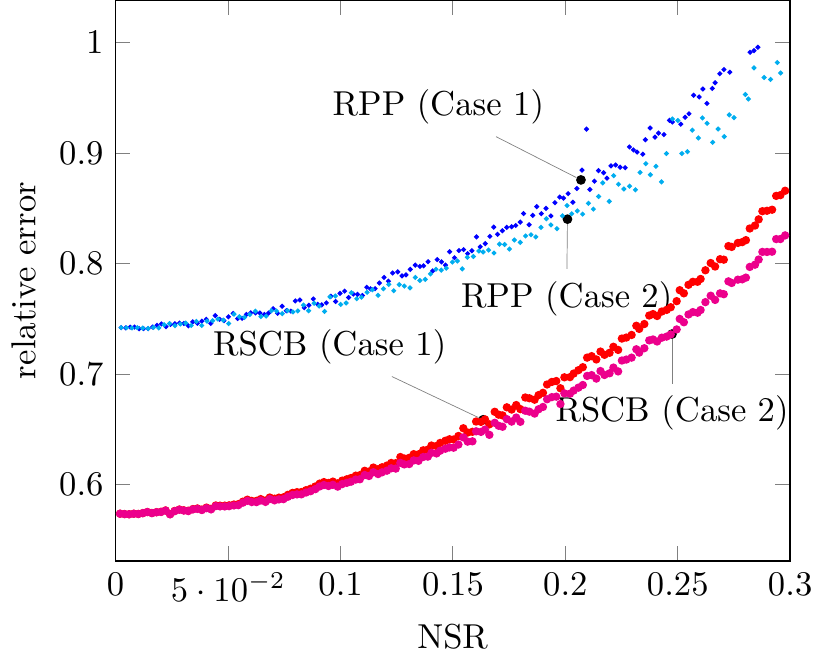}}
\caption{RE versus NSR of  the null initialization ($\gamma=0.5$)}
\label{fig:noise0}
\end{figure}
\begin{figure}[t]
\centering
\subfigure[][Cameraman]{\includegraphics[scale=0.9]{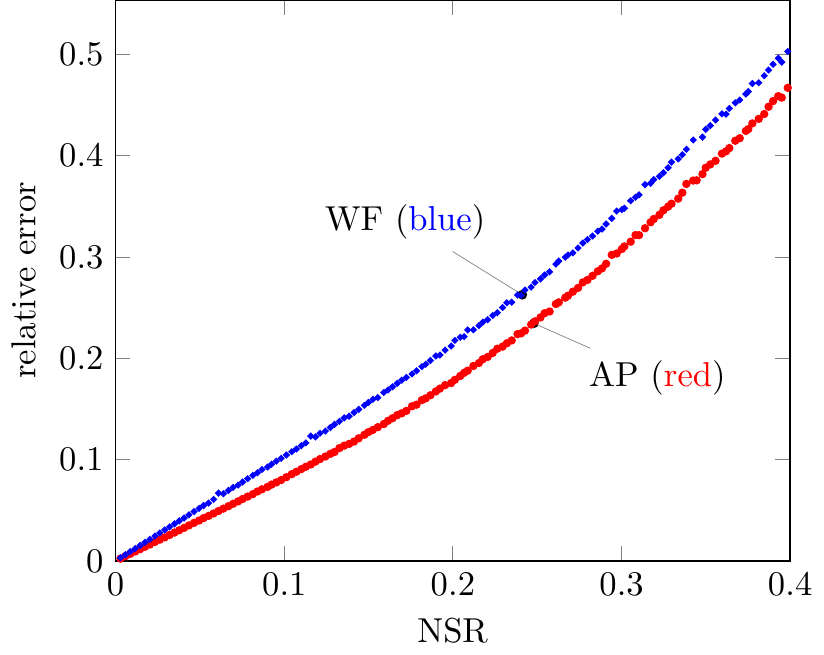}}
\subfigure[][Phantom]{\includegraphics[scale=0.9]{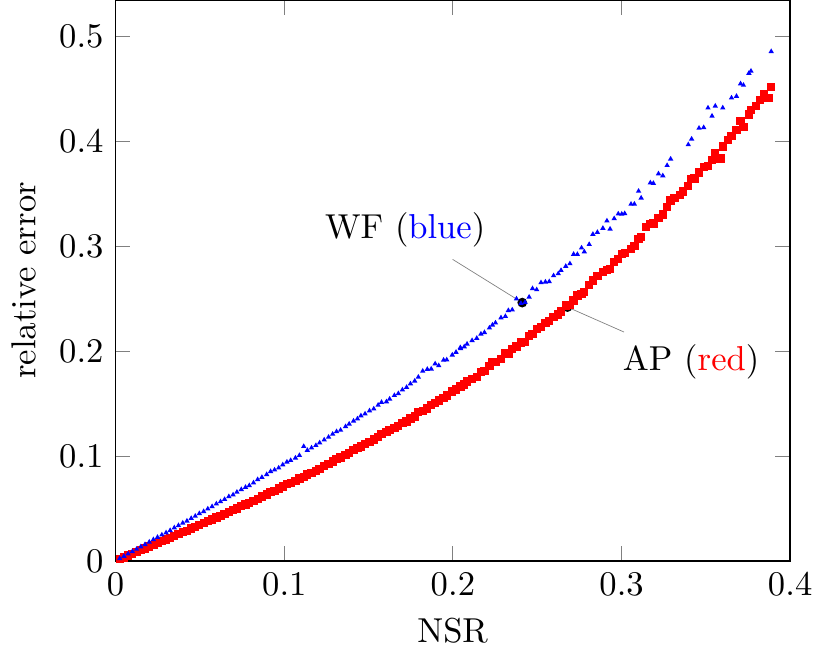}}
\subfigure[][RSCB]{\label{fig:L2_noiseeffect_RSCB}\includegraphics[scale=0.9]{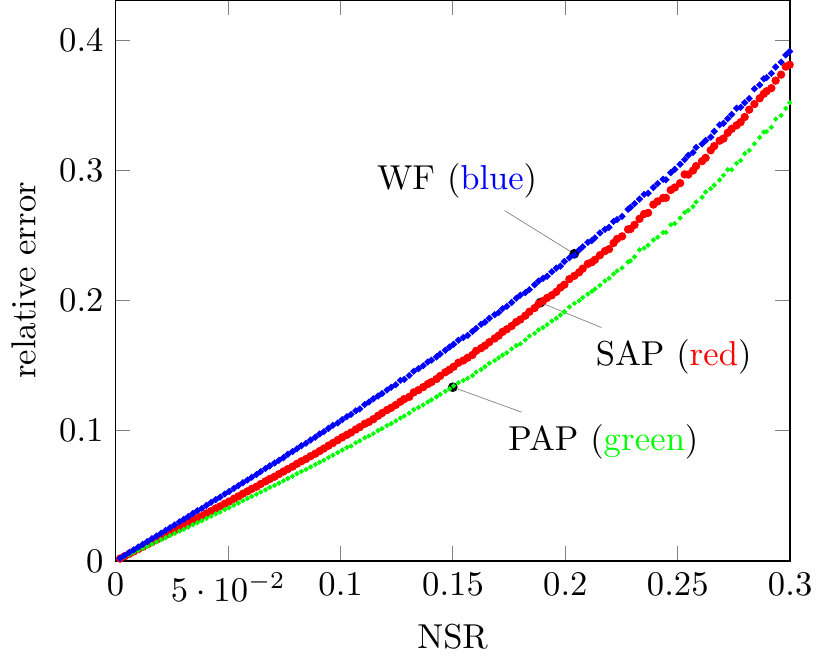}}
\subfigure[][RPP]{\label{fig:L2_noiseeffect_RPP}\includegraphics[scale=0.9]{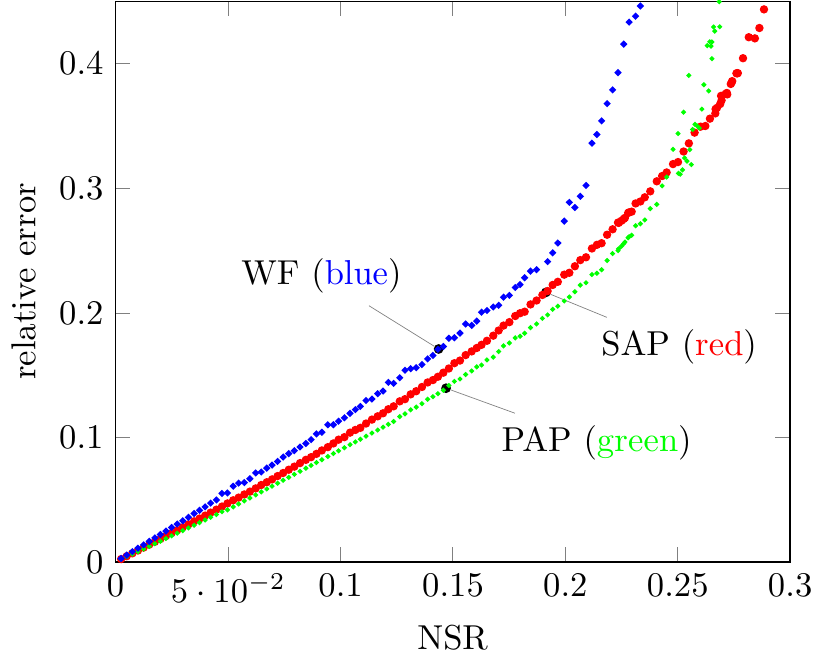}}
\caption{RE versus NSR with one (top row, 500 iterations) and two  (bottom row, 1000 iterations) patterns.}
\label{fig:noise}
\end{figure}

We test the performance of AP and WF with the Gaussian noise model
where the noisy data is generated by 
\[
b_{\rm noisy}=|A^{*}x_{0}+ \hbox{complex\ Gaussian\ noise} |.
\]
The noise  is measured by the Noise-to-Signal Ratio (NSR)
\[
\textup{NSR}=\frac{\|b_{\rm noisy}-|A^* x_0|\|_2}{\|A^* x_0\|_2}.
\]

As  pointed out in Section \ref{sec:spectral},  
since  the null initialization depends only on the choice of the index set $I$  and does not depend explicitly on $b$, the method  is more noise-tolerant than other initialization methods. 

{ Let $\hat x_{\rm null}$ be the {\em unit} leading singular vector of $A_{I_c}$, cf. \eqref{2.4}. In order to compare the effect of normalization, we  normalize the null vector  in two different ways 
\beq
\hbox{\rm Case 1.}\quad x_{\rm null}&=&\alpha{\|b_{\rm noisy}\|}\cdot\hat x_{\rm dual}\\
\hbox{\rm Case 2.}\quad x_{\rm null}&=&\alpha {\|x_0\|}\cdot\hat x_{\rm dual}
\eeq
and then compute their respective relative errors versus NSR.
As shown in Fig. \ref{fig:noise0}, the  slope of RE versus  NSR is less than 1 in  all cases. Remarkably, the slope is much smaller than 1 for small NSR when the
performance curves are strictly convex and independent of the way of normalization.  
For large SNR ($\ge 20\%$), however, the proper normalization with $\|x_0\|$ (Case 2) can significantly reduce the error.
The difference between the initialization errors of RPP and RSCB would disappear by and large  after the AP iteration converges, see Fig. \ref{fig:noise}. }

Fig. \ref{fig:noise} shows the REs of AP and WF with the null initialization after 500 iterations for the one-pattern case
and 1000 iterations for the two-pattern case.   Clearly, AP consistently achieves a smaller error than WF, with a noise amplification factor  slightly above 1.
For RPP, WF, PAP and SAP fail to converge in 1000 steps beyond $20\%, 25\%$ and $28\%$ NSR, respectively, hence the scattered data points. Increasing the maximum number of iterations can bring the upward ``tails" of the curves back to
roughly straightlines as in other plots. 

As in Fig. \ref{fig:noise0}, if $\|x_0\|$ is known explicitly, we can apply AP with the normalized
noisy data
\[
\hat b_{\rm noisy}=b_{\rm noisy} {\|x_0\|\over \|b_{\rm noisy}\|}
\]
and improve the performance in Fig. \ref{fig:noise}. And the improvement is particularly
significant for larger NSR. For simplicity of presentation, the results are omitted  here. 

{ 
\section{Conclusion and discussion}
Under the uniqueness framework of \cite{unique} (reviewed in Section \ref{sec:not}), we have proved local geometric convergence for AP of various forms and characterized the convergence rate in terms of a spectral gap. To our knowledge, this is the only such result besides \cite{DR-phasing} for phase retrieval with
one or two coded diffraction patterns. Other literature either demands a large number of coded diffraction patterns \cite{CLS1, CLS2} or asserts sublinear convergence \cite{Noll}. More importantly, we have proposed and proved  the null initialization to be an effective initialization method with performance guarantee comparable to, and numerical performance superior to, the spectral initialization and its truncated version \cite{CLS2, truncatedWF}. In practice AP with the null initialization is a globally convergent algorithm for phase retrieval with one or two coded diffraction patterns. 

Of course, a positive spectral gap does not necessarily lead to a significantly sized
basin of attraction for the true object. As mentioned above AP with
just {\bf one}  coded diffraction pattern but {\bf without} any object constraint  still has a positive spectral gap and converges locally to the true object. However,  AP with the null initialization  does not perform well numerically (not shown). 
This is likely because the corresponding phase retrieval  loses uniqueness and has many solutions \cite{unique}. On the other hand, AP with one coded diffraction pattern under the real or positivity constraint 
converges globally with randomly selected initial guess (Fig. \ref{fig4}) because
the uniqueness of solution is restored with the object constraint.  

This observation points to the importance of the design of measurement scheme besides the choice of algorithm (AP versus WF, e.g.). Results that do not take the measurement scheme into account (e.g. \cite{Noll}) are likely to be sub-optimal in theory and practice. 

A reasonable question is, How much can the measurement scheme be relaxed from that of \cite{unique}? Fig. \ref{fig10} gives a tentative answer to one aspect of the question: the number of measurement data may be reduced by as much as half
and still maintains a good numerical performance. Another aspect of the question is about the type of masks to be used in measurement: Indeed, besides the fine-grained (independently distributed) masks discussed in Section \ref{sec:not}, the coarse-grained
(correlated) masks can have a good numerical performance as well (see \cite{rpi,pum}).

A shortcoming of the present work is that we are unable to provide a useful estimate
for the size of the basin of attraction for AP; our current estimate is overly pessimistic (not shown).  Another is that we are unable to give an error bound for AP in the case of
noisy data. And finally it remains an open problem to prove global convergence of our approach (AP + the null initialization). 

These questions are particularly enticing in view of superior numerical performances that strongly indicate 
a large basin of attraction, a high degree of noise tolerance and global
convergence from randomly selected initial data.\\
}


\appendix
 \section{Proof of Theorem ~\ref{Gaussian}}
  The proof is based on the following two propositions.
 \begin{prop} \label{prop5.1}  
 There exists $x_\bot \in \IC^n$ with $x_\bot^* x_0=0$ and $\|x_\bot\|=\|x_0\|=1$  such that 
  \begin{eqnarray}\label{closeness1}
    \frac{1}{4}\|x_0x_0^* -x_{\rm null} x_{\rm null}^* \|^2 &\leq &  {\|b_I\|^2\over \|A_I^* x_\bot \|^2}. \end{eqnarray}

\commentout{
In terms of relative errors,
\[
\min_{\alpha \in \IC,\; |\alpha|=1 }\| x_{\rm null}-\alpha x_0\|^2=2(1-\beta)\le 4 (\frac{1-\beta^2}{2-\beta^2})\le 4\frac{\|b_I\|^2}{\|A^*_I x_\bot\|^2}.
\]
}
 \end{prop}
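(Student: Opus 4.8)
The plan is to take $x_\bot$ to be the unit vector in $\mathrm{span}_\IC\{x_0,x_{\rm null}\}$ orthogonal to $x_0$, and to compute $\|A_I^*x_\bot\|$ exactly rather than merely estimate it. I work under the normalization $\|x_0\|=1$ of the statement, so that $\|x_{\rm null}\|=1$ as well and, after the phase adjustment \eqref{52'}, $x_0^*x_{\rm null}=\beta:=|x_0^*x_{\rm null}|\in[0,1]$. Two preliminary observations: (i) in the Gaussian case $\cX=\IC^n$, so $x_{\rm null}$ minimizes the quadratic form $x\mapsto\sum_{i\in I}|a_i^*x|^2=\|A_I^*x\|^2=x^*A_IA_I^*x$ over the unit sphere, hence it is a unit bottom eigenvector of the Hermitian PSD matrix $A_IA_I^*$, i.e. $A_IA_I^*x_{\rm null}=\|A_I^*x_{\rm null}\|^2\,x_{\rm null}$; (ii) by \eqref{52''}, $\|x_0x_0^*-x_{\rm null}x_{\rm null}^*\|^2=2(1-\beta^2)$, so \eqref{closeness1} is equivalent to $(1-\beta^2)\,\|A_I^*x_\bot\|^2\le 2\|b_I\|^2$, and I expect in fact to get the stronger bound $(1-\beta^2)\,\|A_I^*x_\bot\|^2\le\|b_I\|^2$. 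If $\beta=1$ then $x_{\rm null}x_{\rm null}^*=x_0x_0^*$ and there is nothing to prove, so assume $\beta<1$.

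Set $x_\bot:=(1-\beta^2)^{-1/2}\bigl(x_{\rm null}-\beta x_0\bigr)$; since $x_0^*x_{\rm null}=\beta$ one checks at once that $x_0^*x_\bot=0$ and $\|x_\bot\|=1$. Then
\[
(1-\beta^2)\,\|A_I^*x_\bot\|^2=\bigl\|A_I^*x_{\rm null}-\beta A_I^*x_0\bigr\|^2=\|A_I^*x_{\rm null}\|^2-2\beta\,\Re\bigl(x_0^*A_IA_I^*x_{\rm null}\bigr)+\beta^2\|A_I^*x_0\|^2 .
\]
The cross term is the crux, and observation (i) disposes of it cleanly: $x_0^*A_IA_I^*x_{\rm null}=\|A_I^*x_{\rm null}\|^2\,(x_0^*x_{\rm null})=\beta\,\|A_I^*x_{\rm null}\|^2$, which is real. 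Using $\|A_I^*x_0\|^2=\|b_I\|^2$ this gives the exact identity
\[
(1-\beta^2)\,\|A_I^*x_\bot\|^2=(1-2\beta^2)\,\|A_I^*x_{\rm null}\|^2+\beta^2\|b_I\|^2 .
\]

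To finish I would use only that $x_{\rm null}$ is the constrained minimizer — so testing against the feasible vector $x_0$ gives $\|A_I^*x_{\rm null}\|^2\le\|A_I^*x_0\|^2=\|b_I\|^2$ — together with $\|A_I^*x_{\rm null}\|^2\ge0$. If $\beta^2\le\tfrac12$ then $1-2\beta^2\ge0$, so the identity yields $(1-\beta^2)\|A_I^*x_\bot\|^2\le(1-2\beta^2)\|b_I\|^2+\beta^2\|b_I\|^2=(1-\beta^2)\|b_I\|^2\le\|b_I\|^2$; if $\beta^2>\tfrac12$ then $1-2\beta^2<0$, so $(1-2\beta^2)\|A_I^*x_{\rm null}\|^2\le0$ and $(1-\beta^2)\|A_I^*x_\bot\|^2\le\beta^2\|b_I\|^2\le\|b_I\|^2$. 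In either case $(1-\beta^2)\le\|b_I\|^2/\|A_I^*x_\bot\|^2$, and with \eqref{52''} this gives $\tfrac14\|x_0x_0^*-x_{\rm null}x_{\rm null}^*\|^2=\tfrac12(1-\beta^2)\le\tfrac12\,\|b_I\|^2/\|A_I^*x_\bot\|^2$, which is even stronger than \eqref{closeness1}.

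The only real obstacle is the cross term $\Re\bigl(x_0^*A_IA_I^*x_{\rm null}\bigr)$: the reflex is to bound it by Cauchy--Schwarz, but that loses a constant and leads to a clumsy quadratic inequality for $\|A_I^*x_\bot\|$ (giving only a factor $(1+\beta)^2\le4$ in place of $1$). Recognizing instead that a constrained minimizer of a quadratic form on the sphere is an eigenvector of that form turns the cross term into an exact scalar, and the rest is bookkeeping. One should also note that the degenerate cases in which $A_I^*$ has nontrivial kernel — where $\|A_I^*x_\bot\|$ might vanish — are excluded almost surely by the standing hypothesis $\nu=n/|I|<1$.
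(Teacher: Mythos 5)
Your proof is correct and follows essentially the same route as the paper's: the same $x_\bot$ (up to sign) in $\mathrm{span}\{x_0,x_{\rm null}\}$, and the same key fact that the constrained minimizer $x_{\rm null}$ is a bottom eigenvector of $A_IA_I^*$, which the paper phrases as $z$ lying in a singular subspace orthogonal to $x_{\rm null}$ so that $\|A_I^*(\cdot)\|^2$ decouples. Your bookkeeping (exact identity for the cross term plus a two-case analysis) even yields the marginally sharper bound $\|b_I\|^2/\|A_I^*x_\bot\|^2\ge 1-\beta^2$, versus the paper's $(1-\beta^2)/(2-\beta^2)\ge\tfrac12(1-\beta^2)$.
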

 \begin{proof}
Since $\xnul$ is optimally phase-adjusted, we have
\beq
\label{52}
\beta: =x_0^*\xnul\ge 0
\eeq
and \beq
\label{a.2}
 x_0=\beta x_{\rm null}+\sqrt{1-\beta^2}\, z 
 \eeq
for some unit vector $z^*x_{\rm null}=0$.
Then \beq
\label{571} x_\bot :=-(1-\beta^2)^{1/2} x_{\rm null}+\beta z 
\eeq
 is a unit vector  satisfying $x_0^*x_\perp=0$. Since $\xnul$ is a singular vector and $z$ belongs in
 another  singular subspace, we have 
  \beqn
\|A_I^* x_0\|^2&=&\beta^2\|A_I^* x_{\rm null}\|^2+(1-\beta^2)\|A_I^* z\|^2, \\
 \|A_I^* x_\bot \|^2&=&(1-\beta^2)\|A_I^*x_{\rm null}\|^2+\beta^2\|A_I^* z\|^2 
  \eeqn
from which it follows that 
  \beq\label{57}
&&(2-\beta^2)\|A_I^* x_0\|^2-(1-\beta^2)\|A_I^* x_\bot \|^2\\
 &=&\|A_I^* x_{\rm null}\|^2+2(1-\beta^2)^2\left(
 \|A_I^* z\|^2-\|A_I^* x_{\rm null}\|^2\right)\ge 0.\nn
 \eeq
 \commentout{
and 
\beq
\label{56'}
\|A_I^* x_\bot \|^2 - \|A_I^* x_0\|^2=
(2\beta^2-1)  \left(
 \|A_I^* z\|^2-\|A_I^* x_{\rm null}\|^2\right)\geq 0
 \eeq
 by \eqref{52}. 
 }
By  \eqref{57}, \eqref{52''} and $\|b_I\|= \|A_I^*x_0\|$, we also have   \begin{eqnarray}
 &&  \frac{\|b_I\|^2}{  \|A_I^* x_\bot \|^2} 
 \ge {1-\beta^2\over 2-\beta^2} \ge \half(1-\beta^2)=  \frac{1}{4}\|x_0x_0^* -x_{\rm null} x_{\rm null}^* \|^2.
 \eeq

 \end{proof}

   \begin{prop}\label{BnormA}
Let  $A\in \IC^{n\times N}$ be  an i.i.d.  complex standard Gaussian random matrix.
Then for any $\ep>0, \delta>0, t>0$
 \[ \|b_I\|^2\le |I| \lt( \left(\frac{2+t}{1-\epsilon} \right) \frac{|I|}{N}+\ep \lt(-2\ln \lt(1-{|I|\over N}\rt)+\delta\rt)\rt)
\]
with probability at least 
\beq
\label{prob'}
&& 1 -2\exp\left(-{N}{\delta^2 e^{-\delta}|1-\sigma|^2/2} \right)-
2\exp\lt(-{2\ep^2 |1-\sigma|^2} \sigma^2 N\rt)-Q \nn
 \eeq
 where $Q$ has the asymptotic upper bound
 \beqn
2 \exp\lt\{-c\min \lt[{e^2t^2 \over 16} {|I|^2\over N} \lt(\ln \sigma^{-1}\rt)^2,~{et\over 4}|I|\ln\sigma^{-1}\rt]\rt\},\quad  \sigma:={|I|\over N}\ll 1. \eeqn\end{prop}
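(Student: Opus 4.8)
The plan is to reduce the estimate to a statement about the order statistics of i.i.d.\ exponential random variables and then bound the sum of the smallest $|I|$ of them by combining a quantile estimate with a sub‑exponential concentration inequality. After normalizing $\|x_0\|=1$ (as in Proposition~\ref{prop5.1}), each scalar $a_j^*x_0$ is a centered complex Gaussian, so $b(j)^2=|a_j^*x_0|^2$, $j=1,\dots,N$, are i.i.d.\ with the law of a $\chi^2_2$ variable, i.e.\ of $2E$ with $E\sim\mathrm{Exp}(1)$; hence $\IP(b^2\le t)=1-e^{-t/2}$ and the $\sigma$-quantile is $q_\sigma=-2\ln(1-\sigma)$, which already accounts for the term $-2\ln(1-\sigma)$ in the bound. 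The point is that $\|b_I\|^2=\sum_{i\in I}b(i)^2$ is exactly the sum of the $|I|$ smallest among the $N$ values $b(j)^2$.

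First I would fix a deterministic threshold $\theta$ slightly above $q_\sigma$, its inflation governed by the slack parameters (roughly $\theta=q_\sigma+\delta$ together with an additional $\epsilon$-margin), so that $F(\theta):=1-e^{-\theta/2}$ exceeds $\sigma$ by a definite amount: with $\theta=q_\sigma+\delta$ one has $F(\theta)-\sigma=(1-\sigma)(1-e^{-\delta/2})\ge (1-\sigma)\tfrac{\delta}{2}e^{-\delta/2}$, since $e^{\delta/2}\ge1+\tfrac{\delta}{2}$. On the event $\mathcal A=\{\#\{j:b(j)^2\le\theta\}\ge|I|\}$ the $|I|$ smallest values all lie below $\theta$, hence $\|b_I\|^2\le S:=\sum_{j=1}^N b(j)^2\,\mathbf 1\{b(j)^2\le\theta\}$. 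The complement $\mathcal A^c$ is a lower-deviation event for a $\mathrm{Binomial}(N,F(\theta))$ count falling below $\sigma N$, controlled by a Hoeffding/Chernoff tail bound; this is where the exponent $\delta^2e^{-\delta}|1-\sigma|^2$ in \eqref{prob'} comes from, and the $\epsilon$-exponent from the companion margin.

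Next I would estimate $S$. Its mean is $\IE S=N\,\IE[b^2\mathbf 1\{b^2\le\theta\}]=N\bigl(2F(\theta)+2(1-F(\theta))\ln(1-F(\theta))\bigr)$, and $\ln(1-p)\le-p$ gives $\IE S\le 2NF(\theta)^2$; expanding $F(\theta)$ in $\sigma$ and $\delta$ yields the $\tfrac{2+t}{1-\epsilon}\sigma$ leading term and the residual $\epsilon(-2\ln(1-\sigma)+\delta)$ term once the concentration margin is folded in. Since $S$ is a sum of $N$ i.i.d.\ nonnegative variables bounded by $\theta$ with variance at most $\theta\,\IE[b^2\mathbf 1\{b^2\le\theta\}]$, Bernstein's inequality gives $S\le\IE S+u$ with a failure probability of the shape $\exp\{-c\min[(\cdot)^2,(\cdot)]\}$, which after inserting the relevant scales (the truncation and variance proxies being of order $\sigma$, the natural fluctuation scale carrying the quantile logarithm $\ln\sigma^{-1}$) becomes the quantity $Q$. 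A clean alternative for this step is Rényi's representation, which writes $\|b_I\|^2=2\sum_{k=1}^{|I|}w_kZ_k$ with i.i.d.\ $Z_k\sim\mathrm{Exp}(1)$ and weights $w_k\le\sigma$, a sub‑exponential sum handled by the same Bernstein estimate. A union bound over $\mathcal A^c$ and the Bernstein failure event then finishes the proof.

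The hard part will be the last step: arranging the concentration so that the deviation margin is absorbed exactly into the $\tfrac{t}{1-\epsilon}\sigma$ and $\epsilon(-2\ln(1-\sigma)+\delta)$ budget while the failure probability matches $Q$ with its logarithmic factors — this is bookkeeping, but delicate, and the exact constants ($\delta^2e^{-\delta}$, the two-regime exponent, the $\ln\sigma^{-1}$ powers) must be tracked carefully. A secondary point to get right is that the count event $\mathcal A$ and the truncated sum $S$ are not independent; this is harmless because $\|b_I\|^2\le S$ holds deterministically on $\mathcal A$, so only a union bound over $\mathcal A^c$ and the Bernstein failure is required. The distributional reduction and the quantile estimate are routine.
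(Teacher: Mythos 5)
Your proposal follows essentially the same route as the paper's proof: the $\chi^2_2$ reduction with quantile $\tau_*=-2\ln(1-\sigma)$, a Hoeffding bound on the count of samples below a threshold to control the order statistics, a Bernstein (sub-exponential) bound on the truncated sum, and a union bound; the mean computation $\IE[b^2\mathbf 1\{b^2\le\theta\}]=2F+2(1-F)\ln(1-F)\le 2F^2$ is exactly the paper's \eqref{sigmaB}. The one place your bookkeeping diverges is the truncation level: you truncate at the single inflated threshold $\theta=\tau_*+\delta$ and bound $\|b_I\|^2$ by the full truncated sum $S$ on the counting event, which makes the leading term $\IE S\le 2NF(\tau_*+\delta)^2$ carry the $\delta$-inflation with a coefficient of order $2N\sigma\delta=2|I|\delta$ --- an $O(1)$ multiple of $|I|\delta$ rather than the $\epsilon$ multiple appearing in the stated bound. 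The paper instead truncates at $\tau_*$ exactly (so the truncated mean is $\le 2|I|\sigma$ and Bernstein gives the clean $\tfrac{2+t}{1-\epsilon}\sigma$ term after dividing by $|\hat I|\ge(1-\epsilon)|I|$), and separately charges the at most $\epsilon|I|$ indices of $I$ lying above $\tau_*$ at the level $\tau_{|I|}\le\tau_*+\delta$, using the monotonicity of the partial averages $m\mapsto m^{-1}\sum_{i\le m}\tau_i$; that split is precisely the source of the residual term $\epsilon\left(-2\ln(1-\sigma)+\delta\right)$. So your argument yields a correct bound of the same form and strength, but to reproduce the exact constants of the statement you should adopt this two-threshold decomposition (truncate at $\tau_*$, control $|\hat I|$, and bound the spill-over order statistics separately) rather than a single inflated threshold.
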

 The proof of Proposition \ref{BnormA} is given in the next section.  \\

Now we turn to the proof of Theorem  \ref{Gaussian}.

Without loss of the generality we may assume $\|x_0\|=1$. Otherwise, we replace $x_0, x_{null}$ by $x_0/\|x_0\|$ and $x_{\rm null}/\|x_0\|$, respectively. 
By an additional orthogonal  transformation which does not affect the statistical nature of the complex Gaussian matrix, we can map $x_0$ to $e_1$, the canonical  vector with 1 as the first entry and zero elsewhere.  \commentout{In this representation, it follows from \eqref{52}-\eqref{571} that 
\[
\beta=x_{\rm null}(1),\quad z(1)=\sqrt{1-\beta^2},\quad x_\perp(1)=0. 
\]
 }

Let $x_\bot$ be any unit vector of the form $
x_\bot =[0,y^\top ]^\top$ where $  y\in \IC^{n-1}$ is an unit vector. 
 Let  $A'\in \IC^{N\times (n-1)}$ be the sub-column matrix of $A_I^*$ with its first column vector deleted and
$\{\nu_{i}\}_{i=1}^{n-1}$ the singular values of $A'$ in the ascending order.
 Let 
 \[
 B'=A'\,\diag(y/|y|)
 \]  which has 
 the same singular values as $A'$.  We have
\[
\|A_I^* x_\bot \|=\|B'\, | y| \|
\] 
and hence
\[
\|A_I^* x_\bot \|=( \| \Re(B') \, |y|  \|^2+\| \Im(B')\,  |y|  \|^2)^{1/2}\ge \sqrt{2}\lt(\| \Re(B') \, |y|  \|\wedge \| \Im(B')\,  |y|  \|\rt).
\]

Note that $A'$ and $B'$ are both i.i.d. complex  Gaussian random matrices for any fixed $y\in \IC^{n-1}$. 
By the  theory of Wishart matrices \cite{DS},  the singular values $\{\nu_j^R \}_{j=1}^{n-1}, \{\nu_j^I \}_{j=1}^{n-1}$ (in the ascending order)  of  $\Re(B'), \Im(B')$ satisfy  the probability bounds
that for every $t>0$ and $j=1,\cdots,n-1$
 \beq
\label{w1} \mathbb{P}\lt(\sqrt{|I|}-(1+t)\sqrt{n}\le \nu_j^R\le \sqrt{|I|}+(1+t)\sqrt{n}\rt)&\ge &1-2e^{-nt^2/2},\\
 \label{w2} \mathbb{P}\lt(\sqrt{|I|}-(1+t)\sqrt{n}\le \nu_j^I\le \sqrt{|I|}+(1+t)\sqrt{n}\rt)&\ge& 1-2e^{-nt^2/2}.
 \eeq
 By Proposition \ref{prop5.1} and \eqref{w1}-\eqref{w2}, we have
 \begin{eqnarray*}
\|x_0x_0^* -x_{\rm null}x_{\rm null}^*\|&\le &{ \sqrt{2}\|b_I\|\over \| \Re(B') \, |y|  \|\wedge \| \Im(B')\,  |y|  \|}\\
&\le&\sqrt{2} \|b_I\|(\nu^R_{n-1}\wedge \nu^I_{n-1})^{-1}\\
&\le& \sqrt{2} \|b_I\|(\sqrt{|I|}-(1+t)\sqrt{n})^{-1}. 
\eeqn
By Proposition \ref{BnormA}, we obtain the desired bound \eqref{error}. 
The success probability is at least the expression \eqref{prob'} minus $4e^{-nt^2/2}$ which equals the expression  
\eqref{prob}.
 

\subsection{Proof of Proposition~\ref{BnormA} }

By the Gaussian assumption, 
   $b(i)^2=|a_i^* x_0|^2$ has a  chi-squared distribution with  the probability density $e^{-z/2}/2$ on $z\in [0,\infty)$ and the cumulative distribution  \begin{eqnarray*}
 F(\tau):=\int_{0}^\tau  2^{-1}\exp(-z/2) dz=1-\exp(-\tau/2). \end{eqnarray*}
 Let  \beq\label{62'} 
\tau_*=-2\ln (1-|I|/N)
\eeq
for which $ F(\tau_*)=|I|/N.$
 
Define
\[ \hat I:=\{i:   b(i)^2\le  \tau_*  \}=\{ i: F(b^2(i))\le |I|/N \},\] and 
\[
\|\hat b\|^2:=\sum_{i\in \hat I} b(i)^2.
\]

Let  \[ \{\tau_1\le \tau_2\le \ldots\le \tau_N\}\] be the sorted sequence of  $\{b(1)^2,\ldots, b(N)^2\}$ in magnitude. 
  \begin{prop}\label{NN} 
{\bf (i)} For any $\delta>0$, we have
   \beq
 \tau_{|I|}&\le& \tau_*+\delta
 \eeq
 with probability at least 
 \beq\label{A3}
 1-\exp\left(-\frac{N}{2} {\delta^2 e^{-\delta} |1-{{|I|/N}}|^2} \right)
   \eeq

{\bf (ii)} For each $\epsilon>0$,
we have
  \beq\label{ZZ} |\hat I| \ge |I| (1-\epsilon)\eeq
  or 
 equivalently,   \beq \label{69} \tau_{\lfloor |I| (1-\epsilon)\rfloor }\le \tau_*
  \eeq
   with probability at least
      \beq
   \label{69.5}
 1-  2\exp\lt(-{4\ep^2 |1-{|I|/N}|^2} |I|^2/N\rt) 
\eeq
   \end{prop}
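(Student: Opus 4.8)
The plan is to interpret both statements as consequences of the fact that the empirical cumulative distribution function $F_N$ of the i.i.d.\ sample $\{b(i)^2\}_{i=1}^N$ is uniformly close to the true chi-squared cdf $F$, combined with the fact that sorting $\{b(i)^2\}$ and applying $F$ is the same as sorting $\{F(b(i)^2)\}$ (since $F$ is strictly increasing). Concretely, for part (i) I would note that the event $\tau_{|I|} > \tau_*+\delta$ means that fewer than $|I|$ of the samples fall in $[0,\tau_*+\delta]$, i.e.\ $F_N(\tau_*+\delta) < |I|/N$. Since $F(\tau_*+\delta) = 1-(1-|I|/N)e^{-\delta/2}$ and $F(\tau_*)=|I|/N$, the gap is $F(\tau_*+\delta)-F(\tau_*) = (1-|I|/N)(1-e^{-\delta/2}) \ge (1-|I|/N)\delta/2\cdot e^{-\delta/2}$ (using $1-e^{-u}\ge u e^{-u}$ for $u\ge 0$ with $u=\delta/2$), so the bad event forces $F_N$ to underestimate $F$ at the point $\tau_*+\delta$ by at least this gap. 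Hoeffding's inequality applied to the Bernoulli indicators $\mathbf 1\{b(i)^2 \le \tau_*+\delta\}$ then gives a bound of the form $\exp(-2N \cdot (\text{gap})^2)$; plugging in the estimate for the gap yields the stated probability \eqref{A3} (up to the exact constant in the exponent, which should be tracked carefully).

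For part (ii), I would observe that $|\hat I|=\#\{i: F(b(i)^2)\le |I|/N\}$, which by monotonicity of $F$ equals $\#\{i: b(i)^2 \le \tau_*\}$, i.e.\ $N F_N(\tau_*)$. The equivalence with \eqref{69} is immediate: $\tau_{\lfloor |I|(1-\epsilon)\rfloor}\le \tau_*$ iff at least $\lfloor |I|(1-\epsilon)\rfloor$ samples lie below $\tau_*$ iff $|\hat I|\ge |I|(1-\epsilon)$. Now $\mathbb E|\hat I| = N F(\tau_*) = |I|$, so the bad event $|\hat I| < |I|(1-\epsilon)$ is a lower-tail deviation of $|\hat I|$ from its mean by $\epsilon|I|$; again Hoeffding on the indicators $\mathbf 1\{b(i)^2\le \tau_*\}$ gives $\exp(-2(\epsilon|I|)^2/N)$. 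The target bound \eqref{69.5} has an extra factor $|1-|I|/N|^2$ and a factor $4$ versus the bare $2\epsilon^2|I|^2/N$; I would obtain the sharper dependence by using the variance-aware form of the Chernoff/Hoeffding bound for sums of Bernoullis with success probability $p=|I|/N$, whose variance is $p(1-p)$, rather than the crude $[0,1]$-range bound — this is where the $(1-p)^2 = |1-|I|/N|^2$ refinement enters.

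The main obstacle is bookkeeping the exact constants so that the exponents match \eqref{A3} and \eqref{69.5} precisely: one must decide whether to invoke plain Hoeffding (range $[0,1]$), Bernstein, or the relative-entropy Chernoff bound for binomials, and then verify that the elementary inequality $1-e^{-\delta/2}\ge (\delta/2)e^{-\delta/2}$ (and any analogous bound used for the $\epsilon$-side) is tight enough to reproduce the claimed rates. There is no deep difficulty — the estimates are all one-line concentration arguments for a single binomial random variable — but the stated probabilities are written with specific numerical constants, so the proof must choose the concentration inequality that yields exactly those, and I would expect a short lemma reconciling the two forms. Everything else (the equivalence of the order-statistic statements with the empirical-cdf statements, and the reduction via monotonicity of $F$) is routine.
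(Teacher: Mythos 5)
Your argument is essentially the paper's own proof. For part (i) the paper applies Hoeffding's inequality to the indicators $\chi_{\{b(i)^2>\tau_*+\delta\}}$ with the gap $\zeta=F(\tau_*+\delta)-F(\tau_*)\ge \tfrac{\delta}{2}e^{-(\tau_*+\delta)/2}=(1-|I|/N)\tfrac{\delta}{2}e^{-\delta/2}$, which is exactly your computation, and $\exp(-2N\zeta^2)$ then reproduces \eqref{A3} with no slack. For part (ii) the paper likewise runs \emph{plain} Hoeffding with gap $\lfloor |I|\ep\rfloor/N$ and concludes a failure probability of $\exp(-2\lfloor |I|\ep\rfloor^2/N)$ --- which is the quantity that actually appears in the final probability bound \eqref{prob} of Theorem \ref{Gaussian} --- rather than the constants displayed in \eqref{69.5}. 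So the ``main obstacle'' you identify is an inconsistency internal to the paper, not something your proof must overcome. Moreover, your proposed fix (a variance-aware Chernoff bound exploiting the Bernoulli variance $p(1-p)$) would not produce the displayed form either: it yields an exponent on the order of $\ep^2|I|/(1-|I|/N)$, whereas \eqref{69.5} carries a multiplicative factor $|1-|I|/N|^2$ that \emph{weakens} the plain Hoeffding exponent (it appears to have been copied over from part (i) or from Proposition \ref{bhatb}). Drop that refinement and present the bare Hoeffding bound; that is what the paper does and what the downstream argument consumes.
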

  \begin{proof} 
  
  {\bf (i)}
 Since  $F'(\tau)= \exp(-\tau/2)/2$, 
 \beq |F(\tau+\epsilon)-F(\tau)|\ge \epsilon/2 \exp(-(\tau+\epsilon)/2).\label{a.16}\eeq
 For $\delta>0$, let
 \[\zeta:=  F(\tau_*+\delta)-F(\tau_*)\]
 which by \eqref{a.16} satisfies
 \beq
 \label{70}
\zeta \ge \frac{\delta}{2}  \exp(-\frac{1}{2}( \tau_*+\delta)). 
\eeq

Let   $\{w_i: i=1,\ldots, N\}$ be  the i.i.d. indicator random variables
   \[
   w_i=\chi_{\{b(i)^2>\tau_*+\delta\}}
   \]
   whose expectation is given by
   \[ \IE[w_i]=1-F(\tau_*+\delta).
   \] 
    The  Hoeffding inequality  yields
     \begin{eqnarray}
\label{ineq1}
 \mathbb{P}(\tau_{|I|} >\tau_*+\delta)
&=&\mathbb{P}\left(\sum_{i=1}^N w_i >N-|I|\right)\\
&=&\mathbb{P}\left(N^{-1}\sum_{i=1}^N w_i-\IE[w_i] >1-|I|/N-\IE[w_i]\right)\nn\\
&=&\mathbb{P}\left(N^{-1}\sum_{i=1}^N w_i-\IE[w_i] >\zeta\right)\nn\\
&\le& \exp(-2N\zeta^2).\nn\end{eqnarray}
Hence, for any fixed $\delta>0$, 
 \beq
  \tau_{|I|}\le& \tau_*+\delta
 \eeq
 holds
 with probability at least 
  \beqn
  \label{a.19}
 1-\exp(-2N\zeta^2)
 &\ge &1-\exp\left(-\frac{N\delta^2}{2} e^{-\tau_*-\delta}\right)\\
  &= &1-\exp\left(-\frac{N \delta^2}{2} e^{-\delta} \left |1-{|I|/N} \right|^2 \right) \nn
   \eeqn
 by \eqref{70}.

{\bf (ii)}    Consider the following replacement
\[
\begin{array}{cl}
(a)   & |I| \longrightarrow \lceil |I| (1-\epsilon)\rceil    \\
(b)   & \tau_* \longrightarrow F^{-1}(\lceil |I| (1-\epsilon)\rceil /N)   \\
(c)   & \delta \longrightarrow F^{-1}(|I|/N)-F^{-1}(\lceil |I| (1-\epsilon)\rceil /N)   \\
(d)   & \zeta \longrightarrow F^{-1}(\tau_*+\delta)-F^{-1}(\tau_*)=|I|/N-\lceil |I| (1-\epsilon)\rceil /N= \frac{\lfloor |I|\epsilon\rfloor}{N}\end{array}
\]
in the preceding argument.
Then (\ref{ineq1}) becomes
 \beqn
\IP\lt(\tau_{\lceil |I|(1-\epsilon)\rceil}>F^{-1}(|I|/N) \rt)& \le & \exp(-2N\zeta^2)
 = \exp\left(-\frac{ 2\lfloor |I|\epsilon\rfloor^2}{N} \right).
 \eeqn
 That is, \[
\tau_{\lceil |I|(1-\epsilon)\rceil}\le \tau_* \]
holds with probability at least
 \beqn
1-\exp(-2 {\lfloor  |I|\epsilon \rfloor^2 }/{N}).
 \eeqn

\end{proof}

   \begin{prop}\label{bhatb}
For each $\epsilon>0$ and $\delta>0$,
 \beq\label{66}
\frac{\|b_I\|^2}{|I|}\le \frac{ \|\hat b\|^2}{|\hat I|}+
\epsilon  (\tau_*+\delta)\eeq
with  probability at least 
\beq \label{77}
 1-2\exp\left(-\frac{1}{2} {\delta^2 e^{-\delta} |1-{|I|/N}|^2 N} \right)-
 2\exp\lt(-{2\ep^2 |1-{|I|/N}|^2}{|I|^2\over N}\rt). 
 \eeq
 
 \end{prop}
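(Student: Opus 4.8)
The plan is to compare the average $\|b_I\|^2/|I|$ of the $|I|$ smallest squared data entries with the average $\|\hat b\|^2/|\hat I|$ over the fixed-threshold set $\hat I=\{i:b(i)^2\le\tau_*\}$, and to bound their difference by $\ep(\tau_*+\delta)$ on a high-probability event. Writing $\|b_I\|^2=\sum_{j=1}^{|I|}\tau_j$ through the order statistics, I would work throughout on the intersection $E$ of the two favorable events of Proposition~\ref{NN}, namely $\{\tau_{|I|}\le\tau_*+\delta\}$ from part~(i) and $\{|\hat I|\ge|I|(1-\ep)\}$ from part~(ii). A union bound combining \eqref{A3} and \eqref{69.5} then gives $\IP(E)$ at least the bound in \eqref{77}, the numerical slack being absorbed by $\exp(-x)\le 2\exp(-x)$ and by the inequality $4\ge 2$ in the exponents.

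On $E$ I would split into two cases according to the size of $\hat I$. If $|\hat I|\ge|I|$, then necessarily $\tau_{|I|}\le\tau_*$ and hence $I\subseteq\hat I$; moreover every index $i\in\hat I\setminus I$ satisfies $b(i)^2\ge\tau_{|I|}\ge|I|^{-1}\sum_{j\le|I|}\tau_j=\|b_I\|^2/|I|$, and since enlarging a finite list by entries no smaller than its current mean cannot lower the mean, we obtain $\|\hat b\|^2/|\hat I|\ge\|b_I\|^2/|I|$, which is stronger than \eqref{66} and uses no randomness. If instead $|\hat I|<|I|$, then $\tau_*<\tau_{|I|}$ forces $\hat I\subseteq I$, so $\|b_I\|^2=\|\hat b\|^2+\sum_{i\in I\setminus\hat I}b(i)^2$; on $E$ each summand obeys $\tau_*<b(i)^2\le\tau_{|I|}\le\tau_*+\delta$, and $|I\setminus\hat I|=|I|-|\hat I|\le\ep|I|$ by part~(ii), whence $\|b_I\|^2\le\|\hat b\|^2+\ep|I|(\tau_*+\delta)$; dividing by $|I|$ and using $|\hat I|\le|I|$ yields \eqref{66}.

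I expect the only genuinely delicate point to be the case $|\hat I|\ge|I|$: there the two obvious estimates $\|\hat b\|^2\ge\|b_I\|^2$ and $|\hat I|\ge|I|$ act on the ratio $\|\hat b\|^2/|\hat I|$ in opposite directions, so neither monotonicity of the numerator nor of the denominator alone closes the argument. The fix is the running-mean remark above, that each index of $\hat I\setminus I$ carries mass at least the current average $\|b_I\|^2/|I|$ because its squared value exceeds $\tau_{|I|}$. The remaining ingredients---identifying $\|b_I\|^2$ with the partial sum of order statistics, the set inclusions $I\subseteq\hat I$ and $\hat I\subseteq I$ in the two cases, and the arithmetic with the error term $\ep(\tau_*+\delta)$---are routine, and the probability in \eqref{77} is exactly the union bound over the complements of the events in Proposition~\ref{NN}.
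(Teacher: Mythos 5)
Your proposal is correct and takes essentially the same route as the paper: the paper phrases the easy case via the monotonicity of the running mean $T(m)=m^{-1}\sum_{i\le m}\tau_i$ of the sorted values (your element-wise "adding entries no smaller than the current mean" remark is exactly this), and in the case $|\hat I|<|I|$ it likewise bounds the at most $\ep|I|$ excess terms by $\tau_{|I|}\le\tau_*+\delta$ using both parts of Proposition~\ref{NN}, with the same union bound and the same numerical slack in the exponents.
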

 \begin{proof}
Since $\{\tau_j\}$ is an increasing sequence, the function  $T(m)=m^{-1}\sum_{i=1}^m\tau_i$  is also increasing. 
Consider the two alternatives
either   $|I|\ge |\hat I|$ or $|\hat I|\ge |I| $. 
For the latter,   
\[
{\|b_I\|^2}/{|I|}\le { \|\hat b\|^2}/{|\hat I|} 
\]
 due to the monotonicity of $T$.

 For the former case  $|I|\ge |\hat I|$,  we have
 \beqn
T(|I|)&= &|I|^{-1}\sum_{i=1}^{|\hat I|} \tau_i+|I|^{-1}\sum_{i=|\hat I|+1}^{|I|} \tau_i\\
 &\le &T(|\hat I|)+|I|^{-1} (|I|-|\hat I|) \tau_{|I|}.
 \eeqn
By  Proposition \ref{NN}  (ii)  $|\hat I|\geq (1-\ep) |I|$ and hence
 \beqn
T(|I|)&\le &T(|\hat I|)+|I|^{-1}(|I|-|I|(1-\epsilon))\tau_{|I|}
=T(|\hat I|)+\epsilon\tau_{|I|}
\eeqn
 with probability at least given by \eqref{69.5}.

By Proposition \ref{NN} (i),  
$
\tau_{|I|}\le \tau_* +\delta
$
with probability at least given by \eqref{A3}.
\end{proof}

Continuing the proof of Proposition \ref{BnormA},  let us
 consider the i.i.d. centered, bounded random variables \beq
 \label{78}
 Z_i := 
{N^2\over |I|^2} \lt[b(i)^2\chi_{\tau_*}-\IE[b(i)^2\chi_{\tau_*}]\rt] 
 \eeq
 where $\chi_{\tau_*}$ is the characteristic function of the set $\{b(i)^2\leq \tau_*\}$. 
Note that  \beq
\IE(b(j)^2\chi_{\tau_*})&= &\int_{0 }^{\tau_*}  2^{-1} z\exp(-z/2) dz=2-(\tau_*+2)\exp(-\tau_*/2)\le 2 |I|^2/N^2\label{sigmaB} \eeq
and
hence
\beq\label{a.23}
-2 \le Z_i \le \sup\lt\{{N^2\over |I|^2} b(i)^2\chi_{\tau_*} \rt\}= {N^2\over |I|^2} \tau_*.
\eeq

 \commentout{
By the Hoeffding inequality, we then have
  \beq
   \label{a.23'}
   \mathbb{P}\{|\sum_{i=1}^N Z_i|\ge  N t \}\le 2 \exp\lt(-{2N^2 t^2\over N (2+\tau_* N^2/|I|^2)^2}\rt)
   \eeq
}

Next recall the Bernstein-inequality.
   \begin{prop}\cite{Roman} Let $Z_1,\ldots,Z_N$ be i.i.d.  centered sub-exponential
    random variables. Then for every $t\ge 0$, we have
   \beq
   \label{a.21}
   \mathbb{P}\lt\{ N^{-1}|\sum_{i=1}^N Z_i|\ge  t \rt\}\le 2 \exp\lt\{-c \min(Nt^2/K^2, Nt/K)\rt\},
   \eeq
   where $c$ is an absolute constant and 
   \[
   K=\sup_{p\ge 1} p^{-1} (\IE|Z_j|^p)^{1/p}.
   \]
        \end{prop}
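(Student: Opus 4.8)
The plan is to prove this standard Bernstein-type bound from scratch via the exponential-moment (Chernoff) method, so that the statement stands on its own rather than resting solely on \cite{Roman}. First I would reduce to a one-sided estimate: since $-Z_i$ is again centered sub-exponential with the same parameter $K$, a bound on $\mathbb{P}\{N^{-1}\sum_i Z_i\ge t\}$ together with the identical bound for the lower tail yields the two-sided estimate at the cost of the factor $2$ out front. For the one-sided tail I would apply Markov's inequality to $\exp(\lambda\sum_i Z_i)$ with a free parameter $\lambda>0$ and factorize the expectation using independence:
\[
\mathbb{P}\left\{\sum_{i=1}^N Z_i\ge Nt\right\}\le e^{-\lambda Nt}\prod_{i=1}^N \IE\left[e^{\lambda Z_i}\right].
\]
Everything then reduces to controlling a single exponential moment $\IE[e^{\lambda Z}]$.

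The key lemma I would establish is that a centered variable with moment parameter $K=\sup_{p\ge1}p^{-1}(\IE|Z|^p)^{1/p}$ satisfies $\IE[e^{\lambda Z}]\le \exp(C\lambda^2K^2)$ for all $|\lambda|\le c_1/K$, with absolute constants $C,c_1$. By definition $\IE|Z|^p\le (Kp)^p$, and the elementary inequality $p^p\le e^p\,p!$ upgrades this to $\IE|Z|^p\le (eK)^p p!$. Expanding the exponential as a power series and using $\IE Z=0$ to annihilate the linear term,
\[
\IE\left[e^{\lambda Z}\right]=1+\sum_{p\ge2}\frac{\lambda^p\IE[Z^p]}{p!}\le 1+\sum_{p\ge2}(|\lambda|eK)^p,
\]
a geometric series converging for $|\lambda|eK<1$; restricting to $|\lambda|\le (2eK)^{-1}$ bounds its tail by $2e^2K^2\lambda^2$, so $\IE[e^{\lambda Z}]\le 1+2e^2K^2\lambda^2\le \exp(2e^2K^2\lambda^2)$. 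This fixes concrete values $C=2e^2$ and $c_1=(2e)^{-1}$.

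With the lemma in hand the Chernoff exponent becomes $-\lambda Nt+C\lambda^2NK^2$, valid for $0<\lambda\le c_1/K$, and I would optimize in $\lambda$ over two regimes. When $t$ is small enough that the unconstrained minimizer $\lambda^*=t/(2CK^2)$ respects $\lambda^*\le c_1/K$ (equivalently $t\le 2Cc_1K$), substituting $\lambda^*$ gives the sub-Gaussian exponent $\exp(-Nt^2/(4CK^2))$. When $t$ is large, the minimizer is clipped to the boundary $\lambda=c_1/K$, and the quadratic term $Cc_1^2N$ is dominated by half the linear term, leaving the sub-exponential exponent $\exp(-c_1Nt/(2K))$. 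The two cases merge into the single bound $\exp\{-c\min(Nt^2/K^2,\,Nt/K)\}$ for a suitable absolute constant $c$, and doubling for the lower tail supplies the leading $2$.

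The hard part will be the bookkeeping of the absolute constants through the clipped optimization and confirming that the $\min$-form holds uniformly across the crossover at $t\asymp K$; the only genuinely substantive ingredient, the moment-to-MGF passage in the lemma, is routine once the centering kills the linear term and Stirling converts the moment growth $(Kp)^p$ into the factorial growth matching the exponential series. The i.i.d.\ hypothesis contributes no structural difficulty, entering only through the product factorization of the joint exponential moment.
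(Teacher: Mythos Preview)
Your argument is correct and is essentially the standard Chernoff--Bernstein proof found in the cited reference \cite{Roman}; the moment-to-MGF lemma, the clipped optimization in $\lambda$, and the two-regime merge into $\min(Nt^2/K^2,Nt/K)$ are all carried out soundly. Note, however, that the paper itself does \emph{not} prove this proposition: it is stated with the citation \cite{Roman} and used as a black box in the estimate of $\|b_I\|^2$. So there is nothing to compare against on the paper's side---you have supplied a self-contained proof where the authors chose simply to quote the literature.
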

        
        \begin{rmk}
   For $K$ we have the following estimates
   \beq\label{a.22}
K&\le& {2N^2\over |I|^2} \sup_{p\ge 1} p^{-1} (\IE| b(i)^2\chi_{\tau_*} |^p)^{1/p} \\
& \le & {2N^2\over |I|^2}  
\tau_* \sup_{p\ge 1} p^{-1}( \IE\chi_{\tau_*})^{1/p}\nn\\
&\le& {2N^2\over |I|^2}  
\tau_* \sup_{p\ge 1} p^{-1}(1-e^{-\tau_*/2})^{1/p}.\nn
\eeq
The maximum of the right hand side of \eqref{a.22} occurs at
\[
p_*=-\ln (1-e^{-\tau_*/2})
\]
and hence
\beqn
K&\le &{2N^2\over |I|^2} {\tau_* \over p_*} (1-e^{-\tau_*/2})^{1/p_*}.
\eeqn
We are interested in the regime
\[
\tau_*\asymp 2|I|/N \ll 1
\]
which implies 
\[
p_*\asymp -\ln {\tau_*\over 2}\asymp \ln{N\over |I|}
\]
and consequently 
\beq
\label{a.24}
K\le {4N\over e |I|} \lt(\ln {N\over |I|}\rt)^{-1},\quad \sigma=|I|/N\ll 1.
\eeq

On the other hand, upon substituting the asymptotic bound \eqref{a.24} in  the probability bound \[
Q=2 \exp\lt\{-c \min(Nt^2/K^2, Nt/K)\rt\}
\]
of \eqref{a.21}, we have 
\[
 K\le 2 \exp\lt\{-c\min \lt[{e^2t^2 \over 16} \lt(\ln \sigma^{-1}\rt)^2 {|I|^2/N},~{et\over 4}|I|\ln\sigma^{-1}\rt]\rt\}, \quad\sigma \ll 1.\\
 \]
     \end{rmk}
        
      The Bernstein  inequality
     ensures  that
  with  high probability
 \[
\left |{ \|\hat b\|^2\over N}- \IE(b^2(i)\chi_{\tau_*})\right|\le t{|I|^2\over N^2}.
 \]
By (\ref{ZZ}) and \eqref{sigmaB}, we also have
   \beq\label{80}
 \frac{\|\hat b\|^2}{|\hat I|}&\le  & \IE(b(i)^2\chi_{\tau_*}) \frac{N}{ |\hat I|}+t\frac{|I|^2}{|\hat I| N}\\
& \le&
\left(\IE(b(i)^2\chi_{\tau_*}) \frac{N^2}{|I|^2}+t\right) \frac{|I|}{N}\nn\\
&\le& \frac{2+t}{1-\epsilon} \cdot\frac{|I|}{N}\nn 
 \eeq

By Prop.~\ref{bhatb}, we now have
\beqn
\|b_I\|^2&\le &|I| \lt({\|\hat b\|^2\over |\hat I|} +\ep \lt(\tau_*+\delta\rt)\rt)
 \eeqn
with probability at least given
by \eqref{prob},
 which together with \eqref{80} and \eqref{62'} complete the proof of Proposition~\ref{BnormA}. \\
 

{\bf Acknowledgements.} 
We thank anonymous referees for helpful suggestions that lead to improvement of
the original manuscript. 


\end{document}